\documentclass[sigplan,10pt]{acmart}
\settopmatter{printfolios=false,printccs=false,printacmref=false}

\usepackage{array,makecell}
\usepackage{algorithm}
\usepackage{algpseudocode}
\usepackage{balance}
\usepackage{multirow}
\usepackage{diagbox}
\usepackage{amsmath}
\usepackage{graphics}
\usepackage{multicol}
\usepackage{enumitem}
\usepackage{mathtools}
\usepackage{epsfig}
\usepackage{xspace}
\usepackage{extarrows}
\usepackage{bm}
\usepackage{rotating}
\usepackage{listings}
\usepackage{eucal}
\usepackage[symbol]{footmisc}

\makeatletter
\providecommand{\bigsqcap}{
  \mathop{
    \mathpalette\@updown\bigsqcup
  }
}
\newcommand*{\@updown}[2]{
  \rotatebox[origin=c]{180}{$\m@th#1#2$}
}
\makeatother

\newcommand{\amir}[1]{\textcolor{black}{#1}}

\setcopyright{none}

\bibliographystyle{ACM-Reference-Format}

\usepackage{booktabs}

\usepackage{subcaption}

\newtheorem{example}{Example}
\newtheorem{remark}{Remark}
\newtheorem{definition}{Definition}
\newtheorem{lemma}{Lemma}
\newtheorem{proposition}{Proposition}

\newcommand{\sle}{\sqsubseteq}
\newcommand{\sge}{\sqsupseteq}

\newcommand{\lfp}{\mathrm{lfp}}
\newcommand{\gfp}{\mathrm{gfp}}

\newcommand{\modelname}{{PTS}}
\newcommand{\modelnames}{\modelname\ }
\newcommand{\vars}{V}
\newcommand{\loct}{\loc_{\mathrm{t}}}
\newcommand{\locf}{\loc_{\mathrm{f}}}

\newcommand{\rdvar}{r}
\newcommand{\rdvars}{R}
\newcommand{\rdvarjdis}{\mathcal D}
\newcommand{\loc}{\ell}
\newcommand{\locs}{\mathit{L}}
\newcommand{\val}{\mathbf{v}}
\newcommand{\valrd}{{\mathbf{r}}}
\newcommand{\frmloc}{\ell^{\mathrm{src}}}
\newcommand{\toloc}{\ell^{\mathrm{dst}}}

\newcommand{\transcond}{\varphi}
\newcommand{\transprob}{p}
\newcommand{\transupdate}{\mbox{\sl upd}}

\newcommand{\fork}{F}
\newcommand{\sampset}{\mathcal{U}}

\newcommand{\trans}{{\tau}}
\newcommand{\transset}{\mathfrak T}
\newcommand{\pts}{\Pi}
\newcommand{\locin}{\loc_{\mathrm{init}}}
\newcommand{\valin}{\val_{\mathrm{init}}}
\newcommand{\invmap}{I}

\newcommand{\pstate}{\sigma}
\newcommand{\pstates}{\mathcal S}

\newcommand{\trace}{\Gamma}
\newcommand{\prob}{Quantitative Assertion Violation Analysis}
\newcommand{\probabbr}{\textit{QAVA}}
\newcommand{\uprobabbr}{\textit{UQAVA}}
\newcommand{\lprobabbr}{\textit{LQAVA}}

\newcommand{\cons}{\Xi}
\newcommand{\guard}{\Psi}
\newcommand{\condbody}{\Lambda}
\newcommand{\paracond}{{p,\alpha,\beta,\gamma}}
\newcommand{\precond}{\condbody_{\triangleleft}}
\newcommand{\postcond}{\condbody_{\triangleright}}
\newcommand{\relaxpostcond}{\overline{\condbody}^\paracond_{\triangleright}}
\newcommand{\cancond}{Con}

\newcommand{\epf}{\mathrm{vpf}}
\newcommand{\vpf}{\epf}
\newcommand{\pre}{\mathsf{ptf}}
\newcommand{\ptf}{\pre}
\newcommand{\usol}{\theta}
\newcommand{\lsol}{\theta}

\newcommand{\rterm}{\eta}
\newcommand{\eterm}[2]{\exp(\rterm(#1,#2))}

\newcommand{\ltrans}{\beta}
\newcommand{\dtrans}{\Delta}
\newcommand{\yican}[1]{{\color{red} #1}}
\newcommand{\hongfei}[1]{{\color{black} #1}}

\newcommand{\compalg}{\textsf{ExpLinSyn}}
\newcommand{\compalgs}{\compalg\ }
\newcommand{\heurisalg}{\textsc{HoeffdingSynthesis} }

\newcommand{\loweralg}{\textsf{ExpLowSyn}}
\newcommand{\loweralgs}{\loweralg\ }

\lstdefinelanguage{prog}
{
morekeywords={prob, if, then, else, fi, while, do, od, true, false, and, or, skip, switch, assert, exit, input},
sensitive = false
}

\algdef{SE}[DOWHILE]{Do}{DoWhile}{\algorithmicdo}[1]{\algorithmicwhile\ #1}

\newcommand{\jryremove}[1]{{}}

\renewcommand{\paragraph}[1]{\smallskip\noindent\textbf{\emph{#1.}}}

\begin{document}

\title{Quantitative Analysis of Assertion Violations in Probabilistic Programs}
\titlenote{Chinese authors are ordered by contribution, while Austrian authors are ordered alphabetically.}

 \author{Jinyi Wang}
 \authornote{Equal contribution}
 \affiliation{
   \institution{Shanghai Jiao Tong University}
 }
 \email{jinyi.wang@sjtu.edu.cn}
\author{Yican Sun}
\authornotemark[1]
 \affiliation{
   \institution{Peking University}
 }
 \email{sycpku@pku.edu.cn}
\author{Hongfei Fu}
\authornote{Corresponding author}
 \affiliation{
   \institution{Shanghai Jiao Tong University}
 }
 \email{fuhf@cs.sjtu.edu.cn}
\author{Krishnendu Chatterjee}
 \affiliation{
   \institution{IST Austria}
 }
 \email{krishnendu.chatterjee@ist.ac.at}
\author{Amir Kafshdar Goharshady}
 \affiliation{
   \institution{IST Austria}
 }
 \email{goharshady@gmail.com}

\begin{abstract}
We consider the fundamental problem of deriving quantitative bounds on the probability that a given assertion is violated in a probabilistic program. We provide automated algorithms that obtain both lower and upper bounds on the assertion violation probability. The main novelty of our approach is that we prove new and dedicated fixed-point theorems which serve as the theoretical basis of our algorithms and enable us to reason about assertion violation bounds in terms of pre and post fixed-point functions. To synthesize such fixed-points, we devise algorithms that utilize a wide range of mathematical tools, including repulsing ranking supermartingales, Hoeffding's lemma, Minkowski decompositions, Jensen's inequality, and convex optimization.

On the theoretical side, we provide (i)~the first automated algorithm for lower-bounds on assertion violation probabilities, (ii)~the first complete algorithm for upper-bounds of exponential form in affine programs, and (iii)~provably and significantly tighter upper-bounds than the previous approaches.
On the practical side, we show our algorithms can handle a wide variety of programs from the literature and synthesize bounds that are \amir{remarkably tighter than previous results, in some cases by thousands of orders of magnitude.}

\end{abstract}

\begin{CCSXML}
<ccs2012>
   <concept>
       <concept_id>10011007.10011006</concept_id>
       <concept_desc>Software and its engineering~Software notations and tools</concept_desc>
       <concept_significance>500</concept_significance>
       </concept>
   <concept>
       <concept_id>10011007.10011006.10011008</concept_id>
       <concept_desc>Software and its engineering~General programming languages</concept_desc>
       <concept_significance>500</concept_significance>
       </concept>
 </ccs2012>
\end{CCSXML}

\ccsdesc[500]{Software and its engineering~Software notations and tools}
\ccsdesc[500]{Software and its engineering~General programming languages}

\keywords{}

\maketitle

\section{Introduction}

\smallskip
\noindent{\bf \em Probabilistic Programs.}
Extending classical imperative programs with the ability of sampling random values from predetermined probability distributions leads to probabilistic programs~\cite{gordon2014probabilistic}. Probabilistic programs are ubiquitous in virtually all parts of computer science, including machine learning~\cite{claret2013bayesian,scibior2015practical,roy2008stochastic}, robotics~\cite{thrun2002probabilistic}, and network analysis~\cite{netkat,netkat2,netkat3}. As a result, there are many probabilistic programming languages and their formal analysis is a central topic in programming languages and verification~\cite{AgrawalC018,SriramCAV,EGK12,pldi18,OLKMLICS2016,DBLP:journals/pacmpl/HarkKGK20,DBLP:conf/mfcs/KaminskiK15}.

\smallskip
\noindent{\bf \em Qualitative Analysis of Probabilistic Programs.}
The most well-studied problem in the qualitative analysis of probabilistic programs is that of \emph{termination}. Various notions of termination, such as finite-time termination~\cite{BG05,HolgerPOPL,DBLP:journals/toplas/ChatterjeeFNH18} and probability-1 (almost-sure) termination~\cite{SriramCAV,DBLP:conf/mfcs/KaminskiK15}, have been considered, and a wealth of methods have been proposed, e.g.~
patterns~\cite{EGK12}, abstraction~\cite{MM05}, martingale-based \cite{SriramCAV,DBLP:journals/toplas/ChatterjeeFNH18,ChatterjeeFG16,DBLP:conf/aplas/HuangFC18}, proof rules~\cite{OLKMLICS2016,DBLP:journals/jacm/KaminskiKMO18,mciver2017new},
and compositional~\cite{DBLP:journals/pacmpl/Huang0CG19}.

\smallskip
\noindent{\bf \em Quantitative Analysis of Probabilistic Programs.} Generally speaking, quantitative analyses of probabilistic programs are subtler and more complex than qualitative ones. Fundamental problems in this category include expected runtime analysis~\cite{DBLP:journals/jcss/BrazdilKKV15,DBLP:journals/jacm/KaminskiKMO18,ChatterjeeF19}, cost and resource analysis
~\cite{pldi18,ijcai18,DBLP:conf/pldi/Wang0GCQS19},
concentration bounds on the runtime
~\cite{DBLP:journals/toplas/ChatterjeeFNH18,DBLP:conf/tacas/KuraUH19,DBLP:journals/corr/abs-2001-10150},
and the focus of this work, namely assertion violation bounds.

\smallskip
\noindent{\bf \em Assertion Violation Bounds.} Consider a probabilistic program, together with one or more assertions at some points of the program. We are interested in the probability that an assertion is violated when assuming a given initial state for the program. Specifically, we focus on finding upper and lower bounds for the assertion violation probability. This problem was first
considered
in~\cite{SriramCAV} and has since become one of the most fundamental quantitative analyses in probabilistic programming. Previous methods include concentration inequalities~\cite{SriramCAV,DBLP:journals/toplas/ChatterjeeFNH18,ChatterjeeFG16,ChatterjeeNZ2017,DBLP:journals/corr/abs-2001-10150} and automata-based approaches~\cite{DBLP:journals/pacmpl/SmithHA19}. See Section~\ref{sec:rel} for a detailed comparison with previous works.

\smallskip
\noindent{\bf \em Our Contributions.} Our first theoretical contribution is that we provide novel fixed-point theorems which characterize assertion violation bounds in terms of pre and post fixed-point functions (Section~\ref{sec:theory}). We then focus on exponential bounds and use this characterization to obtain synthesis algorithms for both upper and lower bounds.
\hongfei{The reason why we choose exponential bounds is that they best suit the common situation where the assertion violation probability decreases drastically wrt a combination of program variables.} Our algorithmic contributions are as follows:
\begin{itemize}
\item In Section~\ref{sec:heuristicalg}, we provide a sound polynomial-time algorithm for upper-bound synthesis via repulsing supermartingales and Hoeffding's lemma. Our algorithm is guaranteed to obtain a significantly tighter bound than~\cite{ChatterjeeNZ2017}.
\item In Section~\ref{sec:soundcompalg}, we provide a sound and \emph{complete} synthesis algorithm for exponential upper-bounds (with linear combinations of program variables as the exponent) for affine programs. This is achieved by Minkowski decomposition and a dedicated quantifier elimination procedure.
\item In Section~\ref{sec:alglower}, we turn our focus to exponential lower-bounds and, through Jensen's inequality, obtain a sound polynomial-time algorithm for the synthesis of such bounds in almost-surely terminating affine programs.
\end{itemize} \smallskip
On the practical side, we provide extensive experimental results (Section~\ref{sec:evaluation}), showing that, over several classical programs, our bounds are significantly tighter than previous approaches, in some cases by \amir{\emph{thousands of orders of magnitude}}.

\smallskip
\noindent{\bf \em Novelty.} This work is novel in a number of directions: (a)~we provide dedicated fixed-point theorems for assertion violation analysis and use them as the theoretical basis of our algorithms, whereas previous methods relied on either concentration bounds or automata-based approaches; (b)~we provide automated algorithms for inferring both upper and lower bounds, whereas previous methods could only handle upper-bounds; (c)~each of our algorithms has non-trivial novel components such as our dedicated and efficient quantifier elimination method, or the application of Hoeffding's lemma and Jensen's inequality in the context of assertion violation analysis; (d)~our algorithm in Section~\ref{sec:soundcompalg} is \emph{complete} \hongfei{in the sense of finding a near-optimal template given any error bound}, while staying practical. This is the first such completeness result in assertion violation analysis of probabilistic programs.

\smallskip
\noindent{\bf \em Limitations.} Given that the problem we are attempting is undecidable in its most general case, our algorithms have the following limitations:
\hongfei{(a) they only focus on deriving exponential bounds over affine/polynomial programs;} (b)~our lower-bound results assume almost-sure termination, i.e.~they assume that the probabilistic program under analysis terminates with probability $1.$ While this is a routine assumption, our results depend on it; (c)~there is currently a trade-off between completeness and polynomial runtime. Our algorithm in Section~\ref{sec:soundcompalg} provides completeness but is not guaranteed to run in polynomial time. Conversely, our algorithms in Sections~\ref{sec:heuristicalg} and~\ref{sec:alglower} are polynomial-time but not necessarily complete. Nevertheless, they provide tighter bounds than previous methods (Remark~\ref{rem:stocinv}). Moreover, the trade-off is in theory, only. Our experimental results (Section~\ref{sec:evaluation}) show that \amir{our complete algorithm is extremely efficient in practice.}

\section{Preliminaries}
\label{sec:pts}

 Throughout this work, we use a Probabilistic Transition System (PTS) \cite{SriramCAV} to model and analyze each of our programs. A PTS is conceptually similar to a probabilistic control flow graph~\cite{DBLP:journals/toplas/ChatterjeeFNH18,ChatterjeeFG16}. Hence, translating an imperative probabilistic program into an equivalent PTS is a straightforward process.

\paragraph{Valuations} Let $X$ be a finite set of variables. A \emph{valuation} over $X$ is a function $\val: X \rightarrow \mathbb{R}.$ We denote the set of all valuations over $X$ by $\mathbb{R}^X.$ Moreover, we write $\val(x)$ to denote the value assigned by $\val$ to $x \in X$.

\paragraph{Program and Sampling Variables} In the sequel, we consider two disjoint sets of variables: (i) the set $\vars$ of \emph{program variables} whose values are determined by assignment statements in the program, and (ii)~the set $\rdvars$ of \emph{sampling variables} whose values are independently sampled from a predefined probability distribution each time they are accessed. For a sampling variable $r,$ we denote its distribution by $\rdvarjdis(r)$ and its support, i.e.~the set of all values that can be assigned to $r$, by $\sampset(r).$ We also define $\sampset = \prod_{r \in \rdvars} \sampset(r).$

\paragraph{Update Functions} An \emph{update function} $\transupdate$ is a function $\transupdate: \mathbb{R}^\vars \times \mathbb{R}^\rdvars \rightarrow \mathbb{R}^\vars$ that assigns a new valuation to program variables based on the current values of both program and sampling variables. Informally, we use update functions to model the effect of running a basic block of code.

We are now ready to define the notion of a PTS. We extend the definition in~\cite{SriramCAV} with assertion violations.

\paragraph{Probabilistic Transition Systems}
A \emph{Probabilistic Transition System} is a tuple $\pts = (\vars,\rdvars,\rdvarjdis,\locs,\transset, \locin, \valin, \loct, \locf),$ where:
\begin{itemize}
\item
$\vars$ is a finite set of \emph{program variables}.
\item
$\rdvars$ is a finite set of \emph{sampling variables} and $\rdvars \cap \vars = \emptyset$.
\item $\rdvarjdis$ is a function that assigns a probability distribution $\rdvarjdis(r)$ to each sampling variable $r \in \rdvars$.
\item $\locs$ is a finite set of \emph{locations} or \emph{program counters}.
\item $\locin \in \locs$ is the initial location and $\valin \in \mathbb{R}^\vars$ is the initial valuation for program variables.
\item $\locf, \loct\in \locs.$
Intuitively,
$\loct$ represents program termination and $\locf$ corresponds to assertion violation.
\item
$\transset$ is a finite set of \emph{transitions}. Each transition $\tau \in \transset$ is a tuple $\tau = \langle \frmloc, \transcond, \fork_{1}, \fork_{2}, \cdots, \fork_{k}\rangle$ such that
		\begin{itemize}[leftmargin=.5em,label={\tiny$\blacksquare$}]
			\item $\frmloc \in \locs \setminus \{\loct, \locf\}$ is the \emph{source} location;
                \item $\transcond$ is a logical formula over valuations on $\vars$ which serves as the transition's \emph{guard} or \emph{condition};
			\item Each $\fork_{j}$ is called a \emph{fork} and is of the form $\fork_j=\langle \toloc_{j}, \transprob_{j}, \transupdate_{j}\rangle$ in which
			     $\toloc_{j}\in \locs$ is the \emph{destination} location,
			     $\transprob_{j} \in (0, 1]$ is the probability assigned to this fork, and
			     $\transupdate_{j}$ is an update function.
			     It is guaranteed that $\sum_{j=1}^{k}\transprob_j = 1$.
		\end{itemize}
\end{itemize}
A \emph{state} of $\pts$ is a pair $\pstate = (\loc, \val) \in \locs \times \mathbb{R}^\vars$ that consists of a location and a valuation. In the sequel, we assume that we have fixed a PTS $\pts = (\vars,\rdvars,\rdvarjdis,\locs,\transset, \locin, \valin, \loct, \locf)$.

\paragraph{Intuitive Description}
The program starts at $(\locin, \valin).$
A transition $\langle \frmloc, \transcond, \fork_{1}, \fork_{2}, \cdots, \fork_{k}\rangle$ with $\fork_{j}=\langle \toloc_{j}, \transprob_{j}, \transupdate_{j}\rangle$
states that if the current location is $\frmloc$ and the current valuation $\val$ of program variables satisfies the condition $\transcond$, then
each fork $\fork_{j}$ is chosen and applied with probability $\transprob_{j}.$ When we apply $\fork_j,$ the next location is $\toloc_{j}$ and the next valuation is $\transupdate_{j}(\val,\valrd)$,
in which $\valrd \in \mathbb R^\rdvars$ is obtained by independently sampling a value for each $r \in \rdvars$ according to $\rdvarjdis(r).$

\begin{example}
	Figure \ref{fig:pts21} shows a PTS representation of a program. Oval nodes represent locations and square nodes model the forking behavior of transitions. An edge entering a square node is labeled with the condition of its respective transition. The numbers in green denote the probability of each fork, while blue expressions show the update functions.
\end{example}

\paragraph{Additional Assumption} To disallow non-determinism and undefined behavior, we require that:
(i) any two transitions $\tau \neq \tau'$ with the same source location be mutually exclusive, i.e.~ if their guards are $\transcond$ and $\transcond'$, then $\transcond \land \transcond'$ is unsatisfiable;
(ii) the set of transitions
be \emph{complete}, i.e.
for every location $\loc$ other than $\loct,\locf$ and every valuation $\val \in \mathbb R^\vars$, there must exist a transition out of $\loc$
whose guard condition is satisfied by $\val$.

\paragraph{Semantics}
The semantics of $\pts$ is formalized by its corresponding \emph{PTS process} $\Gamma.$
$\Gamma$ is a stochastic process $\{\hat{\sigma}_n\}_{n\ge 0}$ on states.
Given the current state $\hat{\sigma}_n=(\hat{\loc}_n,\hat{\val}_n)$, if $\hat{\loc}_n \not\in \{\locf, \loct\}$, the transition is specified as follows:
(1)~Take the unique transition $\langle \hat{\loc}_n,\transcond,\fork_{1},\dots, \fork_{k}\rangle$ with $\hat{\val}_n\models\transcond$.
(2)~Choose the fork $\fork_{j} = \langle \toloc_j, p_j, \transupdate_{j} \rangle$ with probability $p_j$.
(3)~Obtain a valuation $\valrd$ over our sampling variables $\rdvars$ by sampling each $r \in \rdvars$ independently according to $\rdvarjdis(r).$
(4) Apply the chosen fork: $(\hat{\loc}_{n+1}, \hat{\val}_{n+1})=(\toloc_j, \transupdate_j(\hat{\val}_n,\valrd))$.
If either $\loct$ or $\locf$ is reached, $(\hat{\loc}_{n+1}, \hat{\val}_{n+1}) = (\hat{\loc}_n, \hat{\val}_n)$. See Appendix \ref{appendix:preliminary} for details.

\begin{figure}[htbp]
		\lstset{language=prog}
	\lstset{tabsize=3}
	\newsavebox{\progrunningexampleb}
	\begin{lstlisting}[mathescape]
			$x$:=$40$; $y$:=0;
			while($x\leq 99\land y\leq 99$):
				if prob($0.5$):
					$\langle x,y\rangle:=\langle x+1,y+2\rangle$
				else:
					$\langle x,y\rangle:=\langle x+1,y\rangle$
			assert($x \ge 100$)
	\end{lstlisting}
	\includegraphics[width=0.4\textwidth]{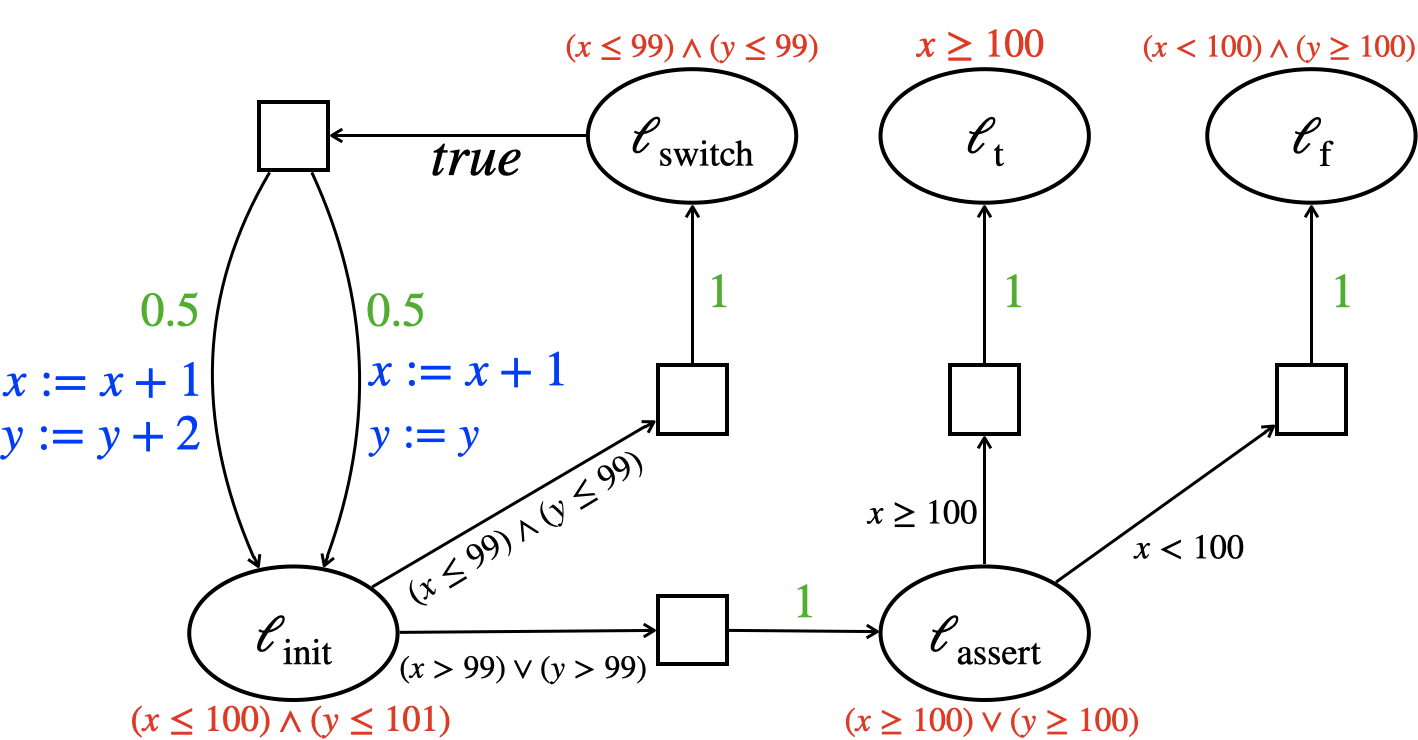}
	\caption{A Probabilistic Program (top) and its PTS (bottom)}
	\label{fig:pts21}
\end{figure}

\paragraph{Paths and Reachability} Let $\pts$ be a PTS. A \emph{path} in $\pts$ is an infinite sequence of states $\pstate_0, \pstate_1,\dots$ such that:
\begin{itemize}
\item $\pstate_0=(\locin, \valin)$, and
\item for each $n\ge 0$, the states $\pstate_n=(\loc_n, \val_n)$ and $\pstate_{n+1}=(\loc_{n+1}, \val_{n+1})$ satisfy one of the following cases:
\begin{itemize}
\item ${\loc}_n\not\in \{\loct, \locf\}$ and there is a transition $\langle {\loc}_n,\transcond,\fork_{1},\\\dots, \fork_{k}\rangle$ with a fork $F_j$
such that $\val\models\transcond$, $\fork_{j}=\langle \loc_{n+1}, \transprob_{j}, \transupdate_{j}\rangle$, and
${\val}_{n+1}=\transupdate_{j}(\val_{n}, \valrd)$ for some $\valrd\in \sampset$;
\item $\loc_n\in \{\loct, \locf\}$ and $(\loc_{n+1}, \val_{n+1})=(\loc_n, \val_n)$.
\end{itemize}
\end{itemize}
A state $\pstate$ is \emph{reachable} if there is a path $\pstate_0, \pstate_1,\dots$ such that $\pstate_n=\pstate$ for some $n\ge0$.
$\pstates$ denotes the set of reachable states.

\paragraph{Invariants}
For a PTS $\pts,$ an \emph{invariant} is a function $\invmap: \locs \rightarrow 2^{\mathbb{R}^\vars}$ that assigns to each location $\loc \in \locs,$ a subset $\invmap(\loc)$ of valuations over program variables such that $\pstates\subseteq \{(\loc, \val)\mid \val \in \invmap(\loc)\}$. An invariant over-approximates reachable states.

\begin{example}
	In Figure~\ref{fig:pts21}, every location has an associated formula in red, representing an invariant at that location.
\end{example}

We now formalize the central problem studied in this work:

\paragraph{\prob{}}
Given a PTS $\pts$ and an invariant $\invmap$, the {\prob{}} (\probabbr) problem is to infer upper and lower bounds for the probability that the PTS process $\trace,$ starting at initial state $(\locin, \valin),$ ends with assertion violation. Formally, the goal is to synthesize upper and lower bounds for
$$\Pr[\exists n.~\hat{\loc}_n = \locf\ |\ \hat{\sigma}_0 = (\locin,\valin)].$$
We abbreviate the upper-bound analysis problem as \uprobabbr, and its lower-bound counterpart as \lprobabbr.

\section{Illustration and Motivating Examples}

In this section, we illustrate our approach over three examples from different application domains. We will provide a more formal treatment in Section~\ref{sec:theory}.
\begin{itemize}
\item In Section~\ref{sec:rabbit}, we show the basic ideas using an example program, taken from the literature on stochastic invariants, that models a tortoise-hare race.
\item In Section~\ref{sec:rdwalk}, we show how our approach can be used to solve one of the most classical problems in probabilistic termination analysis, namely concentration bounds.
\item While the two examples above rely on upper-bounds, in Section~\ref{sec:hardware}, we illustrate our approach for lower-bounds and apply it to quantify the probability of error in computations performed on unreliable hardware.
\end{itemize}

\subsection{Tortoise-Hare Race}
\label{sec:rabbit}

	Consider the program of Figure~\ref{fig:pts21}, which is often encountered in works on stochastic invariants (e.g.~\cite{ChatterjeeNZ2017}). This program models a classical tortoise-hare race. The variable $x$ represents our tortoise's position, while $y$ is the position of the hare. Initially, the tortoise is given a $40$-unit edge.
	In each iteration, the tortoise moves one step forward, and the hare either takes a two-unit jump or rests. The finish line is at position $100$. So, the \textbf{assert} corresponds to a win for the tortoise.
	We aim to obtain an upper-bound for the hare's winning probability, i.e.~the probability of assertion violation.

We establish such an upper-bound by relying on fixed-point theorems. The fundamental idea is to synthesize a function $f(x^*, y^*)$ that serves as an overapproximation of the probability that the assertion is eventually violated, assuming that we start running the program from the entry point of our while loop with variable valuations $x=x^*$ and $y=y^*.$ We can set up the following constraints over $f(x^*, y^*)$:
\begin{enumerate}
	\item[(I)] $\forall x^*, y^*. \quad f(x^*, y^*)\ge 0$;
	\item[(II)] $\forall x^*,y^*. \quad x^*\le 99~\land~ y^*\ge 100 \Rightarrow  f(x^*,y^*)\ge 1$;
	\item[(III)] $\forall x^*,y^*. \quad x^*\le 99~\land~ y^*\le 99 \Rightarrow f(x^*,y^*)\ge 0.5 \cdot f(x^*+1,y^*+2)+ 0.5 \cdot f(x^*+1,y^*)$.
\end{enumerate} \smallskip
Informally, constraint (I) is natural since probability values are always non-negative.
Constraint (II) requires that when the program terminates with an assertion violation, the value of $f$ is at least $1$.
Finally, constraint (III) is applied when another iteration of the loop is about to run and enforces that our approximation of the probability of violating the assertion at this point is no less than its expectation after the execution of one iteration. More formally, this condition is
derived from the fixed-point theorem and states that $f(x^*,y^*)$ is a prefixed-point (See Theorem~\ref{thm:epf}).

By Tarski's fixed-point theorem (Theorem~\ref{thm:tarski}), any function $f(x^*, y^*)$ that satisfies the constraints (I)--(III) serves as an upper-bound for the assertion violation probability given any initial valuation $x^*, y^*$. Specifically, we focus on synthesizing such a function $f$ so that we can use $f(40, 0)$ as our upper-bound on the probability of assertion violation.

In this work, we focus on assertions whose probability of violation decreases exponentially with respect to a combination of program variables.
We follow a template-based method (see e.g.~\cite{SriramCAV,DBLP:journals/toplas/ChatterjeeFNH18,ChatterjeeFG16}) and set up an exponential template $f(x^*,y^*):=\exp(a\cdot x^* + b\cdot y^* + c)$.
Our goal is to synthesize values for the variables $a, b, c$ such that $f$ satisfies constraints (I)--(III) above, while simultaneously minimizing $f(40, 0)$.
This template specifies that the assertion violation probability decreases exponentially with respect to the linear expression $a\cdot x^* + b\cdot y^* + c$.
Thus, it suffices to solve the following optimization problem with unknown variables $a, b, c$:
$$\textstyle \textbf{Minimize}~~~~~ \exp(40\cdot a + 0\cdot b + c)$$
$$
\textstyle \textbf{Subject to}~~~~~\text{constraints (I)--(III)}
$$

In general, solving such optimization problems is hard, since the constraints are universally quantified and involve exponential terms.
Surprisingly, in Section~\ref{sec:soundcompalg}, we show that a large class of optimization problems of this kind, including the problem above, can be exactly solved through convex programming.
By solving this optimization problem, we derive $a\approx-1.19,b\approx 4.26,c\approx 31.79$, and the optimal value is $\approx \exp(-15.697)\approx 1.524\cdot 10^{-7}$. Hence, the probability that the assertion is violated is at most $1.524\cdot 10^{-7}.$

\subsection{Concentration Bounds}
\label{sec:rdwalk}

Concentration analysis of termination time is a fundamental problem in probabilistic programming~\cite{DBLP:journals/toplas/ChatterjeeFNH18} whose goal is to derive rapidly-decreasing upper-bounds in terms of $n$ for the probability that a probabilistic program does not terminate (continues running) after $n$ steps.
To model
    this problem in our framework,
we introduce a new program variable $t$ that keeps track of the running time and is incremented in every iteration/step of the program.
We also add the assertion $\textbf{assert}(t<n)$ at the endpoint of the program. Here, $n$ is either a user-specified natural number or a fresh variable.

\begin{figure}
	\lstset{language=prog}
	\lstset{tabsize=3}
	\begin{lstlisting}[mathescape]
	$x$:=$0$; $t$:=$0$;
	while($x\leq 99$):
		switch:
			prob($0.75$):$\langle x,t\rangle$:=$\langle x+1,t+1\rangle$
			prob($0.25$):$\langle x,t\rangle$:=$\langle x-1,t+1\rangle$
      assert($t\le 500$)
	\end{lstlisting}
		\includegraphics[width=0.4\textwidth]{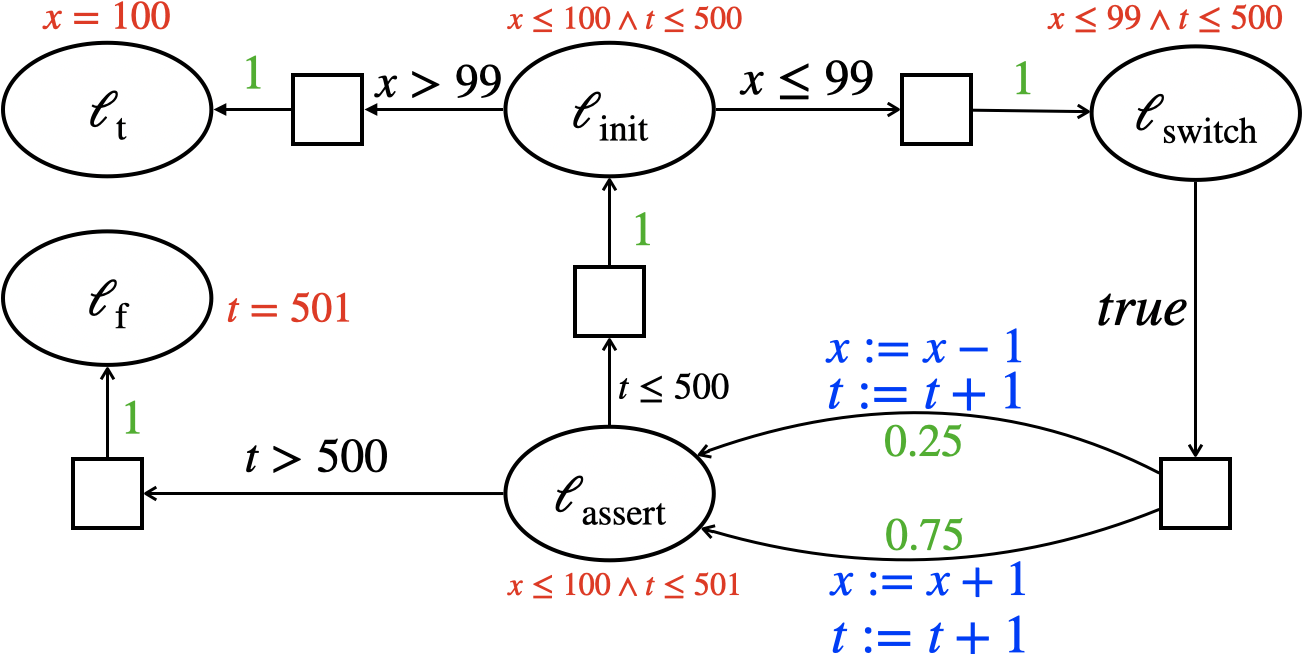}
		\caption{An Asymmetric Random Walk}
		\label{fig:pts22}
	\end{figure}

	As an example, consider the program in Figure~\ref{fig:pts22} which models an asymmetric random walk.
	In this program, the variable $x$ represents our current position in the random walk. The \textbf{switch} statement within the loop body specifies that at each step, we move forwards with probability $\frac34$ and backwards with probability $\frac14.$
The variable $t$ records the number of loop iterations. At the end of the loop body, the \textbf{assert} statement checks whether the program has run for at most $500$ iterations.
Our goal is to find an upper-bound for the probability of violation of this assertion, i.e.~the probability that our asymmetric random walk does not terminate (does not reach $x > 99$) within $500$ steps.

Again, we synthesize a function $f(x^*,t^*)$ that over-estimates the assertion-violation probability assuming the program is started at the while loop with valuation $x=x^*, t=t^*$. Such a function should satisfy the following constraints:
\begin{enumerate}
	\item[(I)] $\forall x^*,t^*. \quad f(x^*,t^*)\ge 0$;
	\item[(II)] $\forall x^*,t^*. \quad x^*\le 100\land t^*\ge 501 \Rightarrow f(x^*,t^*)\ge 1$;

	\item[(III)] $\forall x^*,t^*. \quad x^*\le 99\land t^*\le 500 \Rightarrow 0.25\cdot f(x^*-1,t^*+1) + 0.75\cdot f(x^*+1,t^*+1) \le f(x^*,t^*)$.
\end{enumerate}

The intuition behind these constraints is similar to the previous section. By Tarski's fixed-point theorem (Theorem~\ref{thm:tarski}), any function $f$ satisfying these constraints is an upper-bound on the assertion-violation probability. Given the initial values $x=t=0$, by letting $f(x^*,t^*):=\exp(a\cdot x^* + b\cdot t^* + c)$ and solving for $a, b, c$ (See Section~\ref{sec:soundcompalg} for details), we obtain $a\approx -0.351, b\approx 0.124, c\approx -27.181$. The assertion violation probability is at most $f(0, 0) \approx \exp(-27.181)\approx 1.569\cdot 10^{-12}$.

\subsection{Computing on Unreliable Hardware}
\label{sec:hardware}

Consider an unreliable hardware that might malfunction with a tiny probability at each execution step and cause the program to collapse or compute erroneously.
Reliability analysis of programs run over unreliable hardware is an active area of research (see e.g.~\cite{DBLP:conf/oopsla/CarbinMR13,DBLP:journals/pacmpl/SmithHA19}).
We now show how the reliability analysis can be reduced to the derivation of \emph{lower-bounds} for the probability of assertion violation, and provide an outline of our approach for deriving such lower-bounds.

Take the random walk example from the previous section, but assume that it is run on an unreliable hardware and any iteration may fail with probability $p=10^{-7}$.
Our goal is to derive a lower-bound for the probability that the random walk executes correctly until termination.
By incorporating hardware failure into the random walk,
we get the program in Figure~\ref{fig:pts23}. The only difference with the original random walk is that in each loop iteration, the hardware fails with probability $p.$ This is modeled by the \textbf{exit} statement.
We delibrately have the assertion $\textbf{false}$ at the end of the program so that the assertion fails iff there is \textit{no hardware failure} during the whole execution.
Thus, we are aiming to synthesize a lower-bound for the probability of assertion violation.
	\begin{figure}
	\lstset{language=prog}
	\lstset{tabsize=3}
	\begin{lstlisting}[mathescape]
	$x$:=$1$;
	while($x\leq 99$):
		switch:
			prob($p$):exit
			prob($0.75\cdot (1-p)$):$x$:=$x+1$
			prob($0.25\cdot (1-p)$):$x$:=$x-1$
	assert($\textbf{false}$)
	\end{lstlisting}
		\includegraphics[width=0.4\textwidth]{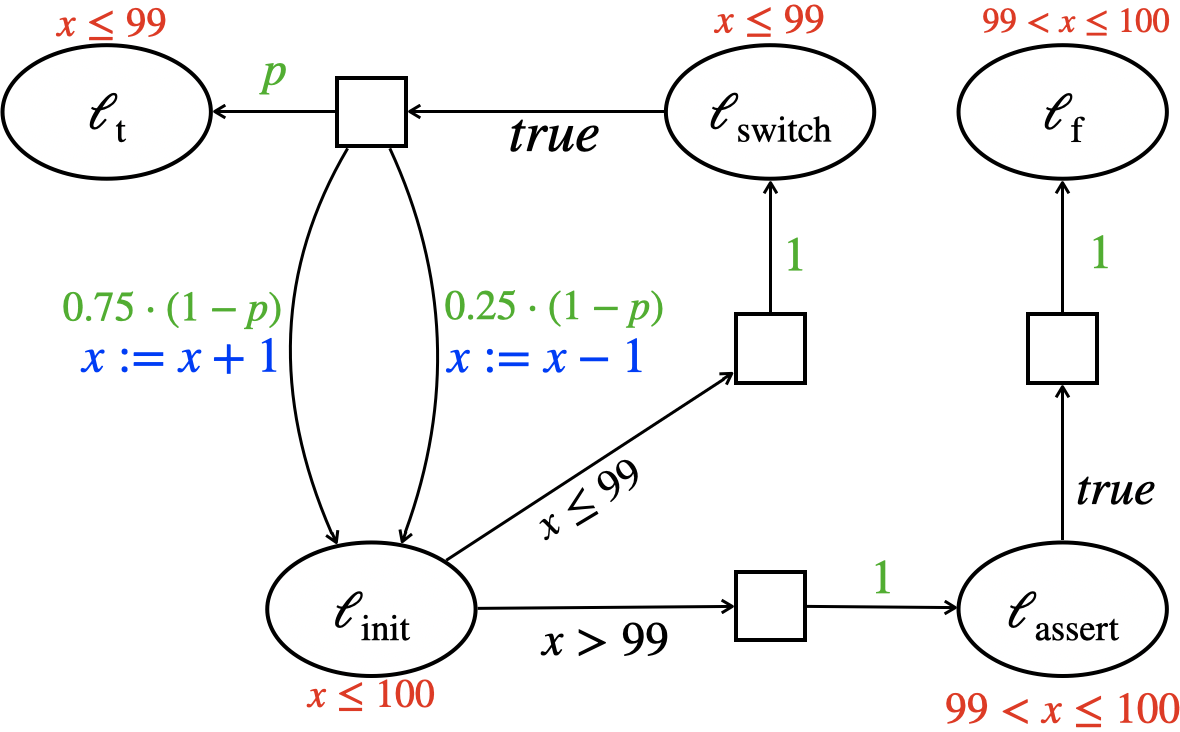}
		\caption{Random Walk Run with Unreliable Hardware}
		\label{fig:pts23}
	\end{figure}

Since we need to infer a lower-bound instead of an upper-bound, we will synthesize a function $f(x^*)$ at the entry point of the loop that always under-estimates the probability of assertion violation.
We establish a new fixed point theorem (Theorem \ref{thm:uniq}) by which the function $f$ should satisfy:
\begin{enumerate}
	\item[(I)] $\forall x^*. \quad x^*\le 100 \Rightarrow 0\le f(x^*)\le 1$;

	\item[(II)] $\forall x^*. \quad x^*\le 99 \Rightarrow f(x^*)\leq 0.75 \cdot (1-p)\cdot  f(x^*+1) + 0.25\cdot (1-p) \cdot f(x^*-1) + p\cdot 0$
\end{enumerate}

These constraints are, in a sense, duals of the constraints used for upper-bounds. The differences are that, in constraint (I), we restrict the value to be \emph{at most} $1$ and that, in constraint (II), we have a post fixed-point rather than a pre fixed-point, i.e.~$f(x^*)$ is \emph{less than or equal to} its expectation after the execution of one iteration.

As in the previous cases, we set up an exponential template $f(x^*):=\exp(a\cdot x^*  + b)$. Note that the initial value of $x$ is $1.$ Therefore, in order to obtain as tight a lower-bound as possible, we need to maximize $f(1)$.
So, we have to solve the following optimization problem with unknown variables $a, b$:
$$\textstyle \textbf{Maximize}~~~~ \exp(a + b)$$
$$
\textstyle \textbf{Subject to}~~~~~\text{constraints (I)--(II)}
$$
As we will see in Section~\ref{sec:alglower}, (I) can be transformed into an equivalent collection of linear constraints over $a, b$ using Farkas' Lemma.
In (II), we divide both sides by $\exp(a\cdot x^* + b)$: $$\textstyle 0.75\cdot (1-p)\cdot \exp(a) + 0.25\cdot (1-p)\cdot \exp(-a) \ge 1.$$
This is not a convex inequality.
Hence, we cannot apply convex programming to solve this optimization problem.
Instead, we use Jensen's inequality (Theorem~\ref{thm:jensen}) to relax these exponential constraints to linear ones. Concretely, (II) is satisfied if:
	$0.75\cdot a + 0.25\cdot(-a) \ge -\ln\left(1-p\right)$.
After these transformations, since maximizing $\exp(a+b)$ is equivalent to maximizing $a+b$,
the problem is relaxed and reduced to a linear programming instance, which can be solved efficiently.
In this case, we obtain $a\approx 2\cdot 10^{-7},b\approx -2\cdot 10^{-5}$. So, our lower-bound is $\approx \exp(-1.98\cdot 10^{-5}) \approx 0.99998$.

\iffalse
\subsubsection{Detailed computation}

Constraint (II) is included by (I).

(I) shows that $\forall x^*\le 100, a\cdot x^* + b\le 0$, then we have $a\le 0$ and $a\cdot 100+b\le 0$. Thus $b=-100a$.

It turns to maximize $a+b=a-100a=-99a$, so we minimize $a$.

By \ref{ineq:hardwarecons3}, we solve that $a\ge 2\ln (1/(1-p))$, thus we have the result below

\yican{subsubsection end here}
\fi

\section{A Fixed-Point Approach to QAVA}
\label{sec:theory}

In this section, we show how fixed-point theorems can be applied to the QAVA problem. Our results provide a theoretical basis for obtaining upper and lower bounds on the assertion violation probability.

\subsection{Lattices and Fixed-point Theorems}

\paragraph{Suprema and Infima} Given a partial order $\sle$ over a set $K,$ and a subset $K' \subseteq K,$ an \emph{upper-bound} of $K'$ is an element $u \in K$ that is larger than every element of $K',$ i.e.~$\forall k' \in K'.~k' \sle u.$ Similarly, a \emph{lower-bound} for $K'$ is an element $l$ that is smaller than every element of $K',$ i.e.~$\forall k' \in K'.~l \sle k'.$ The \emph{suprema} of $K',$ denoted by $\bigsqcup K'$, is an element $u^* \in K$ such that $u^*$ is an upper-bound of $K'$ and for every upper-bound $u$ of $K',$ we have $u^* \sle u.$ Similarly, the \emph{infima} $\bigsqcap K'$ is a lower-bound $l^*$ of $K'$ such that for every lower-bound $l$ of $K',$ we have $l \sle l^*.$ We also define $\bot\!:=\!\bigsqcap K$ and $\top\!:=\!\bigsqcup K.$ In general, suprema and infima may not exist.

\paragraph{Complete Lattice} A partial order $(K, \sle)$ is called a \emph{complete lattice} if every subset $K'\subseteq K$ has an suprema and a infima.

\paragraph{Monotone Functions} Given a partial order $(K, \sle)$, a function $f: K \to K$ is called \textit{monotone} if for every $k_1 \sle k_2$ in $K$, we have $f(k_1) \sle f(k_2).$

\paragraph{Continuity} Given a complete lattice $(K, \sle),$ a function $f: K \to K$ is called \emph{continuous} if for every increasing chain $k_0 \sle k_1 \sle \ldots$ in $K,$ we have $f(\bigsqcup \{k_n\}_{n=0}^\infty) = \bigsqcup \{f(k_n)\}_{n=0}^\infty,$ and \emph{cocontinuous} if for every decreasing chain $k_0 \sge k_1 \sge \ldots$ of elements of $K,$ we have $f(\bigsqcap \{k_n\}_{n=0}^\infty) = \bigsqcap \{f(k_n)\}_{n=0}^\infty.$

\paragraph{Fixed-Points} Given a complete lattice $(K, \sle)$ and a function $f: K \to K,$ an element $x \in K$ is called a \emph{fixed-point} if $f(x) = x.$ Moreover, $x$ is a \emph{pre fixed-point} if $f(x) \sle x$ and a \emph{post fixed-point} if $f(x) \sge x.$ The \emph{least fixed-point} of $f$, denoted by $\lfp f,$ is a fixed-point that is smaller than every fixed-point under $\sle.$ Analogously, the \emph{greatest fixed-point} of $f,$ $\gfp f,$ is a fixed-point that is larger than all fixed-points.

\begin{theorem}[\textit{Knaster{-}Tarski}~\cite{KnasterTarski}]
\label{thm:tarski}

 Let $(K, \sle)$ be a complete lattice and $f:K \to K$ a monotone function. Then, both $\lfp\ f$ and $\gfp\ f$ exist. Moreover, $\lfp\ f$ is the infima of all pre fixed-points, and $\gfp\ f$ is the suprema of all post fixed-points.
\begin{align}
	\textstyle \lfp\ f & \textstyle = \mathop{\bigsqcap} \left\{x\ |\ f(x)\sle x\right\}\\
	\textstyle \gfp\ f & \textstyle = \mathop{\bigsqcup} \left\{x\ |\ x\sle f(x)\right\} \label{formula:upper}
\end{align}
\end{theorem}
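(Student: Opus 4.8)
The plan is to prove the statement for the least fixed-point directly and then obtain the greatest-fixed-point half by a dual argument. Write $P := \{x \mid f(x) \sle x\}$ for the set of pre fixed-points. Since $(K,\sle)$ is a complete lattice, the infimum $a := \bigsqcap P$ exists, so the whole task reduces to showing that this $a$ is a fixed-point and is below every other fixed-point; this simultaneously establishes existence of $\lfp\ f$ and the displayed characterization of $\lfp\ f$ as $\bigsqcap\{x \mid f(x)\sle x\}$. Observe first that $P$ is nonempty, since $f(\top)\sle\top$ trivially gives $\top\in P$.

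First I would show that $a$ is itself a pre fixed-point. For every $x \in P$ we have $a \sle x$, because $a$ is a lower-bound of $P$; applying monotonicity of $f$ together with $x \in P$ gives $f(a) \sle f(x) \sle x$. Hence $f(a)$ is a lower-bound of $P$, and since $a$ is the \emph{greatest} lower-bound, we conclude $f(a) \sle a$, i.e.\ $a \in P$.

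The crucial step, and the only one requiring a small trick, is upgrading this one-sided inequality to an equality. Applying the monotone map $f$ to $f(a)\sle a$ yields $f(f(a)) \sle f(a)$, which says precisely that $f(a)$ is \emph{also} a pre fixed-point, so $f(a) \in P$. Because $a$ is a lower-bound of $P$, this forces $a \sle f(a)$, and combining the two directions gives $f(a) = a$. Thus $a$ is a genuine fixed-point. Finally, any fixed-point $y$ satisfies $f(y) = y \sle y$ and hence lies in $P$, so $a \sle y$; therefore $a$ is below every fixed-point and is the least fixed-point, completing this half.

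For the greatest fixed-point I would run the order-dual of the above: set $Q := \{x \mid x \sle f(x)\}$, let $b := \bigsqcup Q$ (which exists by completeness), and systematically replace each use of the infimum/lower-bound property by the corresponding supremum/upper-bound property and each inequality by its reverse. Monotonicity then yields $b \sle f(b)$, re-application of $f$ gives $f(b)\in Q$ and hence $f(b)\sle b$, so $f(b)=b$; and every fixed-point $y$ lies in $Q$, giving $b \sge y$. This proves $b=\gfp\ f$ and formula~\eqref{formula:upper}. The main obstacle is the squeeze in the third paragraph: one must resist attempting $f(a)=a$ in a single stroke and instead notice that the one-sided bound $f(a)\sle a$, fed through $f$ once more, regenerates the membership $f(a)\in P$ that closes the loop.
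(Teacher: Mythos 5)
Your proof is correct: it is the standard argument for the Knaster--Tarski theorem, and every step (nonemptiness of $P$ via $\top$, the descent $f(a)\sle f(x)\sle x$ giving $f(a)\sle a$, the re-application of $f$ to get $f(a)\in P$ and hence $a\sle f(a)$, and the order-dual for $\gfp\ f$) is sound. Note that the paper does not prove this statement at all --- it is a classical result cited from the literature --- so there is no in-paper proof to compare against; your argument is the canonical one found in standard references.
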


\noindent The next theorem provides a construction for the fixed-points.
\begin{theorem}[\cite{Sangiorgibook}]
\label{thm:kleene}

Let $(K, \sle)$ be a complete lattice and $f: K \to K$ be an continuous function. Then, we have
	$$\textstyle \lfp\ f = {\textstyle \mathop{\bigsqcup}_{i \ge 0}} \left\{f^{(i)}(\bot)\right\}.$$
Analogously, if $f$ is cocontinuous, we have
	$$\textstyle \gfp\ f = {\textstyle \mathop{\bigsqcap}_{i \ge 0}} \left\{f^{(i)}(\top)\right\}.$$

\end{theorem}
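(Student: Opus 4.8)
The plan is to prove the claim for $\lfp\ f$ first and then obtain the statement for $\gfp\ f$ by a dual argument that swaps $\sle$ for $\sge$, $\bot$ for $\top$, suprema for infima, and continuity for cocontinuity. Before constructing anything, I would record that continuity implies monotonicity: given $k_1 \sle k_2$, the increasing chain $k_1 \sle k_2 \sle k_2 \sle \cdots$ has suprema $k_2$, so continuity yields $f(k_2) = \bigsqcup\{f(k_1), f(k_2), f(k_2),\dots\}$, whence $f(k_1) \sle f(k_2)$. This lets me invoke Theorem~\ref{thm:tarski} to guarantee that $\lfp\ f$ exists, and it is what makes the iterates below well-behaved.

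First I would build the ascending chain $\bot = f^{(0)}(\bot) \sle f^{(1)}(\bot) \sle \cdots$. Since $\bot$ is the least element we have $\bot \sle f(\bot)$, and applying monotonicity repeatedly gives $f^{(i)}(\bot) \sle f^{(i+1)}(\bot)$ for every $i$ by a routine induction. Let $x := \bigsqcup_{i\ge 0}\{f^{(i)}(\bot)\}$, which exists because $K$ is a complete lattice. The key step is to verify that $x$ is a fixed-point. Applying continuity to this chain gives $f(x) = \bigsqcup_{i \ge 0}\{f(f^{(i)}(\bot))\} = \bigsqcup_{i \ge 1}\{f^{(i)}(\bot)\}$. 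Since $f^{(0)}(\bot) = \bot$ lies below every element of the chain, adjoining it to the index set does not change the suprema, so $\bigsqcup_{i \ge 1}\{f^{(i)}(\bot)\} = \bigsqcup_{i \ge 0}\{f^{(i)}(\bot)\} = x$, and hence $f(x) = x$.

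Next I would establish that $x$ is the \emph{least} fixed-point. Let $y$ be any fixed-point, i.e.~$f(y) = y$. I claim $f^{(i)}(\bot) \sle y$ for all $i$, by induction: the base case $f^{(0)}(\bot) = \bot \sle y$ is immediate, and if $f^{(i)}(\bot)\sle y$ then monotonicity gives $f^{(i+1)}(\bot) = f(f^{(i)}(\bot)) \sle f(y) = y$. Thus $y$ is an upper-bound of the chain, so by the defining property of the suprema we get $x \sle y$. As $y$ was arbitrary, $x = \lfp\ f$, which proves the first identity.

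Finally, the $\gfp$ case follows by dualizing every step: cocontinuity likewise implies monotonicity, the descending chain $\top \sge f(\top) \sge \cdots$ has infima $x' := \bigsqcap_{i\ge 0}\{f^{(i)}(\top)\}$, cocontinuity shows $f(x') = x'$, and an induction symmetric to the one above shows $x'$ dominates every fixed-point, so $x' = \gfp\ f$. The main obstacle is the fixed-point verification in the second paragraph: it is precisely here that continuity (rather than mere monotonicity) is indispensable, and the one genuine subtlety is justifying that dropping the bottom element $f^{(0)}(\bot)$ from the index set leaves the suprema unchanged—a fact that rests on $\bot$ being a lower-bound of the entire chain.
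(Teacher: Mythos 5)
Your proof is correct and is the standard Kleene-iteration argument; the paper itself offers no proof of this statement, citing it as a classical result from the literature, so there is nothing to compare against. All the key steps are present: continuity implies monotonicity, the chain $\{f^{(i)}(\bot)\}$ is increasing, continuity transfers the suprema through $f$ (with the correct justification that re-adjoining $\bot$ leaves the suprema unchanged), and the minimality induction and the dual argument for $\gfp$ are both sound.
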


\subsection{Fixed-point Theorems in QAVA}

\paragraph{Violation Probability Function} We start with the \emph{violation probability function} $\vpf$. Intuitively, $\vpf(\loc, \val)$ is the probability that the PTS process $\Gamma$ starting from the state $(\loc, \val)$ ends with an assertion violation. Formally,
$$\textstyle \epf(\loc, \val):=\Pr\left[\exists n.~ \hat{\loc}_n = \locf\ |\ \hat{\sigma}_0 = (\loc,\val)\right].$$

\paragraph{Sketch of the Method} Our goal is to derive upper and lower bounds on $\epf(\locin, \valin)$. We define a set $\mathcal K^M$ of \emph{state functions} equipped with a natural partial order, as well as a \emph{probability transformer function} $\ptf: \mathcal K^M \to \mathcal K^M.$ We then apply Theorem \ref{thm:kleene}
to show that $\vpf$ is the least fixed-point of $\ptf.$

\paragraph{State Functions} Let $M \in [1, \infty).$ We define $\mathcal K^M$ as the set of all functions $f: \pstates \to [0, M]$ that map each reachable state to a real number between $0$ and $M$ and satisfy the following:
\begin{itemize}
	\item $\forall \val \in \mathbb R^\vars.~ f(\loct, \val) = 0,$ and
	\item $\forall \val \in \mathbb R^\vars.~ f(\locf, \val) = 1.$
\end{itemize}
 The partial order $\sle$ on $\mathcal K^M$ is defined standardly, i.e.~for every $f,f'\in \mathcal K^{M}$, we have $f\sle f'$ iff $\forall \pstate \in \pstates.~ f(\pstate) \leq f'(\pstate)$.
 It is straightforward to verify that $(\mathcal K^M, \sle)$ is a complete lattice. Furthermore, its smallest (resp.~greatest) element $\bot^{M}$ (resp.~$\top^M$) is the function whose value is $0$ (resp.~$M$) at all states $(\loc,\val)\in \pstates$ where $\loc \notin \{\loct,\locf\}$.
 We could similarly define $\mathcal K^{\infty}$ as the set of all functions $f:\pstates\to \mathbb [0,\infty)\cup \{\infty\}$, and the complete lattice structure on $\mathcal K^{\infty}$.
 See Appendix \ref{appendix:theory} for details.

\paragraph{Probability Transformer Function} The \emph{probability transformer function} $\ptf^M : \mathcal K^M \to \mathcal K^M$ is a higher-order function that computes the expected value of a given function after one step of PTS execution. Formally, it is defined as follows:
\begin{itemize}
	\item $\ptf^M(f)(\locf, \val) := 1;$
	\item $\ptf^M(f)(\loct, \val) := 0;$
	\item If $\loc \not\in \{\locf, \loct\},$ then for every valuation $\val \in \mathbb R^\vars,$ there exists a unique transition $\tau = (\loc, \varphi, F_1, \ldots, F_k)$ with $\val \models \varphi.$ Let its $i$-th fork be $F_i = (\toloc_i, p_i, \transupdate_i).$ Then,
	$$
	\textstyle \ptf^M(f)(\loc, \val) := \sum_{i=1}^k p_i \cdot \mathbb E \left[ f(\toloc_i, \transupdate_i(\val, \valrd)) \right]
	$$
	where $\valrd \in \mathbb R^\rdvars$ is sampled according to $\rdvarjdis.$
\end{itemize}

We can now obtain our first fixed-point theorem for QAVA.

\begin{theorem}[Proof in Appendix~\ref{appendix:theory}] \label{thm:epf}
	$\lfp\ \pre^{\infty} = \epf$.
\end{theorem}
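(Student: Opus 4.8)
The plan is to show the two inequalities $\lfp\ \pre^\infty \sle \epf$ and $\epf \sle \lfp\ \pre^\infty$, after first establishing that $\epf$ is itself a fixed-point of $\pre^\infty$ and that $\pre^\infty$ is continuous so that Theorem~\ref{thm:kleene} applies. First I would verify that $\epf \in \mathcal K^\infty$ and that $\pre^\infty(\epf) = \epf$. This fixed-point identity is essentially the law of total probability conditioned on the first step: from a state $(\loc, \val)$ with $\loc \notin \{\loct, \locf\}$, the process takes the unique enabled transition $\tau = (\loc, \varphi, F_1, \ldots, F_k)$, selects fork $F_i$ with probability $p_i$, samples $\valrd$, and moves to $(\toloc_i, \transupdate_i(\val, \valrd))$; since the event ``$\exists n.\ \hat\loc_n = \locf$'' is invariant under the first transition and the process is time-homogeneous (Markov), conditioning on the outcome of the first step yields exactly $\epf(\loc,\val) = \sum_{i=1}^k p_i \cdot \mathbb E[\epf(\toloc_i, \transupdate_i(\val, \valrd))]$. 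The boundary cases $\epf(\locf,\val)=1$ and $\epf(\loct,\val)=0$ hold by definition of $\epf$ and membership in $\mathcal K^\infty$.

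Next I would prove that $\epf = \lfp\ \pre^\infty$ by identifying $\epf$ with the Kleene iteration from $\bot^\infty$. The key is to give $\pre^\infty{}^{(i)}(\bot^\infty)$ a probabilistic meaning: I claim $\pre^\infty{}^{(i)}(\bot^\infty)(\loc,\val) = \Pr[\exists n \le i.\ \hat\loc_n = \locf \mid \hat\sigma_0 = (\loc,\val)]$, the probability of reaching $\locf$ within $i$ steps. This is proved by induction on $i$: the base case $i=0$ gives $\bot^\infty$, which is $1$ exactly at $\locf$ (reaching violation in zero steps) and $0$ elsewhere, matching the event ``$\hat\loc_0 = \locf$''; the inductive step is precisely the one-step conditioning that defines $\pre^\infty$, now truncated at $i+1$ steps. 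Assuming the continuity of $\pre^\infty$, Theorem~\ref{thm:kleene} gives $\lfp\ \pre^\infty = \bigsqcup_{i \ge 0} \pre^\infty{}^{(i)}(\bot^\infty)$, and the supremum in $\mathcal K^\infty$ is taken pointwise. The bounded monotone convergence
\[
\textstyle \sup_{i} \Pr[\exists n \le i.\ \hat\loc_n = \locf] = \Pr[\exists n.\ \hat\loc_n = \locf] = \epf(\loc,\val)
\]
follows from continuity of probability along the increasing sequence of events $\{\exists n \le i.\ \hat\loc_n = \locf\}$, whose union is $\{\exists n.\ \hat\loc_n = \locf\}$. This identifies the pointwise supremum with $\epf$ and closes the argument.

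The main obstacle I anticipate is establishing the continuity of $\pre^\infty$ rigorously, since $\pre^\infty$ involves an expectation $\mathbb E[f(\toloc_i, \transupdate_i(\val, \valrd))]$ over the sampling variables, and interchanging this expectation with the supremum over an increasing chain $f_0 \sle f_1 \sle \cdots$ requires the monotone convergence theorem for integrals. Because the chain is pointwise increasing and the functions are nonnegative (values in $[0,\infty]$), monotone convergence applies and lets me pull $\bigsqcup$ through both the finite sum $\sum_i p_i$ and the expectation; care is needed to confirm that the pointwise supremum $\bigsqcup_n f_n$ is again a legitimate element of $\mathcal K^\infty$ and that the equality $\pre^\infty(\bigsqcup_n f_n) = \bigsqcup_n \pre^\infty(f_n)$ holds at every reachable state, including verifying measurability of $\valrd \mapsto f(\toloc_i, \transupdate_i(\val,\valrd))$. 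Once continuity is in hand, the rest is the routine induction and limit argument sketched above.
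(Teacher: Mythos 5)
Your proposal is correct and follows essentially the same route as the paper's proof: establish continuity of $\pre^{\infty}$ via monotone convergence, apply Theorem~\ref{thm:kleene} to write $\lfp\ \pre^{\infty}$ as the supremum of the Kleene iterates from $\bot^{\infty}$, identify the $i$-th iterate with the probability of assertion violation within $i$ steps (the paper states this as $\Pr[\hat{\loc}_i = \locf]$ via Corollary~\ref{cor:comp}, which coincides with your $\Pr[\exists n \le i.\ \hat{\loc}_n = \locf]$ because $\locf$ is absorbing), and conclude by continuity of probability along the increasing events. Your preliminary step showing $\epf$ is itself a fixed-point is not needed for this argument and is omitted in the paper, but it is harmless.
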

\begin{proof}[Proof Sketch]
We apply Theorem \ref{thm:kleene} by plugging in $\bot^{\infty}$. The result follows by expanding the function composition.
\end{proof}

\paragraph{Upper Bounds on $\vpf$} By combining the theorem above with Kanster-Tarski's fixed-point theorem, we see that to infer an upper-bound for $\vpf,$ it suffices to find a pre fixed-point $\theta \in \mathcal K^\infty$ (as illustrated in Section~\ref{sec:rabbit}).

\paragraph{Lower Bounds on $\vpf$} Theorem \ref{thm:tarski} only provides lower bounds on the greatest fixed-point, but not the least one. Nevertheless, we can use these bounds if we can guarantee the uniqueness of our fixed-point, then every post fixed-point $\theta \in \mathcal K^M,$ for some $M \geq 1,$ would be a lower-bound on $\vpf.$

\paragraph{Almost-sure Termination} To achieve the desired uniqueness property, we further assume that our PTS terminates almost-surely \hongfei{(for all $\pstate\in\pstates$)}. Formally, $\Pr[\exists n.~ \hat{\loc}_n = \loct \lor \hat{\loc}_n = \locf~\hongfei{\mid (\locin, \valin)=(\loc, \val)}] = 1$ for all $(\loc,\val)\in \pstates$.
We show that under this assumption, the fixed-point is unique.

\begin{theorem} \label{thm:uniq}
	Let $\pts$ be an almost-surely terminating PTS. Then for every $M \geq 1$, we have $\epf = \lfp\ \pre^{M} = \gfp\ \pre^{M}.$
\end{theorem}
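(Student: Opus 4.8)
The plan is to reduce the claim to the \emph{uniqueness} of fixed-points of $\pre^M$ on the complete lattice $(\mathcal K^M, \sle)$. First I would observe that $\epf$ itself is a fixed-point of $\pre^M$: since $\epf$ returns a probability it takes values in $[0,1] \subseteq [0,M]$, so $\epf \in \mathcal K^M$, and because the defining formula of $\pre^M$ coincides with that of $\pre^\infty$, Theorem~\ref{thm:epf} gives $\pre^M(\epf) = \pre^\infty(\epf) = \epf$. As $\pre^M$ is monotone (expectation preserves the pointwise order), Knaster--Tarski (Theorem~\ref{thm:tarski}) guarantees that $\lfp\ \pre^M$ and $\gfp\ \pre^M$ both exist and are fixed-points, and since $\lfp\ \pre^M$ is below every fixed-point while $\gfp\ \pre^M$ is above every fixed-point we have $\lfp\ \pre^M \sle \epf \sle \gfp\ \pre^M$. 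Hence it suffices to prove that \emph{every} fixed-point $g \in \mathcal K^M$ equals $\epf$; this collapses the interval and yields $\lfp\ \pre^M = \epf = \gfp\ \pre^M$.

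Next I would extract a martingale-style identity from the fixed-point equation. Fix a reachable state $\pstate = (\loc, \val)$ and let $\{\hat{\pstate}_n\}_{n \ge 0}$ be the PTS process with $\hat{\pstate}_0 = \pstate$. By definition $\pre^M(g)(\pstate) = \mathbb E[\,g(\hat{\pstate}_1) \mid \hat{\pstate}_0 = \pstate\,]$, so $g = \pre^M(g)$ together with the Markov property of the process gives, by induction on $n$,
\[
g(\pstate) \;=\; \mathbb E\!\left[\,g(\hat{\pstate}_n) \,\middle|\, \hat{\pstate}_0 = \pstate\,\right] \qquad \text{for every } n \ge 0 .
\]
Because $\loct$ and $\locf$ are absorbing, $g(\loct, \cdot) = 0$ and $g(\locf, \cdot) = 1$, so I would split this expectation according to the location of $\hat{\pstate}_n$:
\[
g(\pstate) \;=\; \Pr[\hat{\loc}_n = \locf] \;+\; \mathbb E\!\left[\,g(\hat{\pstate}_n)\cdot \mathbf{1}\big[\hat{\loc}_n \notin \{\loct, \locf\}\big]\,\right].
\]

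Finally I would let $n \to \infty$. Since $\locf$ is absorbing, the events $\{\hat{\loc}_n = \locf\}$ increase with $n$ to $\{\exists m.~\hat{\loc}_m = \locf\}$, so continuity of probability gives $\Pr[\hat{\loc}_n = \locf] \uparrow \epf(\pstate)$. For the residual term, the bound $0 \le g \le M$ yields
\[
0 \;\le\; \mathbb E\!\left[\,g(\hat{\pstate}_n)\cdot \mathbf{1}\big[\hat{\loc}_n \notin \{\loct,\locf\}\big]\,\right] \;\le\; M \cdot \Pr[\hat{\loc}_n \notin \{\loct, \locf\}] ,
\]
and almost-sure termination forces $\Pr[\hat{\loc}_n \notin \{\loct, \locf\}] \to 0$. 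Passing to the limit gives $g(\pstate) = \epf(\pstate)$ for all $\pstate$, which is the desired uniqueness.

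The main obstacle I anticipate is the analytic core of the last two steps rather than the lattice bookkeeping: justifying the iterated identity rigorously (the tower/Markov property for the PTS process, and measurability of $g$ along the process) and interchanging the limit with the expectation. The crucial role of almost-sure termination is precisely to make the residual mass vanish, and this is also where finiteness of $M$ is indispensable --- the uniform bound $g \le M$ is what lets $M \cdot \Pr[\hat{\loc}_n \notin \{\loct,\locf\}]$ control the residual term. This explains why the theorem is stated for $\mathcal K^M$ with $M < \infty$ and would fail on $\mathcal K^\infty$, where the greatest fixed-point can strictly exceed $\epf$.
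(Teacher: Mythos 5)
Your proof is correct, but it takes a different route from the paper's. The paper's proof never argues about an arbitrary fixed-point: it invokes Kleene iteration (Theorem~\ref{thm:kleene}), using that $\pre^M$ is both continuous and cocontinuous for finite $M$, to write $\lfp\ \pre^M$ and $\gfp\ \pre^M$ explicitly as $\sup_n \Pr[\hat{\loc}_n=\locf]$ and $\inf_n\bigl(M\cdot\Pr[\hat{\loc}_n\notin\{\loct,\locf\}]+\Pr[\hat{\loc}_n=\locf]\bigr)$ via the iterate formula $\pre^{M,n}(f)(\pstate)=\mathbb E[f(\hat{\pstate}_n)]$ applied to $\bot^M$ and $\top^M$, and then observes that their difference is dominated by $M\cdot\Pr[\hat{\loc}_n\notin\{\loct,\locf\}]\to 0$. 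You instead use Knaster--Tarski only for existence, show $\epf$ is itself a fixed-point, and prove that \emph{every} fixed-point $g\in\mathcal K^M$ equals $\epf$ by the martingale identity $g(\pstate)=\mathbb E[g(\hat{\pstate}_n)]$ plus the absorption/termination argument. The two arguments share the same analytic core (the iterated-expectation identity, which is the paper's Corollary~\ref{cor:comp}, and the vanishing residual mass $M\cdot\Pr[\text{not terminated}]$), and both isolate the finiteness of $M$ as the essential hypothesis. What your route buys is that it dispenses with cocontinuity and Kleene's theorem entirely and yields the nominally stronger statement that the fixed-point is unique in $\mathcal K^M$; what the paper's route buys is that the limit interchange is packaged once inside Corollary~\ref{cor:comp}, so the proof of the theorem itself is pure bookkeeping, whereas your version must justify the identity $g(\pstate)=\mathbb E[g(\hat{\pstate}_n)]$ for an arbitrary fixed-point $g$ --- though this is exactly Corollary~\ref{cor:comp} with $f=g$ and $\pre^{M,n}(g)=g$, so the gap you flag as the ``main obstacle'' is already closed by the paper's machinery.
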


\begin{proof}[Proof Sketch]
Since $M$ is finite, $\pre^M$ is both continuous and cocontinuous. By plugging in the concrete form of $\bot^M$ and $\top^M$ into the formula of Theorem \ref{thm:kleene}, and using the definition of almost-sure termination assumption, one can derive the desired result. See Appendix~\ref{appendix:theory} for details.
\end{proof}

\hongfei{
\begin{remark}
Given the almost-sure termination assumption, one may argue that the lower-bound problem can be transformed into the upper-bound problem by swapping $\loct$ and $\locf$, as a lower bound for assertion violation is an upper bound for no assertion violation.
However, through Theorem~\ref{thm:uniq} we reduce the lower-bound problem to post fixed-point synthesis, which is key to our algorithmic approach in Section~\ref{sec:alglower}.
\end{remark}
}

Based on the discussion above, the problem of inferring upper (resp.~lower) bounds on $\vpf$ has now turned into the synthesis of pre (resp.~post) fixed-points in $\mathcal K^M$. In the next sections, we will focus on automated synthesis algorithms.

\section{Algorithmic Approaches to \uprobabbr}
\label{sec:algupper}

In this section, we consider the UQAVA problem and focus on inferring upper-bounds for the assertion violation probability. As mentioned in Section~\ref{sec:theory}, every pre fixed-point in $\mathcal K^M$ is an upper-bound on $\vpf$. We consider the case where the PTS $\pts$ and the invariant $\invmap$ are affine.

\paragraph{Affine \modelname} A \modelname{} $\pts$ is \emph{affine} if (i) every transition's guard condition is a conjunction of
affine inequalities over program variables, i.e.~inequalities of the form $\mathbf{a}^\mathrm{T}\cdot \vec{\vars} \le b$ where $\mathbf{a}^\mathrm{T}$ is a constant vector, $\vec{\vars}$ is the vector of program variables, and $b$ is a real number,
and (ii) every update function $\transupdate$ is affine, i.e.~ $\transupdate(\mathbf{v},\mathbf{u})=\mathbf{Q}\cdot \mathbf{v} + \mathbf{R}\cdot \mathbf{u} + \mathbf{e}$  where $\mathbf{Q}$ and $\mathbf{R}$ are constant matrices and $\mathbf{e}$ is a constant vector.

\paragraph{Affine Invariants} An invariant map $\invmap$ is \textit{affine} if for each $\loc\in \locs$, $\invmap(\locs)$ is a
	conjunction of affine inequalities over program variables.

We focus on synthesizing exponential upper-bounds (pre fixed-points). This choice best suits the common cases where the assertion violation probability decreases exponentially with respect to a combination of program variables. In general, due to transcendentality, exponential functions are much harder to synthesize than the widely-studied cases of linear functions~\cite{SriramCAV,DBLP:journals/toplas/ChatterjeeFNH18} or polynomials~\cite{ChatterjeeFG16}, which are respectively handled by Farkas' Lemma~\cite{FarkasLemma} and Positivstellens\"atze~\cite{ScheidererSurvey}. We present two algorithmic approaches for this problem:
\begin{itemize}
	\item In Section~\ref{sec:heuristicalg}, we show that Repulsing Ranking Supermartingales (RepRSMs), first defined in~\cite{ChatterjeeNZ2017} in the context of stochastic invariants, can be exploited to obtain exponential pre fixed-points. Our approach is based on Hoeffding's lemma and leads to an efficient sound algorithm that first synthesizes a linear/polynomial RepRSM, and then obtains an exponential pre fixed-point based on it. Our bounds are significantly better than the ones obtained in~\cite{ChatterjeeNZ2017} using Azuma's inequality (Remark~\ref{rem:stocinv}). However, this efficient algorithm is not complete.
	\item In Section~\ref{sec:soundcompalg}, we provide a sound and \emph{complete} algorithm for exponential bounds of the form $\exp(a \cdot \val + b)$ for affine PTSs. This algorithm depends on Minkowski decomposition. Hence, in theory, it is not as efficient as the one in Section~\ref{sec:heuristicalg}. However, it provides completeness guarantees and reduces the problem to convex optimization. In practice, it inherits the efficiency of convex optimization and \amir{easily handles various benchmarks} (Section~\ref{sec:evaluation}).
\end{itemize}

\subsection{A Sound Polynomial-time Algorithm}
\label{sec:heuristicalg}

To present our first synthesis algorithm, we define the notion of RepRSMs. The definition below is taken from~\cite{ChatterjeeNZ2017} and slightly modified to become applicable to PTSs.

\paragraph{RepRSMs}
A \emph{$(\ltrans,\dtrans,\epsilon)$-Repulsing Ranking Supermartingale} is
a Lebesgue-measurable, e.g.~linear or polynomial, function $\rterm:\pstates\to \mathbb R$ satisfying the following conditions:
\begin{enumerate}[leftmargin=2em]
\item[(C1)] $\rterm(\locin,\valin)\le 0$;
\item[(C2)] $\forall \val \in \mathbb R^\vars. \quad \val\models \invmap(\locf)\Rightarrow \rterm(\locf,\val)\ge 0$;
\item[(C3)] For every transition $\trans=(\frmloc,\transcond,\fork_1,\fork_2,\cdots,\fork_k)$,
where $\fork_j=\langle \toloc_j,p_j,\transupdate_j \rangle$, it holds that:
\begin{align*}
\forall \val \in &\mathbb R^\vars.\\ & \val \models\invmap(\frmloc)\land \transcond \Rightarrow \\
& \textstyle \sum_{j=1}^{k}
p_{j}\cdot
\mathbb{E}_{\valrd} [\rterm(\toloc_{j},\transupdate_{j}(\val,\valrd))]\le \rterm(\frmloc,\val)-\epsilon.
\end{align*}
\item[(C4)] For every $\trans$ as above and $1\le j\le k$, we have:
\begin{align*}
\forall \val \in \mathbb R^\vars, &\valrd \in \sampset. \\ \quad &\val \models \invmap(\frmloc) \wedge \transcond \Rightarrow\\
& \ltrans \le \rterm(\toloc_{j},\transupdate_{j}(\val,\valrd))-\rterm(\frmloc,\val)\le \ltrans+\dtrans.
\end{align*}
\end{enumerate}
Informally, (C1) says that the initial value of $\rterm$ is non-positive,  while (C2) means that when the program terminates with assertion failure, the value of $\rterm$ should be non-negative.
(C3) specifies that
the expected value of $\rterm$ decreases by at least $\epsilon$ after each transition in the \modelname{}. Finally,
(C4) states that the difference between current and next values of $\rterm$ always falls in the interval $[\ltrans,\ltrans+\dtrans]$.
In~\cite{ChatterjeeNZ2017}, it is shown that a RepRSM leads to an exponentially-decreasing upper-bound for assertion violation.
We now obtain a much tighter bound.

\begin{lemma}[Hoeffding's Lemma~\cite{ColinMcDiarmid1998concentration}]
\label{thm:hoef}
	For any random variable $X$ such that $a\leq X\le b$, and all $t \geq 0,$ we have
	$$\textstyle \mathbb E[\exp(t \cdot X)]\le \exp\left(t \cdot \mathbb E[X]+\frac{t^2 \cdot (b-a)^2}{8}\right).$$
\end{lemma}

We now present a theorem that establishes a connection between RepRSMs and pre fixed-point state functions, and serves as a basis for our first algorithm:

\begin{theorem}[Proof in Appendix \ref{appendix:algupper}]
\label{thm:reprsm}
Let $\rterm$ be a $(\ltrans,\dtrans,\epsilon)$-RepRSM, then
$\exp\left(\frac{8 \cdot \epsilon}{\dtrans^2} \cdot \rterm\right)$
is a pre fixed-point state function.
\end{theorem}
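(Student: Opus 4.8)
The plan is to show that $\usol := \exp\!\left(\frac{8\epsilon}{\dtrans^2}\cdot \rterm\right)$, with its values overridden to be $0$ at $\loct$ and $1$ at $\locf$, is a legitimate state function in $\mathcal K^\infty$ satisfying $\pre^\infty(\usol)\sle \usol$. Write $c := 8\epsilon/\dtrans^2 > 0$. That $\usol$ lies in $\mathcal K^\infty$ is immediate, since $\exp$ is non-negative and the two boundary values are imposed by fiat. The pre fixed-point inequality is trivial at $\loct$ and $\locf$ (both sides equal $0$ and $1$ respectively, since $\pre^\infty(\usol)(\loct,\val)=0$ and $\pre^\infty(\usol)(\locf,\val)=1$), so everything reduces to verifying $\pre^\infty(\usol)(\loc,\val)\le \usol(\loc,\val)$ for each reachable $(\loc,\val)$ with $\loc\notin\{\loct,\locf\}$.

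First I would fix such a state and let $\trans=(\loc,\transcond,\fork_1,\dots,\fork_k)$ be the unique enabled transition; since $(\loc,\val)$ is reachable we have $\val\models\invmap(\loc)\land\transcond$, so conditions (C3)–(C4) apply. The central idea is to bundle the random fork choice and the sampling of $\valrd$ into a single random variable $Y := \rterm(\toloc_J,\transupdate_J(\val,\valrd))-\rterm(\loc,\val)$, where the fork index $J$ is drawn with probability $\transprob_J$ and $\valrd$ is then sampled from $\rdvarjdis$. By (C4), $Y$ lies surely in the interval $[\ltrans,\ltrans+\dtrans]$, and by (C3) (using $\sum_j\transprob_j=1$) its mean satisfies $\mathbb E[Y]\le -\epsilon$. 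Factoring out the constant $\exp(c\cdot\rterm(\loc,\val))$, the law of total expectation gives
$$\sum_{j=1}^k \transprob_j\cdot \mathbb E[\exp(c\cdot \rterm(\toloc_j,\transupdate_j(\val,\valrd)))] = \exp(c\cdot\rterm(\loc,\val))\cdot \mathbb E[\exp(c\cdot Y)].$$

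Next I would apply Hoeffding's Lemma (Theorem~\ref{thm:hoef}) to $Y$ with $a=\ltrans$, $b=\ltrans+\dtrans$, and $t=c\ge 0$, yielding $\mathbb E[\exp(cY)]\le \exp(c\,\mathbb E[Y]+c^2\dtrans^2/8)\le \exp(-c\epsilon+c^2\dtrans^2/8)$; substituting $c=8\epsilon/\dtrans^2$ makes this exponent exactly $0$, so $\mathbb E[\exp(cY)]\le 1$. Finally I would reconcile the displayed sum with the true definition of $\pre^\infty$: above I used the smooth value $\exp(c\cdot\rterm(\toloc_j,\cdot))$ at every destination, whereas $\pre^\infty$ uses the overridden values $0$ at $\loct$ and $1$ at $\locf$. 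Both overrides only decrease the corresponding summands, since $0\le\exp(c\cdot\rterm(\loct,\cdot))$ and, crucially, $1\le\exp(c\cdot\rterm(\locf,\cdot))$ by (C2) together with $c>0$. Combining these, $\pre^\infty(\usol)(\loc,\val)\le \exp(c\cdot\rterm(\loc,\val))\cdot\mathbb E[\exp(cY)]\le \usol(\loc,\val)$, completing the pre fixed-point check.

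The main obstacle is the combination step. Applying Hoeffding separately to each fork would leave a convex combination of per-fork exponential bounds that cannot be collapsed to $1$, because Jensen's inequality pushes in the wrong direction when one tries to pull the convex combination back inside $\exp$. Packaging all of the one-step randomness into the single bounded variable $Y$, so that Hoeffding is invoked exactly once against the aggregate drift bound supplied by (C3), is precisely what makes the constant $8\epsilon/\dtrans^2$ balance the variance penalty and drive the exponent to zero. The only other point requiring care is the boundary reconciliation at $\locf$, which is exactly where condition (C2) is consumed.
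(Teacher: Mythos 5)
Your proof is correct, and it uses the same key ingredient as the paper (Hoeffding's Lemma applied to the one-step increment of $\rterm$, with the boundary values at $\loct$ and $\locf$ reconciled via non-negativity of $\exp$ and condition (C2) respectively), but the way you invoke Hoeffding differs from the appendix in a substantive and instructive way. The paper's proof factors out $\exp\left(\frac{8\epsilon}{\dtrans^2}\cdot\rterm(\loc,\val)\right)$ and then applies Hoeffding's Lemma \emph{separately to each fork}, bounding $\mathbb E_{\valrd}\left[\exp\left(\alpha\left(\rterm(\toloc_{j},\transupdate_{j}(\val,\valrd))-\rterm(\loc,\val)\right)\right)\right]$ by $\exp\left(\alpha\cdot(-\epsilon)+\frac{(\alpha\dtrans)^2}{8}\right)$ for every $j$; this step silently uses a per-fork drift bound $\mathbb E_{\valrd}[\rterm(\toloc_j,\transupdate_j(\val,\valrd))]-\rterm(\loc,\val)\le-\epsilon$, which is stronger than what (C3) literally guarantees ((C3) only bounds the $p_j$-weighted average of these conditional expectations). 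You instead package the fork index and the sample $\valrd$ into a single random variable $Y$, so that (C4) gives the almost-sure range $[\ltrans,\ltrans+\dtrans]$ and (C3) gives $\mathbb E[Y]\le-\epsilon$ exactly as stated, and Hoeffding is applied once to the aggregate. Your remark that the per-fork route cannot be repaired by Jensen (the inequality points the wrong way when pulling the convex combination back inside $\exp$) is exactly the right diagnosis; your aggregate formulation is the one that makes the argument go through under the stated hypotheses, so if anything your write-up is the more careful of the two.
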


\begin{proof}[Proof Sketch]
Define $\usol:=\exp(\frac{8 \cdot \epsilon}{\dtrans^2} \cdot \eta)$. To prove $\pre(\usol)\\\sle \usol$, we expand the left-hand-side and directly upper-bound the exponential term using Lemma \ref{thm:hoef}.
\end{proof}

Combining the theorem above with our results in Section~\ref{sec:theory}, it is straightforward to see that $\exp(\frac{8 \cdot \epsilon}{\dtrans^2} \cdot \rterm(\locin, \valin))$ is an upper-bound on the probability of assertion violation.

\begin{remark}
\label{rem:stocinv}
Note that \cite{ChatterjeeNZ2017} also obtains an upper bound on assertion violation using RepRSMs. However, their method applies Azuma's inequality, which corresponds to the special case of $\ltrans<0 \land \dtrans = 2 \cdot \ltrans$ in our setting.
In their case,
given a $(-\dtrans/2,\dtrans,\epsilon)$-RepRSM $\rterm$,
the obtained bound is no less than
$\exp\left(\frac{4 \cdot \epsilon}{\dtrans^2} \cdot \eta\right),$
while our bound is $\exp(\frac{8 \cdot \epsilon}{\dtrans^2} \cdot \eta)$. Note that by condition (C1) in the definition of RepRSMs, $\eta(\locin, \valin)$ is non-positive. Thus, our bound is always substantially tighter.
\end{remark}

As shown by Theorem~\ref{thm:reprsm}, it is sufficient to synthesize a RepRSM in order to obtain an upper-bound for the assertion violation probability. In the rest of this section, we provide an algorithm that synthesizes linear RepRSMs over a given affine PTS $\pts$ with an affine invariant $\invmap$. This algorithm is a standard application of Farkas' lemma, as in~\cite{SriramCAV,DBLP:journals/toplas/ChatterjeeFNH18}. Hence, we only provide a high-level overview. See~\cite{SriramCAV,DBLP:journals/toplas/ChatterjeeFNH18} for a more detailed exposition. Finally, it is noteworthy that the algorithm can also be extended to polynomial RepRSMs (Remark~\ref{rem:polyheuristic}).

\paragraph{The \heurisalg Algorithm} Our algorithm derives an exponential upper-bound in four steps:

\paragraph{Step 1 (Setting up templates)}
The algorithm creates unknown coefficients $\vec{\mathbf{a}_{\loc}},b_{\loc}$ for every location $\loc \in \locs.$ Each
$\vec{\mathbf{a}_{\loc}}$ is a row vector of $\vert \vars \vert$ unknown coefficients and each $b_{\loc}$ is an unknown scalar. Moreover, the algorithm  symbolically computes $\rterm(\loc,\val):=\vec{\mathbf{a}_{\loc}} \cdot \val + b_{\loc}$ for every location $\loc$. The goal is to find values for the unknown coefficients $\vec{\mathbf{a}_{\loc}},b_{\loc},$ and RepRSM parameters $\beta, \Delta, \epsilon$ so that $\rterm$ becomes a RepRSM.

\paragraph{Step 2 (Collecting constraints)} The algorithm transforms (C2)--(C4) into conjunctions of constraints of the form
$$\textstyle \forall \mathbf{\val}\in P.\quad\left(\mathbf{c}^{\mathrm{T}}\cdot \mathbf{\val}\le d\right),$$ where
$P$ is a constant polyhedron and $\mathbf{c},d$ are, respectively, a vector and a scalar, with each of their component being an affine combination of the unknown coefficients created in the previous step. This step can be accomplished since both the $\pts$ and the invariant are affine.

\paragraph{Step 3 (Applying Farkas' Lemma)} Using Farkas' lemma, the algorithm transforms the constraints into an equivalent conjunctive collection of linear constraints over the unknowns.

\begin{lemma}[Farkas' Lemma~\cite{FarkasLemma}]
\label{thm:farkas}
Let $\mathbf{A}\in \mathbb R^{m\times n}$, $\mathbf{b}\in \mathbb R^{m}$, $c\in \mathbb R^{n}$ and $d\in \mathbb R$.
Assume that $P:=\{\mathbf{x}\in \mathbb{R}^n \mid \mathbf{A}\cdot\mathbf{x}\le \mathbf{b}\}\ne \emptyset$.
Then $P\subseteq \{\mathbf{x}\in \mathbb{R}^n \mid  \mathbf{c}^{\mathrm{T}}\cdot\val \le d \}$ iff there exists $\mathbf{y}\ge 0$ such that
$\mathbf{y}^{\mathrm{T}}\cdot \mathbf{A} =\mathbf{c}^{\mathrm{T}}$
and $\mathbf{y}^{\mathrm{T}}\cdot \mathbf{b}\le d$.
\end{lemma}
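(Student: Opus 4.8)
The plan is to prove the two directions separately, with the forward (sufficiency) direction being routine and the converse being the substantive part. For the ``if'' direction, suppose some $\mathbf{y} \ge 0$ with $\mathbf{y}^{\mathrm{T}}\mathbf{A} = \mathbf{c}^{\mathrm{T}}$ and $\mathbf{y}^{\mathrm{T}}\mathbf{b} \le d$ is given. For any $\mathbf{x} \in P$ we have $\mathbf{A}\mathbf{x} \le \mathbf{b}$, and since $\mathbf{y} \ge 0$ left-multiplication by $\mathbf{y}^{\mathrm{T}}$ preserves the inequality, so $\mathbf{c}^{\mathrm{T}}\mathbf{x} = \mathbf{y}^{\mathrm{T}}\mathbf{A}\mathbf{x} \le \mathbf{y}^{\mathrm{T}}\mathbf{b} \le d$. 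Hence $P \subseteq \{\mathbf{x} : \mathbf{c}^{\mathrm{T}}\mathbf{x} \le d\}$, as required.

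For the ``only if'' direction I would reduce to the homogeneous form of Farkas' lemma via a standard homogenization trick. Introduce a fresh scalar variable $t$ and set $\tilde{\mathbf{A}} = \left(\begin{smallmatrix} \mathbf{A} & -\mathbf{b} \\ \mathbf{0} & -1 \end{smallmatrix}\right)$ and $\tilde{\mathbf{c}}^{\mathrm{T}} = (\mathbf{c}^{\mathrm{T}}, -d)$. The first step is to verify the homogeneous implication $\tilde{\mathbf{A}}(\mathbf{x}, t)^{\mathrm{T}} \le 0 \Rightarrow \tilde{\mathbf{c}}^{\mathrm{T}}(\mathbf{x}, t)^{\mathrm{T}} \le 0$, i.e.~that $\mathbf{A}\mathbf{x} \le t\,\mathbf{b}$ together with $t \ge 0$ force $\mathbf{c}^{\mathrm{T}}\mathbf{x} \le t\,d$. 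When $t > 0$ this follows by scaling, applying the hypothesis to $\mathbf{x}/t \in P$. The case $t = 0$ is exactly where the nonemptiness assumption $P \ne \emptyset$ is consumed: picking any $\mathbf{x}_0 \in P$ and noting that $\mathbf{x}_0 + \lambda\mathbf{x} \in P$ for every $\lambda \ge 0$, the hypothesis gives $\mathbf{c}^{\mathrm{T}}\mathbf{x}_0 + \lambda\,\mathbf{c}^{\mathrm{T}}\mathbf{x} \le d$ for all $\lambda \ge 0$, which forces $\mathbf{c}^{\mathrm{T}}\mathbf{x} \le 0$ by letting $\lambda \to \infty$.

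Once the homogeneous implication is established, I would invoke the homogeneous Farkas lemma to obtain $\tilde{\mathbf{y}} = (\mathbf{y}, s) \ge 0$ with $\tilde{\mathbf{y}}^{\mathrm{T}}\tilde{\mathbf{A}} = \tilde{\mathbf{c}}^{\mathrm{T}}$. Reading off the two column blocks yields precisely $\mathbf{y}^{\mathrm{T}}\mathbf{A} = \mathbf{c}^{\mathrm{T}}$ and $\mathbf{y}^{\mathrm{T}}\mathbf{b} + s = d$, whence $\mathbf{y}^{\mathrm{T}}\mathbf{b} = d - s \le d$ since $s \ge 0$, completing the proof. Alternatively, the entire converse can be obtained in one stroke from strong LP duality: the primal $\max\{\mathbf{c}^{\mathrm{T}}\mathbf{x} : \mathbf{A}\mathbf{x} \le \mathbf{b}\}$ is feasible (as $P \ne \emptyset$) and bounded above by $d$, so its dual $\min\{\mathbf{b}^{\mathrm{T}}\mathbf{y} : \mathbf{A}^{\mathrm{T}}\mathbf{y} = \mathbf{c},\ \mathbf{y} \ge 0\}$ is feasible with optimal value at most $d$, and any dual optimum is the desired $\mathbf{y}$.

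I expect the main obstacle to be the converse direction, and in particular the fact that it ultimately rests on a genuinely nontrivial ingredient — the homogeneous Farkas lemma, or equivalently strong LP duality — whose own justification requires a separating-hyperplane argument or Fourier--Motzkin elimination. The homogenization bookkeeping and the careful treatment of the degenerate $t = 0$ case (the sole place where $P \ne \emptyset$ is actually used) are where attention is needed; everything else is elementary linear algebra.
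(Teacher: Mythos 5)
The paper does not prove this lemma at all: it is imported verbatim from the literature as a classical result (cited as~\cite{FarkasLemma}) and used as a black box in Step~3 of the synthesis algorithms, so there is no in-paper argument to compare against. Your proof is nevertheless correct and standard. The forward direction is the routine nonnegative-combination computation; for the converse, your homogenization $\tilde{\mathbf{A}} = \bigl(\begin{smallmatrix}\mathbf{A} & -\mathbf{b}\\ \mathbf{0} & -1\end{smallmatrix}\bigr)$, $\tilde{\mathbf{c}}^{\mathrm{T}} = (\mathbf{c}^{\mathrm{T}}, -d)$ reduces the affine statement to the homogeneous Farkas lemma, and your treatment of the degenerate case $t=0$ --- using a point $\mathbf{x}_0 \in P$ and the ray $\mathbf{x}_0 + \lambda\mathbf{x}$ to force $\mathbf{c}^{\mathrm{T}}\mathbf{x} \le 0$ --- correctly isolates the one place where the hypothesis $P \ne \emptyset$ is genuinely needed (the statement is false without it, e.g.~when $P = \emptyset$ the containment holds vacuously but no certificate $\mathbf{y}$ need exist). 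The bookkeeping when reading off $\mathbf{y}^{\mathrm{T}}\mathbf{A} = \mathbf{c}^{\mathrm{T}}$ and $\mathbf{y}^{\mathrm{T}}\mathbf{b} = d - s \le d$ from the block structure is right, and the LP-duality shortcut you mention is an equally valid one-line alternative. You are also right that the residual nontrivial content (homogeneous Farkas, or equivalently strong duality) must be taken from a separating-hyperplane or Fourier--Motzkin argument; since the paper itself treats the whole lemma as known, citing that ingredient is entirely in keeping with its level of rigor.
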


Every constraint of the previous step is of the form $\forall \val\in P\,\left(\mathbf{c}^{\mathrm{T}}\cdot \val\le d\right)$, which fits perfectly into the Farkas' Lemma.
Thus, by applying Farkas' Lemma, the algorithm obtains a linear programming instance over the unknown variables. Notably, no program variable appears in this linear program.

\paragraph{Step 4 ({Solving the unknown coefficients})} Our algorithm finds values for the unknown coefficients by solving the linear programming instance generated in the previous step together with the linear constraint from (C1).
Additionally, if the goal is to obtain the tightest possible upper-bound, rather than just any upper-bound, the algorithm instead solves the optimization problem with the objective of minimizing
$\frac{8 \cdot \epsilon}{\dtrans^2}\cdot\rterm(\locin,\valin)$. Finally, it obtains a RepRSM by plugging the obtained solution back into the template of Step 1, and an upper-bound on the assertion violation probability by simply computing $\frac{8 \cdot \epsilon}{\dtrans^2}\cdot\rterm(\locin,\valin).$

\begin{theorem}[Soundness]
    If \heurisalg successfully synthesizes $\rterm(\loc,\val)$, then the output $\exp(\frac{8 \cdot \epsilon}{\dtrans^2} \cdot \rterm(\locin,\valin))$ is an upper-bound on the probability of assertion violation in $\pts,$ given the initial state $(\locin,\valin)$.
\end{theorem}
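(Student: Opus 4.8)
The plan is to reduce the claim entirely to the machinery already established: I would show that a successful run of \heurisalg produces a genuine $(\ltrans,\dtrans,\epsilon)$-RepRSM, feed it into Theorem~\ref{thm:reprsm} to obtain a pre fixed-point, and then invoke the least-fixed-point characterization of $\epf$ from Section~\ref{sec:theory} to turn that pre fixed-point into an upper bound. The only genuinely new content beyond the earlier results is the first step, namely that feasibility of the synthesized linear program is \emph{equivalent} to the existence of a valid RepRSM; everything after that is a chaining of theorems.

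First I would argue that \heurisalg is faithful: when it succeeds, the recovered template $\rterm(\loc,\val)=\vec{\mathbf{a}_{\loc}}\cdot\val+b_{\loc}$ together with the solved parameters $\ltrans,\dtrans,\epsilon$ satisfies (C1)--(C4). Indeed, (C1) is imposed directly as a linear constraint in Step~4, while Step~2 rewrites each of (C2)--(C4) into the canonical form $\forall \val\in P.\ (\mathbf{c}^{\mathrm T}\cdot\val\le d)$, and Step~3 applies Farkas' Lemma (Lemma~\ref{thm:farkas}), which is a biconditional. Hence every feasible point of the resulting linear program corresponds precisely to a choice of coefficients for which $\rterm$ meets all four conditions, i.e.\ $\rterm$ is a bona fide $(\ltrans,\dtrans,\epsilon)$-RepRSM.

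Next I would apply Theorem~\ref{thm:reprsm} to conclude that $\usol:=\exp\!\left(\frac{8\cdot\epsilon}{\dtrans^2}\cdot\rterm\right)$ is a pre fixed-point state function in $\mathcal K^{\infty}$, i.e.\ $\pre^{\infty}(\usol)\sle\usol$. By Theorem~\ref{thm:epf} we have $\epf=\lfp\ \pre^{\infty}$, and by the Knaster--Tarski theorem (Theorem~\ref{thm:tarski}), which identifies the least fixed-point with the infimum of all pre fixed-points, we obtain $\epf\sle\usol$ pointwise. Evaluating both sides at the initial state gives
$$\epf(\locin,\valin)\ \le\ \exp\!\left(\frac{8\cdot\epsilon}{\dtrans^2}\cdot\rterm(\locin,\valin)\right),$$
and since $\epf(\locin,\valin)$ is by definition exactly the probability that the PTS process reaches $\locf$ from $(\locin,\valin)$, this is the claimed upper bound.

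I expect the main obstacle to be the faithfulness argument of the second paragraph, and specifically the encoding performed in Step~2. Condition (C3) contains an expectation $\mathbb{E}_{\valrd}$ over the sampling variables, so one must verify that, because every $\transupdate_j$ is affine, the quantity $\mathbb{E}_{\valrd}[\rterm(\toloc_j,\transupdate_j(\val,\valrd))]$ is itself affine in $\val$ (using linearity of expectation and the finite first moments of $\rdvarjdis$), and that the quantification over $\valrd\in\sampset$ in (C4) collapses to finitely many affine constraints. One must also confirm that $\usol$ genuinely lives in the \emph{unbounded} lattice $\mathcal K^{\infty}$, which is precisely why Theorem~\ref{thm:epf} (the $\mathcal K^{\infty}$ version) is the correct instrument here rather than any bounded $\mathcal K^{M}$, since $\exp$ need not be bounded by a fixed $M$.
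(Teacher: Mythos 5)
Your proposal is correct and follows essentially the same route as the paper: argue that the Farkas-based reduction is an equivalence so a feasible LP solution yields a genuine $(\ltrans,\dtrans,\epsilon)$-RepRSM, apply Theorem~\ref{thm:reprsm} to obtain a pre fixed-point in $\mathcal K^{\infty}$, and conclude via Theorem~\ref{thm:epf} together with Knaster--Tarski that $\epf(\locin,\valin)$ is bounded by the output. The extra care you take about affineness of the expectation in (C3) and about working in the unbounded lattice $\mathcal K^{\infty}$ is consistent with (and slightly more explicit than) the paper's own two-line argument.
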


\begin{proof}
It is easy to verify, by definition chasing, that our algorithm is sound and \emph{complete} for obtaining affine RepRSMs \cite{ChatterjeeNZ2017}
, since all steps reduce the problem to a new equivalent format.
The desired result is obtained by combining Theorem \ref{thm:reprsm} and the proof in Section~\ref{sec:theory} that every pre fixed-point is an upper-bound on $\vpf.$
\end{proof}

\begin{remark}[Extension to Polynomial Exponents]\label{rem:polyheuristic}
 The algorithm above handles the case where the exponent in our upper-bound is an affine combination of program variables. However,
it can be straightforwardly extended to polynomial exponents through Positivstelles\"{a}tze~\cite{ScheidererSurvey} and
semidefinite programming.
We obtain an exponential template with an affine exponent by directly synthesizing its affine exponent. This technique is also applicable to exponential templates with polynomial exponents, which are in turn obtained from polynomial RepRSMs.
We refer to~\cite{ChatterjeeFG16} for algorithmic details such as the use of Positivstelles\"{a}tze and semi-definite programming to synthesize polynomial (Rep)RSMs.
\end{remark}

\smallskip
\noindent{\bf \em Time Complexity.} The linear RepRSM synthesis takes polynomial time (via linear programming).
The same applies to polynomial RepRSMs~\cite{ChatterjeeFG16}. See Appendix~\ref{appendix:heuristicquadprog} for more details.

\subsection{A Sound and Complete Algorithm for Exponential Bounds with Affine Exponents}
\label{sec:soundcompalg}

We provide a sound and complete algorithm for the synthesis of upper-bounds which are of the form $\exp(\vec{a} \cdot \val + b).$ Our algorithm is based on Minkowski decomposition of polyhedra and a dedicated quantifier elimination method.

\paragraph{Polyhedra}
A subset $P\subseteq \mathbb{R}^n$ is a \emph{polyhedron} if $P=\{\mathbf{x}\in\mathbb{R}^n \mid \mathbf{A}\mathbf{x}\le \mathbf{b}\}$ for some matrix $\mathbf{A}\in\mathbb{R}^{m\times n}$ and vector $\mathbf{b}\in \mathbb{R}^n$.
A \textit{polytope} is a bounded polyhedron.
A \textit{cone} is a polyhedron $P$ such that $P=\{\mathbf{x}\in\mathbb{R}^n \mid \mathbf{A}\mathbf{x}\le \mathbf{0}\}$ for some matrix $\mathbf{A}\in\mathbb{R}^{m\times n}$.
A \emph{generator} set for a polytope $Q$ is a set of vectors $\{\val_1,\val_2,\cdots,\val_c\}$, such that every element $\val\in Q$ is representable as a convex combination of this set, i.e.~$\val=\sum_{i=1}^{c}{\lambda_i\cdot \val_i}$, for some coefficients $\lambda_i \geq 0$ with $\sum_{i=1}^{c}{\lambda_i}=1.$

\paragraph{Minkowski sum} Given two sets $A$ and $B$ of vectors, their Minkowski sum is defined as $A+B := \{x+y\ |\ x\in A,y\in B\}$.

\begin{theorem}[Decomposition Theorem~\cite{DBLP:books/daglib/0090562}]
\label{thm:pdt}
For every polyhedron $P,$ there exists a polytope $Q$ and a polyhedral cone $C,$ such that $P = Q+C.$
\end{theorem}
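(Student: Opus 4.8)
The plan is to prove the decomposition by \emph{homogenization}, which reduces the statement about an arbitrary polyhedron to the Minkowski--Weyl theorem for cones. Writing $P = \{\mathbf{x} \in \mathbb{R}^n : \mathbf{A}\mathbf{x} \le \mathbf{b}\}$, I would first lift $P$ one dimension up by forming the cone
$$\widehat{P} := \{(\mathbf{x}, t) \in \mathbb{R}^{n+1} : \mathbf{A}\mathbf{x} - t\,\mathbf{b} \le \mathbf{0},\ t \ge 0\},$$
which is a polyhedral cone and recovers $P$ as its slice at height one: $\mathbf{x} \in P$ iff $(\mathbf{x},1) \in \widehat{P}$. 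The entire argument then rests on representing $\widehat{P}$ by finitely many generating rays.

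The key lemma, and the step I expect to be the main obstacle, is the finite generation of polyhedral cones: every cone $\{\mathbf{z} : \mathbf{M}\mathbf{z} \le \mathbf{0}\}$ equals $\mathrm{cone}(\mathbf{g}_1, \ldots, \mathbf{g}_k)$ for finitely many generators. I would prove this by induction on the number of defining inequalities, eliminating one coordinate at a time via Fourier--Motzkin and tracking how pairs of extreme rays combine across each eliminated constraint. This finite-generation statement carries the genuine content of the theorem and is the standard companion of Farkas' Lemma (Lemma~\ref{thm:farkas}); once it is in hand, everything else is bookkeeping.

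Granting finite generation, I would write $\widehat{P} = \mathrm{cone}(\mathbf{g}_1, \ldots, \mathbf{g}_k)$ and split each generator $\mathbf{g}_i = (\mathbf{y}_i, t_i)$ by its last coordinate, which is $t_i \ge 0$ by construction. For $t_i > 0$ I rescale to height one, setting $\mathbf{q}_i := \mathbf{y}_i / t_i$ (so $(\mathbf{q}_i, 1) \in \widehat{P}$, hence $\mathbf{q}_i \in P$); for $t_i = 0$ I keep the direction $\mathbf{c}_j := \mathbf{y}_j$. Then I set $Q := \mathrm{conv}\{\mathbf{q}_i\}$ and $C := \mathrm{cone}\{\mathbf{c}_j\}$ and verify $P = Q + C$ by a double inclusion at height one. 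For $P \subseteq Q + C$, any $\mathbf{x} \in P$ gives $(\mathbf{x},1) = \sum_i \lambda_i \mathbf{g}_i$ with $\lambda_i \ge 0$; the last coordinate forces $\sum_{t_i>0} \lambda_i t_i = 1$, so writing $\mu_i := \lambda_i t_i$ expresses $\mathbf{x}$ as the convex combination $\sum_{t_i>0}\mu_i \mathbf{q}_i \in Q$ plus the nonnegative combination $\sum_{t_j=0}\lambda_j \mathbf{c}_j \in C$. The reverse inclusion is immediate, since each $(\mathbf{q}_i,1)$ and $(\mathbf{c}_j,0)$ lies in $\widehat{P}$, so every element of $Q+C$ lifts to a point of $\widehat{P}$ at height one.

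Finally, to match the paper's definitions I must confirm that $Q$ is a bounded polyhedron (a polytope) and $C$ a polyhedral cone. Boundedness of $Q$ is clear, and that both $Q$ and $C$ are polyhedra follows from the converse direction of Minkowski--Weyl, namely that any finitely generated set is finitely constrained; alternatively, $C$ can be identified directly with the recession cone $\{\mathbf{x} : \mathbf{A}\mathbf{x} \le \mathbf{0}\}$, which is manifestly polyhedral.
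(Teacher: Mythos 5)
The paper does not prove this statement: Theorem~\ref{thm:pdt} is the classical Minkowski--Weyl decomposition theorem, quoted from the cited reference~\cite{DBLP:books/daglib/0090562} and used as a black box (in Proposition~\ref{thm:cdt}, where $C$ is taken to be the recession cone $\{\val \mid \mathbf{M}\val\le\mathbf{0}\}$). So there is no in-paper argument to compare against; your proposal stands or falls on its own.

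On its own terms the argument is correct and is the standard textbook proof: homogenize $P$ to the cone $\widehat{P}$, invoke finite generation of polyhedral cones, and sort generators by the sign of the last coordinate to read off $Q$ and $C$ at the height-one slice. The bookkeeping in both inclusions is right (in particular, the substitution $\mu_i = \lambda_i t_i$ correctly turns the conic combination into a convex one, and your identification of $\mathrm{cone}\{\mathbf{c}_j\}$ with the recession cone matches exactly how the paper uses the theorem). The one caveat is that the entire mathematical content is delegated to the finite-generation lemma for cones, which you only sketch via Fourier--Motzkin; you flag this honestly, and the induction you describe does go through, but as written the proposal is a reduction to that lemma rather than a self-contained proof. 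You also correctly note that certifying $Q$ as a \emph{polytope} in the paper's sense (a bounded polyhedron, not merely a convex hull of finitely many points) needs the converse Minkowski--Weyl direction, though for the paper's actual use in Proposition~\ref{thm:cdt} the finite generator set of $Q$ is what matters, and your construction delivers that directly.
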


\paragraph{The \compalg\ Algorithm} Our algorithm takes as input an affine \modelnames $\pts$ and an affine invariant $\invmap$ for $\pts$.
If there exists an exponential pre fixed-point whose exponent is an affine expression over program variables, then it outputs such a function for $\pts$.
Otherwise, the algorithm asserts that there is no such state function.
It consists of five steps:

\paragraph{Step 1 (Setting up templates)}
The algorithm sets up a template $\usol$ as follows:
For each $\loc\not\in\{\loct,\locf\}$, it symbolically computes  $\usol(\loc, \val) :=\eterm{\loc}{\val}$
in which $\rterm(\loc,\val)$ is an affine function over program variables with unknown coefficients, i.e.~$\rterm(\loc,\val):=\vec{\mathbf{a}_{\loc}}\cdot \val + b_{\loc}$ where $\vec{\mathbf{a}_{\loc}}$ is a vector of unknown coefficients and $b_{\loc}$ is an unknown scalar.
Moreover, it sets $\usol(\loct,\val)\equiv 0$ and $\usol(\locf,\val)\equiv1.$ As in the previous section, our goal is to synthesize values for the unknown variables so that $\usol$ becomes a pre fixed-point (an upper-bound).

\begin{example}
\label{ex:soundstep1}
Consider the tortoise-hare example in Section \ref{sec:rabbit}, whose \modelnames is shown in Figure \ref{fig:pts21}.
For every location, we set up a vector of unknowns: $\vec{\mathbf{a}}_{\loc_{\rm init}}, \vec{\mathbf{a}}_{\loc_{\rm switch}}, \vec{\mathbf{a}}_{\loc_{\rm assert}}$.
We also set up unknown scalars $b_{\loc_{\rm init}},b_{\loc_{\rm assert}},b_{\loc_{\rm switch}}$.
We also compute $\usol$ and $\rterm$ symbolically, e.g.~
$$
\rterm(\loc_{\rm switch}, \val) = {a}_{\loc_{\rm switch},1} \cdot \val[x] + {a}_{\loc_{\rm switch},2} \cdot \val[y] + b_{\loc_{\rm switch}},
$$
$$
\usol(\loc_{\rm switch}, \val)  = \exp({a}_{\loc_{\rm switch},1} \cdot \val[x] + {a}_{\loc_{\rm switch},2} \cdot \val[y] +b_{\loc_{\rm switch}} ).
$$

\end{example}

\paragraph{Step 2 (Collecting constraints)}
The algorithm imposes pre fixed-point constraints to $\usol$.
Following the definition of $\pre$, for each transition
$\trans=(\frmloc,\transcond,\fork_1,\fork_2,\dots,\fork_k)$, where $\fork_j=\langle \toloc_j, p_j, \transupdate_j \rangle,$
the algorithm symbolically computes the following universally-quantified constraint and calls it $\cons^{\trans}$:
\begin{align}
\label{formula:cons}
	\textstyle	\forall \val \in \mathbb R^\vars.\quad &\val \models \invmap(\frmloc) \land \varphi\Rightarrow \\ \notag
		& \textstyle \usol(\frmloc,\val)\ge \sum_{j=1}^{k}{p_{j}\cdot {\textstyle \mathop{\mathbb E}_{\valrd}}[\usol(\toloc_{j},\transupdate_{j}(\val,\valrd))]}.
\end{align}
Intuitively, $\cons^{\trans}$ requires that for every valuation $\val$ that satisfies the invariant and the guard of transition $\trans,$ the pre fixed-point condition must be satisfied after going along $\trans$.

\begin{example}
	 Continuing with the previous example, consider the transition $\trans$ from $\loc_{\rm switch}$. The invariant for $\loc_{\rm switch}$ is $(x\le 99\land y\le 99)$ and the transition guard is $\transcond=\textbf{true}.$ Thus, every state $(\loc_{\rm switch},\val)$ that takes this transition must satisfy $(x\le 99\land y\le 99)$. The algorithm computes the pre fixed-point constraint $\cons^{\trans}$ as follows:
\iffalse
	\begin{align*}
		&\forall \val\in (x\le 99\land y\le 99)\land (x\le 100\land y\le 101).\\ &\quad \usol(\locin,\val)\ge \usol(\loc_{\mathrm{switch}}, \val)\\
		&\forall \val\in (x>99\lor y>99)\land (x\le 100\land y\le 101). \\ &\quad \usol(\locin,\val)\ge \usol(\loc_{\mathrm{assert}}, \val)\\
		&\forall \val\in (x<100)\land (x\ge 100\lor y\ge 100).\\ &\quad \usol(\locin,\val)\ge \usol(\locf, \val)\\
		&\forall \val\in (x\ge 100)\land (x\ge 100\lor y\ge 100).\\ &\quad \usol(\locin,\val)\ge \usol(\loct, \val)\\
		&\forall \val\in true\land (x\le 99\land y\le 99).\\ &\quad \usol(\locin,\val)\ge \begin{matrix}0.5\cdot \usol(\locin, \val + [1\ 2]^T)\\ + 0.5\cdot \usol(\locin, \val + [1\ 0]^T)\end{matrix}
	\end{align*}
\fi
\begin{small}
$$
\forall x,y \in \mathbb R.\ (x\le 99\land y\le 99)\Rightarrow$$
$$
\usol(\loc_{\rm switch}, x, y)\ge 0.5\cdot \usol(\locin, x+1, y+2) + 0.5\cdot \usol(\locin, x+1, y).
$$
\end{small}
\end{example}

\paragraph{Step 3 (Canonicalization)} The algorithm transforms every constraint of Step 2 into the following \emph{canonical form}: $$\textstyle\forall \val \in \mathbb R^\vars.\quad (\val\in \guard)\Rightarrow \precond^\paracond(\val),$$ where $\guard$ is a polyhedron in $\mathbb R^\vars$ and $\precond^\paracond(\val)$ involves exponential terms on unknown coefficients and program variables, and is defined as:
\begin{align*}
&\textstyle\precond^\paracond(\val):=\\&\quad \quad\left(\sum_{j=1}^{k}{p_{j}\cdot \exp\left(\alpha_{j}\cdot \val + \beta_{j}\right)\cdot {\textstyle \mathop{\mathbb E}_{\valrd}} \left[\exp\left(\gamma_j\cdot \valrd\right)\right]}\le 1\right).
\end{align*}
Here, $\alpha_j,\beta_j,\gamma_j$ are affine expressions over unknown coefficients, and $p_j\in [0, +\infty)$.
We denote such a canonical constraint as $\cancond(\guard,\precond^\paracond)$. We now show how the algorithm canonicalizes every constraint of Step 2.
Consider the formula in \eqref{formula:cons}. The algorithm expands it based on the template:
	\begin{align*}
\textstyle		\exp(\rterm(\frmloc,\val))\ge\sum_{j=1}^{k}{p_{j}\cdot
{\textstyle \mathop{\mathbb E}_{\mathrm{\valrd}}}\left[\exp\left(\rterm(\toloc_{j},\transupdate_j(\val,\valrd))\right)\right]}
	\end{align*}
Now suppose that $\transupdate_j(\val,\valrd):=\mathbf{Q}_{j}  \val + \mathbf{R}_{j}  \valrd + \mathbf{e}_{j}.$
By further expanding both sides based on the template for $\rterm$ and dividing them by the left-hand-side, the algorithm obtains:
\begin{align*}
	\textstyle \sum_{j=1}^{k}{p_{j} \cdot \exp\left(\alpha_{j}\cdot \val + \beta_{j}\right) \cdot {\textstyle \mathop{\mathbb E}_{\mathrm{\valrd}}}\left[\exp\left(\gamma_j\cdot \valrd\right)\right]} \le 1
\end{align*}
where
$\alpha_{j}:=\mathbf{a}_{\toloc_{j}}\mathbf{Q}_{j} - \mathbf{a}_{\frmloc},\
\beta_{j}:=b_{\toloc_{j}} - \mathbf{a}_{\toloc_{j}}\cdot \mathbf{e}_{j} - b_{\frmloc},$ and
$\gamma_{j}:=\mathbf{a}_{\toloc_{j}}\mathbf{R}_{j}$
are affine expressions over the unknown coefficients.
\begin{example}
    Continuing with the previous example, by plugging in the template, the algorithm obtains:
\iffalse
	Now plugging in the template, we get $m=5$ constraints, $\mathfrak e_1, \mathfrak e_2, \cdots, \mathfrak e_5$ are listed below:
	\begin{align*}
		&\forall \val\in (x\le 99\land y\le 99)\land (x\le 100\land y\le 101).\\ &\exp(a_{\locin}\cdot \val + b_{\locin})\ge \exp(a_{\loc_{\rm switch}}\cdot \val + b_{\loc_{\rm switch}})\\
		&\forall \val\in (x>99\lor y>99)\land (x\le 100\land y\le 101). \\ &\quad \exp(a_{\locin}\cdot \val + b_{\locin})\ge \exp(a_{\loc_{\rm assert}}\cdot \val + b_{\loc_{\rm assert}})\\
		&\forall \val\in (x<100)\land (x\ge 100\lor y\ge 100).\\ &\quad  \exp(a_{\loc_{\rm assert}}\cdot \val + b_{\loc_{\rm assert}})\ge 1\\
		&\forall \val\in (x\ge 100)\land (x\ge 100\lor y\ge 100).\\ &\quad \exp(a_{\loc_{\rm assert}}\cdot \val+b_{\loc_{\rm assert}})\ge 0\\
		&\forall \val\in true\land (x\le 99\land y\le 99).\\ & \exp(a_{\loc_{\rm switch}}\cdot \val+b_{\loc_{\rm switch}})\ge\begin{matrix}  0.5\cdot \exp(a_{\locin}\cdot (\val + [1\ 2]^T) + b_{\locin})\\ + 0.5\cdot \exp(a_{\locin}\cdot (\val + [1\ 0]^T)+ b_{\locin})\end{matrix}
	\end{align*}
\fi
\begin{small}
    \begin{align*}
    &\exp(a_{\loc_{\rm switch}}\cdot \val+b_{\loc_{\rm switch}})\ge\\& \quad \quad 0.5 \cdot  \exp(a_{\locin}\cdot (\val + [1\ 2]^T) + b_{\locin}) + \\ & \quad \quad 0.5 \cdot \exp(a_{\locin}\cdot (\val + [1\ 0]^T)+ b_{\locin})
    \end{align*}
\end{small}
which it then rewrites equivalently as:
\begin{small}
    \begin{align*}
    0.5 \cdot \exp(\alpha\cdot \val + \beta_1)+  \exp(\alpha\cdot \val + \beta_2) \le 1
    \end{align*}
     where
     $
     \alpha=\mathbf{a}_{\locin}-\mathbf{a}_{\loc_{\rm switch}},
     \beta_1=\mathbf{a}_{\locin}\cdot [1\ 2]^T + b_{\locin},
     \beta_2=\mathbf{a}_{\locin}\cdot [1\ 0]^T+ b_{\locin}
     $ are affine expressions. Let $\guard$ be the polyhedron defined by inequalities $x \leq 99 \land y \leq 99.$ Then, $\guard$ forms a canonical constraint together with the inequality above.
\end{small}
\end{example}

\paragraph{Step 4 (Quantifier Elimination)}
In this step, the algorithm eliminates the universal quantifier in the canonical constraints.
Our elimination technique relies on the decomposition theorem (Theorem \ref{thm:pdt})
to decompose a polyhedron
and Farkas' Lemma (Theorem \ref{thm:farkas}) to deal with linear constraints.

For each canonical constraint,
the algorithm first computes a decomposition of
$\guard$ as a Minkowski sum of a polytope $Q$ and a polyhedral cone $C$ and then transforms the constraint using the following proposition:

\iffalse
$q_i$ polyhedrons: $\mathfrak g_i = \bigcup_{i=1}^{q_i}\mathfrak W_{i,j}$, where $\mathfrak W_{i,j}:=\{\val\ |\ \mathbf{C}_{i,j}\val\le \mathbf{d}_{i,j}\}$.
\fi

\begin{proposition}[Proof in Appendix \ref{appendix:algupper}]
\label{thm:cdt}
Given a canonical constraint $\cancond(\guard,\precond^\paracond)$, the polyhedron $\guard=\{\val\ |\ \mathbf{M}\val \le \mathbf{d}\}$ can be decomposed as $\guard=Q+C$, where $Q$ is a polytope and $C=\{\val\ |\ \mathbf{M}\val \le \mathbf{0}\}.$
Then,
 $\cancond(\guard,\precond^\paracond)$ is satisfied iff:
 \begin{itemize}[leftmargin=2em]
		\item[(D1)] $\forall j,\val.~~ \mathbf{C}\val \le \mathbf{0}\Rightarrow \alpha_j\cdot \val\le 0,$ and
		\item[(D2)] For every generator $\val^*$ of $Q$, $\val^*\models \precond^\paracond(\val^*)$.
	\end{itemize}
\end{proposition}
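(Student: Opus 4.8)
The plan is to reduce the universally-quantified exponential inequality to a boundedness-plus-convexity statement and then exploit the Minkowski decomposition $\guard = Q+C$ supplied by Theorem~\ref{thm:pdt}. First I would absorb everything independent of $\val$ into constants: since $\mathbb E_\valrd[\exp(\gamma_j\cdot\valrd)]$ and $\exp(\beta_j)$ do not depend on $\val$, I set $c_j := p_j\cdot\exp(\beta_j)\cdot\mathbb E_\valrd[\exp(\gamma_j\cdot\valrd)]\ge 0$ and rewrite $\precond^\paracond(\val)$ equivalently as $g(\val)\le 1$, where $g(\val):=\sum_{j=1}^k c_j\,\exp(\alpha_j\cdot\val)$. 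Two structural facts drive the whole argument: (a) $g$ is convex in $\val$, being a nonnegative combination of exponentials of affine maps, and (b) because each $c_j\ge 0$, every summand is nonnegative and monotone in $\alpha_j\cdot\val$. I would also record that the cone $C=\{\val\mid \mathbf M\val\le\mathbf 0\}$ is exactly the recession cone of $\guard=\{\val\mid\mathbf M\val\le\mathbf d\}$ (the standard homogenization), so it contains $\mathbf 0$, is closed under addition and nonnegative scaling, and satisfies $q+r\in\guard$ whenever $q\in\guard$ and $r\in C$; here I read the condition ``$\mathbf C\val\le\mathbf 0$'' in (D1) simply as membership $\val\in C$.

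For the $(\Leftarrow)$ direction I assume (D1) and (D2) and take an arbitrary $\val\in\guard$. Writing $\val=q+r$ with $q=\sum_i\lambda_i\,\val_i^*$ a convex combination of the finitely many generators of $Q$ and $r\in C$, condition (D1) gives $\alpha_j\cdot r\le 0$ for every $j$, so monotonicity yields $\exp(\alpha_j\cdot\val)=\exp(\alpha_j\cdot q)\exp(\alpha_j\cdot r)\le\exp(\alpha_j\cdot q)$ and hence $g(\val)\le g(q)$. Convexity of $g$ (Jensen's inequality) then gives $g(q)\le\sum_i\lambda_i\,g(\val_i^*)$, and (D2) bounds each $g(\val_i^*)\le 1$, so $g(q)\le\sum_i\lambda_i=1$. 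Chaining the two inequalities proves $g(\val)\le 1$, i.e.\ $\cancond(\guard,\precond^\paracond)$ holds.

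For the $(\Rightarrow)$ direction I assume $\cancond(\guard,\precond^\paracond)$. Then (D2) is immediate, since $\mathbf 0\in C$ forces every generator $\val^*$ of $Q$ to lie in $Q+C=\guard$, so the hypothesis applied at $\val^*$ gives exactly $\val^*\models\precond^\paracond(\val^*)$. The substance is (D1), which I would establish by a recession-ray argument: supposing $\alpha_{j_0}\cdot r>0$ for some index $j_0$ and some $r\in C$, fix any $q\in\guard$; the whole ray $q+t\,r$ with $t\ge 0$ stays inside $\guard$ because $C$ is closed under addition and scaling. As all summands of $g$ are nonnegative, $g(q+t\,r)\ge c_{j_0}\exp(\alpha_{j_0}\cdot q)\exp(t\,\alpha_{j_0}\cdot r)$, which tends to $+\infty$ as $t\to\infty$, contradicting $g\le 1$ on $\guard$; hence $\alpha_{j_0}\cdot r\le 0$, which is (D1).

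The main obstacle is this last unboundedness step, and it is precisely where the sign conditions enter: the blow-up needs $c_{j_0}>0$, i.e.\ $p_{j_0}>0$, which holds because fork probabilities lie in $(0,1]$ while $\exp(\beta_{j_0})$ and $\mathbb E_\valrd[\exp(\gamma_{j_0}\cdot\valrd)]$ are strictly positive (any summand with vanishing coefficient is identically zero and imposes no constraint, so it may be dropped). I would therefore make sure the derivation in Step~3 guarantees $p_j>0$, and I would justify carefully that $C$ is the recession cone of $\guard$ and that the generator set of the polytope $Q$ is finite, since both directions rely on the membership property $q+r\in\guard$ for $q\in\guard,\ r\in C$ and on reducing the quantifier over $\guard$ to the finitely many generators of $Q$ together with the homogeneous conditions on $C$.
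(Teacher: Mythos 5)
Your proof is correct and follows essentially the same route as the paper's: the same Minkowski split $\val=q+r$ with (D1) handling the cone part, convexity/Jensen over the generators of $Q$ with (D2) for the polytope part, and the same recession-ray blow-up contradiction for the converse. Your added care about needing $c_{j_0}>0$ for the blow-up (i.e.\ $p_{j_0}>0$, despite the canonical form only stipulating $p_j\in[0,+\infty)$) is a small but legitimate refinement the paper glosses over.
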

\begin{proof}[Proof Sketch]
For the if part, pick any $\val\in \guard$, by Theorem \ref{thm:pdt}, $\val=\val_1+\val_2$, where $\mathbf{M}\val_1\le \mathbf{0}$ and $\val_2\in Q.$ By plugging in $\val$ with $\val_1+\val_2$ into $\precond$ and the convexity of $\exp(\cdot)$, we prove that $\precond^\paracond(\val)$ holds. For the only if part, if $\cancond(\guard,\precond^\paracond)$ is satisfied, (D2) is true since $Q\subseteq \guard.$ We prove (D1) by contradiction. Suppose there exists $\val\in C$ and $j$ such that $\alpha_j\cdot \val>0$, choose any $\val_0\in Q$ and consider $\precond^\paracond(t\val+\val_0)$ for $t\ge 0$. By taking $t\to \infty$, $\precond^\paracond(t\val+\val_0)$ would be eventually violated, causing a contradiction.
\end{proof}

The algorithm computes (D1) and (D2). It translates (D1) to linear constraints using Lemma~\ref{thm:farkas}, and utilizes the double description method \cite{DBLP:conf/sas/BagnaraRZH02} to find all generators of $Q$ and write (D2) as a conjunction of finitely many convex inequalities.

\begin{example}
	Again, continuing with the previous example, the algorithm decomposes
$\guard$ into $\{(x,y)\ |\ x\le 0\land y\le 0\}$ and the polytope generated by a single point $\{(99,99)\}$ by the algorithm in \cite{DBLP:conf/sas/BagnaraRZH02}.
Then, it decomposes $\precond^{\paracond}$ into:
		$$\textstyle\forall x,y \in \mathbb R.~~ x\le 0\land y\le 0\Rightarrow \alpha\le 0, \text{and}$$
        $$\textstyle 0.5 \cdot\left(\exp(\alpha\cdot[99\ 99]^T + \beta_1)+  \exp(\alpha\cdot[99\ 99]^T + \beta_2)\right)\le 1$$
It then reduces the former to a conjunction of linear constraints on the unknown coefficients by Farkas' Lemma.
\end{example}

\paragraph{Step 5 (Optimization)}
After the steps above, all quantified canonical formulas are now reduced to a set of convex inequalities without quantifiers. To infer a tight upper-bound, our algorithm solves the following optimization problem $\Theta$ over the unknown variables defined in Step 1:
\begin{small}
\begin{equation}
\label{formula:opt}
\textstyle \textbf{Minimize~}  \exp(a_{\locin}\cdot \valin + b_{\locin})
\end{equation}
$$\textstyle \textbf{Subject to~} \text{the constraints obtained in Step 4 above.}$$
\end{small}
It calls an external solver to obtain the optimal solution.

\begin{theorem}[Proof in Appendix \ref{appendix:algupper}]\label{lem:opt}
$\Theta$ is a convex optimization problem.
\end{theorem}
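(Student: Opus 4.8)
The plan is to verify the two defining features of a convex optimization problem separately: that the objective is a convex function and that every constraint carves out a convex feasible region, both with respect to the vector of unknown coefficients $\{\vec{\mathbf{a}}_\loc, b_\loc\}$ introduced in Step~1. The decisive observation to keep in mind throughout is that the generators $\val^*$ of $Q$ and the initial valuation $\valin$ are \emph{constants}, whereas the expressions $\alpha_j,\beta_j,\gamma_j$ are \emph{affine} in the unknowns.

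First I would dispose of the objective and the (D1) constraints. Since $\valin$ is constant, $a_{\locin}\cdot \valin + b_{\locin}$ is affine in the unknowns, and $\exp(\cdot)$ is convex and nondecreasing; hence the objective $\exp(a_{\locin}\cdot \valin + b_{\locin})$ is convex, being a composition of a convex function with an affine one. After the application of Farkas' Lemma, the constraints arising from (D1) are linear inequalities over the unknowns (and the Farkas multipliers), so they define a polyhedron, which is convex.

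The crux is (D2). For each generator $\val^*$ the constraint $\precond^\paracond(\val^*)$ reads
$$\textstyle \sum_{j=1}^{k} p_j \cdot \exp(\alpha_j\cdot \val^* + \beta_j) \cdot \mathbb{E}_{\valrd}[\exp(\gamma_j\cdot \valrd)] \le 1.$$
The obstacle is that the left-hand side is a \emph{sum of products} of convex functions, and neither products nor sums preserve convexity in general, so a naive argument fails. I would sidestep this by proving the stronger property that each summand is \emph{log-convex} in the unknowns and then invoking the closure of log-convex functions under multiplication and addition. Concretely: (i) the factor $\exp(\alpha_j\cdot \val^* + \beta_j)$ has logarithm $\alpha_j\cdot \val^* + \beta_j$, which is affine in the unknowns, so it is log-affine; (ii) the factor $\mathbb{E}_{\valrd}[\exp(\gamma_j\cdot \valrd)]$ is the moment-generating function of $\valrd$ evaluated at $\gamma_j$, whose logarithm (the cumulant-generating function) is convex in $\gamma_j$ by Hölder's inequality, and thus convex in the unknowns since $\gamma_j$ is affine in them. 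A product of two log-convex functions is log-convex (the logarithms add, and a sum of convex functions is convex), so each summand is log-convex; and a finite sum of log-convex functions is again log-convex, a standard consequence of Hölder's inequality. Since a log-convex function is convex, the entire left-hand side is convex in the unknowns, and therefore its sublevel set $\{\,\cdot \le 1\,\}$ is convex.

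The main obstacle is precisely step~(ii) together with the sum-of-products structure; the key technical ingredients are log-convexity of the moment-generating map $\gamma \mapsto \mathbb{E}_{\valrd}[\exp(\gamma\cdot \valrd)]$ and the closure of log-convex functions under products and sums. Once these are established, convexity of $\Theta$ follows by assembling the pieces: a convex objective minimized over the intersection of the (D1) polyhedron with the (D2) sublevel sets, which is a convex optimization problem. A minor point I would address explicitly is finiteness of the moment-generating function: at unknowns where it equals $+\infty$ the corresponding summand is $+\infty$, which is consistent with the extended-value convexity argument and simply renders that point infeasible.
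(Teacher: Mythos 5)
Your proof is correct, and at the decisive step it takes a genuinely different---and in fact more rigorous---route than the paper's. Both proofs agree on the easy parts: the objective is $\exp$ of an affine function of the unknowns, and the (D1)/Farkas constraints are linear. For the (D2) constraints the paper argues factor-wise: it shows $\exp(\alpha_j\cdot\val^*+\beta_j)$ is convex in the unknowns, shows $\mathbb{E}_{\valrd}[\exp(\gamma_j\cdot\valrd)]$ is convex in the unknowns (by pushing convexity of $\exp$ through the expectation and composing with the affine map $\gamma_j$), and then concludes by invoking the claim that ``the product of two positive convex functions is convex.'' That claim is false as a general lemma (e.g.\ $(x^2+1)\bigl((x-a)^2+1\bigr)$ fails to be convex near $x=a/2$ for large $a$), so the paper's proof as written has a soft spot at exactly the point you identified as the crux; the conclusion nevertheless holds because both factors are log-convex, which is precisely what your argument establishes. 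Your route---log-affinity of the first factor, log-convexity of the moment-generating function via H\"older, closure of log-convex functions under products and finite sums, and log-convex $\Rightarrow$ convex---is the standard airtight justification, and it also subsumes the paper's separate appeal to $p_j\ge 0$ for the outer sum. (An equally short repair of the paper's argument is to absorb the deterministic factor into the expectation and write each summand as $\mathbb{E}_{\valrd}[\exp(\alpha_j\cdot\val^*+\beta_j+\gamma_j\cdot\valrd)]$, an expectation of exponentials of affine functions of the unknowns, hence convex.) One small slip in your write-up: finite nonnegative sums of convex functions \emph{do} preserve convexity, so only the product step genuinely needs the log-convexity detour; and your remark about extended-value convexity when the moment-generating function is infinite is a sound extra precaution that the paper omits.
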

\begin{proof}
Every constraint of Step 4 is either linear or of the form $L\le 1$ where $L$ is a non-negative combination of convex functions. This is due to the convexity of $\exp(\cdot).$
\end{proof}
So, we can use convex optimization tools to solve $\Theta.$
It is straightforward to verify the soundness of \compalg.
Our algorithm is also complete, formally:
\begin{theorem}
Given an affine \modelname, an affine invariant, and $\epsilon>0,$
\compalgs outputs an $\epsilon$-optimal solution for the unknown coefficients in an exponential template with affine exponent.
\end{theorem}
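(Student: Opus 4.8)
The plan is to prove completeness by showing that \emph{every} transformation performed by \compalgs is equivalence-preserving, so that the feasible region of the final program $\Theta$ coincides exactly with the set of affine-exponent pre fixed-points, and then to extract an $\epsilon$-optimal point by exploiting convexity (Theorem~\ref{lem:opt}). Write $\mathcal F$ for the set of coefficient assignments $(\vec{\mathbf{a}_{\loc}}, b_{\loc})_{\loc}$ for which the induced template $\usol = \exp(\rterm)$ is a pre fixed-point of $\pre$. Soundness (argued before the statement) already gives $\mathcal F \subseteq \mathrm{Feas}(\Theta)$; the heart of completeness is the reverse inclusion, together with the claim that no step silently drops or adds solutions. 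First I would check that Step~2 captures the pre fixed-point condition exactly within the template class: by the definition of $\pre$ and the mutual-exclusivity and completeness assumptions on transitions, each valuation satisfying $\invmap(\frmloc)$ activates a \emph{unique} transition, so the conjunction of the $\cons^\trans$ over all $\trans$ is equivalent to $\pre(\usol) \sle \usol$ over the invariant region.

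Next I would argue that Steps~3 and~4 are equivalences rather than relaxations. Canonicalization divides both sides of each $\cons^\trans$ by $\exp(\rterm(\frmloc, \val))$, which is \emph{strictly positive} for every $\val$; since dividing an inequality by a positive quantity preserves it, the canonical constraint $\cancond(\guard, \precond^\paracond)$ is logically equivalent to $\cons^\trans$. For the quantifier elimination, Proposition~\ref{thm:cdt} is an ``iff'', so $\cancond(\guard, \precond^\paracond)$ holds exactly when (D1) and (D2) hold; Farkas' Lemma (Lemma~\ref{thm:farkas}) turns the cone condition (D1) into an equivalent finite linear system, and the double description method enumerates \emph{all} generators of the polytope $Q$, so that (D2) --- a statement about every point of $Q$ --- is faithfully encoded by finitely many convex inequalities, using convexity of $\exp(\cdot)$ to lift from generators to arbitrary convex combinations. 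Chaining these equivalences yields $\mathrm{Feas}(\Theta) = \mathcal F$ exactly.

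Completeness then follows from convexity. If any affine-exponent pre fixed-point exists, $\mathcal F \neq \emptyset$, so $\Theta$ is feasible; its objective $\exp(a_{\locin} \cdot \valin + b_{\locin})$ is bounded below by $0$, hence the infimum $v^\star$ is finite, and an $\epsilon$-optimal solution is one attaining objective value at most $v^\star + \epsilon$. By Theorem~\ref{lem:opt}, $\Theta$ is convex, so standard convex-optimization machinery (ellipsoid or interior-point methods) returns such a point in finite time. I expect the main obstacle to be establishing the \emph{exactness} of Step~4: one must ensure that replacing ``$\forall \val \in \guard$'' by a condition over only the generators of $Q$ (plus the cone condition on the recession directions) loses nothing, which relies on both directions of Proposition~\ref{thm:cdt} and on the completeness of the generator set. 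A related subtlety worth flagging is that $v^\star$ need not be \emph{attained} --- the feasible region can be unbounded in the exponent --- which is exactly why the theorem promises $\epsilon$-optimality rather than an exact optimum; the $\epsilon$ slack is inherent and absorbs this possible non-attainment rather than being a solver artifact.
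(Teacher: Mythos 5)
Your proposal is correct and follows essentially the same route as the paper, whose entire proof is the observation that Proposition~\ref{thm:cdt} (together with the other equivalence-preserving steps) exactly transforms the synthesis problem into the convex program $\Theta$, after which $\epsilon$-optimality comes from standard convex optimization. You have simply spelled out the chain of equivalences (positivity of the divisor in canonicalization, the ``iff'' in Proposition~\ref{thm:cdt}, exactness of Farkas' Lemma and of the generator enumeration) and added the worthwhile observation that the $\epsilon$ slack also absorbs possible non-attainment of the infimum, which the paper leaves implicit.
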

\begin{proof}
This follows from Proposition \ref{thm:cdt} that equivalently transforms the original synthesis into convex optimization.
\end{proof}

\hongfei{
\begin{remark}
The completeness is w.r.t exponential templates with affine exponent, i.e.~\compalgs can derive an approximately optimal template within any additive error bound.
Thus, the completeness is not related to decidability.
\end{remark}
}

\paragraph{Efficiency}
Theoretically, the costliest step of our algorithm is Step 3, because it requires the computation of decompositions of the polyhedra, which may cause exponential blow-up.
In practice, the constraint size after decomposition rarely explodes in real-world instances.
So, our algorithm inherits the well-known efficiency of convex programming.

\paragraph{Generality} Our algorithm is applicable to all cases in which we can expand ${\textstyle \mathbb E_{\valrd}}[\exp(\gamma_j\cdot \valrd)]=\prod_{\rdvar\in \rdvars}{{\textstyle {\mathbb E}_{r}}[\exp(\gamma_{j,r}\cdot r)]}$ into a simple closed form. Hence, it can handle all discrete distributions and many widely-used continuous distributions such as uniform distribution. For $r\sim \mathrm{uniform}[a,b]$, ${\textstyle {\mathbb E}_{r}}[\exp(\gamma_{j,r}\cdot r)]=\frac{1}{\gamma_{j,r}} \cdot (\exp({b\cdot \gamma_{j,r}})-\exp({a\cdot \gamma_{j,r}}))$ is the closed form.

\section{An Algorithmic Approach to \lprobabbr}
\label{sec:alglower}

In this section, we provide an efficient and automated algorithm for LQAVA over affine PTSs with affine invariants. Recall that in Section~\ref{sec:theory}, under the assumption of almost-sure termination, we succeeded in reducing the LQAVA problem to the synthesis of a post fixed-point (See Theorems~\ref{thm:tarski} and~\ref{thm:uniq}).

Unlike our algorithm for pre fixed-point synthesis (Section~\ref{sec:algupper}), exponential post fixed-point synthesis can no longer be transformed into convex optimization.
Instead, we propose a sound but incomplete algorithm that synthesizes exponential post fixed-point functions with linear exponents.
Our algorithm transforms the problem to linear programming by applying Jensen's inequality.

\begin{theorem}[Jensen's inequality~\cite{williams1991probability}]
\label{thm:jensen}
	For any convex function $f$ and random variable $X$, we have $\mathbb E[f(X)]\ge f(\mathbb E[X])$.
\end{theorem}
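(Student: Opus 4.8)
The plan is to use the classical supporting-line (subgradient) argument, which reduces the inequality to a single application of the linearity of expectation. The key geometric fact is that a convex function lies above each of its tangent (supporting) lines; I would invoke this property at the specific point $\mu := \mathbb E[X]$. Concretely, the first step is to establish that there exists a real constant $c$ (a subgradient of $f$ at $\mu$) such that the affine lower bound
$$f(x) \ge f(\mu) + c \cdot (x - \mu)$$
holds for every real $x$ in the support of $X$.

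The second step is then routine: I would substitute the random variable $X$ for $x$ and take expectations on both sides. Since the right-hand side is affine in $x$, linearity of expectation yields
$$\mathbb E[f(X)] \ge f(\mu) + c \cdot (\mathbb E[X] - \mu) = f(\mu),$$
where the last equality holds because $\mathbb E[X] - \mu = 0$ by the definition of $\mu$. Recalling that $\mu = \mathbb E[X]$, this is precisely $\mathbb E[f(X)] \ge f(\mathbb E[X])$, as claimed.

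The main obstacle is the first step, namely justifying the existence of the supporting line, since the second step is purely mechanical. For a convex function $f \colon \mathbb R \to \mathbb R$, this follows from the standard structural facts that the one-sided derivatives $f'_-(\mu)$ and $f'_+(\mu)$ exist and satisfy $f'_-(\mu) \le f'_+(\mu)$; any slope $c$ chosen from this interval then serves as a valid subgradient, and the claimed affine bound follows from the monotonicity of the difference quotients of a convex function. I would also flag a measurability and integrability caveat that is implicit in the statement: the substitution step presumes that $f(X)$ is integrable (or, alternatively, that the inequality is interpreted in the extended reals), which is exactly what is needed for $\mathbb E[f(X)]$ to be meaningful in the first place.
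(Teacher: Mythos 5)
Your proposal is correct: the supporting-line (subgradient) argument at $\mu = \mathbb{E}[X]$ followed by linearity of expectation is the standard proof of Jensen's inequality. The paper does not prove this statement at all --- it is quoted as a known result with a citation to Williams' textbook --- so there is no in-paper argument to compare against; your proof matches the classical one found in that reference, and your caveat about the integrability of $f(X)$ (or interpreting $\mathbb{E}[f(X)]$ in the extended reals) is a reasonable point that the paper's one-line statement leaves implicit.
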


\paragraph{The \loweralg Algorithm} Our algorithm synthesizes an exponential lower-bound for assertion violation in five steps:

\paragraph{Step 1 (Setting up templates)}
Similar to our previous algorithms in Section~\ref{sec:algupper}, the algorithm introduces unknown coefficients $\vec{\mathbf{a}_{\loc}},b_{\loc}$ for every location $\loc\in \locs$, and symbolically computes the template $\lsol(\loc,\val):=\exp(\mathbf{a}_{\loc}\cdot \val+b_{\loc})$ for every $\loc \not\in \{\locf, \loct\}$. As usual, it lets $\lsol(\loct,\val)\equiv 0$ and $\lsol(\locf,\val)\equiv 1.$
\begin{example}
\label{ex:lowerstep1}
We now illustrate our algorithm on the program in Figure~\ref{fig:pts23}.\
As in Example \ref{ex:soundstep1}, the algorithm introduces unknown coefficients $\mathbf{a}_{\loc_{\rm init}}, \mathbf{a}_{\loc_{\rm switch}}, \mathbf{a}_{\loc_{\rm assert}}$ and $b_{\loc_{\rm init}},b_{\loc_{\rm assert}},b_{\loc_{\rm switch}}$.
\end{example}

\paragraph{Step 2 (Bounding)}
Note that Theorem~\ref{thm:uniq} requires that $\lsol$ be bounded.
To ensure this, the algorithm introduces a new unknown coefficient $M$ and generates the following constraint for every $\loc \in \locs$:
$$\textstyle \forall \val \in \mathbb R^\vars.~~ \val\models \invmap(\loc)\Rightarrow \mathbf{a}_{\loc}\cdot \val+b_{\loc}\le M.$$

\begin{example}
\label{ex:lowerstep2}
Continuing from Example \ref{ex:lowerstep1}, the algorithm introduces a new unknown coefficient $M$ and sets up a constraint for every location. For example, for $\locin$, whose invariant is $(x\le 100),$ it generates the following constraint:
$$\textstyle \forall x \in \mathbb R.~~x\le 100 \Rightarrow \mathbf{a}_{\locin}+b_{\locin}\cdot x\le M.$$
\end{example}

\paragraph{Step 3 (Collecting constraints and canonicalization)} Similar to Section~\ref{sec:soundcompalg},
the algorithm generates constraints that model the post fixed-point conditions.
For every transition,
the constraint is identical with Equation~\ref{formula:cons}, except that $\ge$ is replaced with $\le$. The algorithm then applies the same canonicalization as in Step 3 of Section~\ref{sec:soundcompalg}, i.e.~the constraint is transformed into the canonical form $Con(\guard,\postcond^\paracond)$, in which $\postcond^\paracond$ is $\precond^\paracond$ with $\le$ replaced by $\ge$.

\begin{example}
\label{ex:lowerstep3}
Continuing with Example \ref{ex:lowerstep2},
consider the unique transition out of $\loc_{\rm switch}.$
The algorithm generates
\begin{align*}
&\textstyle\forall x \in \mathbb R.~~ x\le 99\Rightarrow\\
&\textstyle\quad \lsol(\loc_{\rm switch},x)\le p\cdot \lsol(\loct,x) + 0.75(1-p) \cdot  \lsol(\locin,x+1)\\ &\textstyle\quad \quad \quad\quad\quad\quad \quad\quad+0.25(1-p) \cdot  \lsol(\locin,x-1).
\end{align*}
It writes this constraint in canonical form $\cancond(\guard,\postcond^\paracond),$ where $\guard=(x\le 99),$ and
$\postcond^\paracond$ is as follows:
$$\textstyle 0.25 \cdot (1-p) \cdot (3 \cdot \exp(\alpha\cdot x + \beta_1)+ \exp(\alpha\cdot x+\beta_2))\ge 1$$
in which $\alpha=\mathbf{a}_{\loc_{\rm init}}-\mathbf{a}_{\loc_{\rm switch}},\beta_1=\mathbf{a}_{\loc_{\rm init}}+b_{\loc_{\rm init}}$ and $\beta_2=b_{\loc_{\rm init}}-\mathbf{a}_{\loc_{\rm init}}$ are affine expressions over unknown coefficients.
\end{example}

\paragraph{Step 4 (Applying Jensen's inequality)}
Given a canonical constraint $\cancond(\guard,\postcond^\paracond)$, the algorithm strengthens $\postcond^\paracond$ to a conjunction of linear inequalities.
For $\postcond^\paracond$ as below:
$$
\textstyle \sum_{j=1}^{k}{p_{j}\cdot \exp\left(\alpha_{j}\cdot \val + \beta_{j}\right)\cdot {\textstyle \mathop{\mathbb E}_{\valrd}}\left[\exp\left(\gamma_j\cdot \valrd\right)\right]}\ge 1
$$
it divides both sides by $Q:={\textstyle\sum_{j=1}^k p_{j}}$, and applies Theorem~\ref{thm:jensen}, deriving the strengthened form $\relaxpostcond$:
$$
\textstyle Q^{-1} \cdot \sum_{j=1}^{k}{p_j} \cdot \left(\alpha_j\cdot \val+\beta_j+\gamma_j\cdot
{\textstyle \mathop{\mathbb E}_{\valrd}}\left[\valrd\right]\right)\ge -\ln Q
$$
\noindent Since $\alpha_j,\beta_j,\gamma_j$ are affine expressions, $\relaxpostcond$ is a linear inequality over our unknown coefficients. Note that this strengthening is sound but incomplete, i.e.~if $\relaxpostcond$ is satisfied, then so is $\postcond^\paracond$, because we can apply Jensen's inequality with exponential $f.$ However, the converse may not hold.

\begin{example} Continuing with Example \ref{ex:lowerstep3}, as per Jensen's inequality, the algorithm derives the strengthened form $\relaxpostcond$:
$$
\textstyle 0.75 \cdot (\alpha\cdot x + \beta_1)+ 0.25 \cdot (\alpha\cdot x+\beta_2)\ge -\ln (1-p).
$$
\end{example}

\paragraph{Step 5 (Farkas' Lemma and LP)}
The algorithm directly applies Lemma~\ref{thm:farkas} to convert every constraint generated in Steps 2 and 4 into an equivalent conjunction of linear constraints over the unknown coefficients.
Finally, it uses linear programming to solve these linear constraints. In order to obtain as tight a lower-bound as possible, the LP instance is solved with the objective of maximizing $\mathbf{a}_{\locin}\cdot \valin + b_{\locin}.$ Finally, the algorithm plugs the LP solutions back into the template and reports $\exp(\mathbf{a}_{\locin}\cdot \valin + b_{\locin})$ as the desired bound.

\begin{theorem}[Soundness]
\label{thm:soundlower}
	Given an affine almost-surely terminating \modelnames $\pts$  and an affine invariant $\invmap,$ the solution of the algorithm above is a bounded post fixed-point, and $\exp(\mathbf{a}_{\locin}\cdot \valin + b_{\locin})$ is a lower-bound on assertion violation probability.
\end{theorem}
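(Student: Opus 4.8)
The plan is to split the statement into its two assertions and handle them in order: first, that the template $\lsol$ produced by \loweralg is a bounded post fixed-point of $\pre^{M'}$ in some lattice $\mathcal K^{M'}$ with $M'\ge 1$; second, that boundedness together with almost-sure termination forces $\lsol\sle\epf$ pointwise, so that evaluating at $(\locin,\valin)$ yields the claimed lower-bound. The second half is a short consequence of Theorems~\ref{thm:uniq} and~\ref{thm:tarski}, so the real work lies in certifying the post fixed-point property.

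For boundedness, I would take $M':=\max\{1,\exp(M)\}$, where $M$ is the value assigned by the LP to the auxiliary variable introduced in Step~2. The Step~2 constraints guarantee $\mathbf{a}_{\loc}\cdot\val+b_{\loc}\le M$ on $\invmap(\loc)$, and since $\invmap$ is an invariant we have $\pstates\subseteq\{(\loc,\val)\mid\val\models\invmap(\loc)\}$; hence for every reachable $(\loc,\val)$ with $\loc\notin\{\loct,\locf\}$, $0<\lsol(\loc,\val)=\exp(\mathbf{a}_{\loc}\cdot\val+b_{\loc})\le\exp(M)\le M'$. Combined with the hard-wired boundary values $\lsol(\loct,\cdot)=0$ and $\lsol(\locf,\cdot)=1\le M'$, this places $\lsol\in\mathcal K^{M'}$.

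For the post fixed-point property I would unfold $\pre^{M'}$. The condition $\pre^{M'}(\lsol)\sge\lsol$ means $\lsol(\pstate)\le\pre^{M'}(\lsol)(\pstate)$ for all $\pstate\in\pstates$; at $\loct$ and $\locf$ both sides coincide by construction, so only non-terminal locations matter. For a reachable $(\frmloc,\val)$ with $\frmloc\notin\{\loct,\locf\}$, the additional assumptions (mutual exclusivity and completeness of guards) select a unique enabled transition $\trans$, and the Step~3 constraint for $\trans$ is exactly $\lsol(\frmloc,\val)\le\sum_j p_j\,\mathbb E_{\valrd}[\lsol(\toloc_j,\transupdate_j(\val,\valrd))]=\pre^{M'}(\lsol)(\frmloc,\val)$. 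It therefore suffices to show that the LP solution satisfies all these constraints, which reduces to the soundness of \loweralg's two rewrites. The Farkas step (Theorem~\ref{thm:farkas}) is an equivalence and preserves meaning exactly; the Jensen step only needs the implication $\relaxpostcond\Rightarrow\postcond^\paracond$. I would prove this implication by applying Theorem~\ref{thm:jensen} to $\exp$ twice: once to move the inner expectation inside via $\mathbb E_{\valrd}[\exp(\gamma_j\cdot\valrd)]\ge\exp(\gamma_j\cdot\mathbb E_{\valrd}[\valrd])$, and once to the convex combination with weights $p_j/Q$ where $Q:=\sum_j p_j$, giving $\sum_j(p_j/Q)\exp(h_j)\ge\exp\!\big(\sum_j(p_j/Q)h_j\big)$ with $h_j:=\alpha_j\cdot\val+\beta_j+\gamma_j\cdot\mathbb E_{\valrd}[\valrd]$; assembling the two inequalities shows that $\relaxpostcond$ (namely $Q^{-1}\sum_j p_j h_j\ge-\ln Q$) forces $\postcond^\paracond$, hence the $\le$-form post fixed-point constraint holds at $(\frmloc,\val)$.

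Once $\lsol$ is known to be a post fixed-point in $\mathcal K^{M'}$ with $M'\ge1$, I would close as follows. Almost-sure termination makes Theorem~\ref{thm:uniq} applicable, giving $\epf=\gfp\,\pre^{M'}$; by the Knaster--Tarski characterization~\eqref{formula:upper}, $\gfp\,\pre^{M'}$ is the supremum of all post fixed-points, so every post fixed-point, and in particular $\lsol$, satisfies $\lsol\sle\epf$. Evaluating at the initial state yields $\exp(\mathbf{a}_{\locin}\cdot\valin+b_{\locin})=\lsol(\locin,\valin)\le\epf(\locin,\valin)$, the assertion-violation probability, as required. I expect the main obstacle to be the careful bookkeeping in the double Jensen argument (in particular the normalizing constant $Q$ and the two directions of the inequalities), since the remaining ingredients are a routine use of the invariant, the exact equivalence in Farkas' Lemma, and the fixed-point theorems already established.
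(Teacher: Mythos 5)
Your proposal is correct and follows essentially the same route as the paper: boundedness from the Step~2 constraints, soundness of the Jensen strengthening via a double application of Theorem~\ref{thm:jensen} (once for $\mathbb E_{\valrd}[\exp(\gamma_j\cdot\valrd)]\ge\exp(\gamma_j\cdot\mathbb E_{\valrd}[\valrd])$ and once for the convex combination with weights $p_j/Q$, exactly as in the paper's Lemma on soundness of strengthening), exactness of the Farkas step, and then Theorem~\ref{thm:uniq} together with the Knaster--Tarski characterization~\eqref{formula:upper} to conclude $\lsol\sle\epf$. Your write-up is somewhat more explicit than the paper's (e.g.\ the choice $M'=\max\{1,\exp(M)\}$ and the use of the invariant to restrict to reachable states), but the underlying argument is the same.
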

\begin{proof}[Proof Sketch]
The constraints in Step 2 ensure the boundedness of $\lsol.$
By Theorem \ref{thm:jensen}, the strengthening in Step 4 is sound.
By (\ref{formula:upper}) in Theorem \ref{thm:tarski}, the desired result is obtained.
See Appendix \ref{appendix:alglower} for details.
\end{proof}

\paragraph{Complexity} We now analyze the time complexity of \loweralg.
In Step 5 we apply Farkas' Lemma, which takes polynomial time. It is straightforward to verify that the symbolic computations in all other steps take polynomial time, as well.
Finally, since LP can be solved in polynomial time, we conclude that our algorithm runs in polynomial time with respect to the size of the input \modelname{} and invariant.

\begin{remark}[The Polynomial Case]\label{rem:polylower}
Similar to Remark~\ref{rem:polyheuristic}, \loweralgs can also be extended to polynomial exponents via Positivstellens\"{a}tze and semidefinite programming.
\end{remark}

\section{Experimental Results}
\label{sec:evaluation}

\paragraph{Implementation} We implemented our algorithms in C++
and Matlab, and used PPL 1.2 \cite{DBLP:conf/sas/BagnaraRZH02} for Minkowski decompositions and CVX 2.2~\cite{cvx,gb08} for linear/convex optimization. All results were obtained on an Intel Core i7-8700K (3.7 GHz) machine with 32 GB of memory, running MS Windows 10.

\medskip
\paragraph{Benchmarks} We consider the following benchmarks from a variety of application domains~\cite{SriramCAV,DBLP:journals/toplas/ChatterjeeFNH18,ijcai18,pldi18,ChatterjeeNZ2017,DBLP:conf/oopsla/CarbinMR13,DBLP:journals/pacmpl/SmithHA19}:
\begin{itemize}[leftmargin=*]
	\item \textsc{Deviation}: In these benchmarks, taken from~\cite{SriramCAV}, the goal is to infer upper-bounds on the probability of large deviation of a program variable from its expected value upon termination. We compare the bounds obtained by our algorithms with those provided by~\cite{SriramCAV}.

	\item \textsc{Concentration}:  In this category, the goal is to derive upper-bounds on the probability that a probabilistic program does not terminate within a given number of steps. The programs are taken from~\cite{DBLP:journals/toplas/ChatterjeeFNH18,pldi18}, and we compare our results with those of~\cite{DBLP:journals/toplas/ChatterjeeFNH18}.
	\item \textsc{StoInv}: Stochastic invariants are closely related to and useful for deriving upper-bounds on the assertion violation probability. We take three benchmarks, namely \textsc{1DWalk}, \textsc{2DWalk}, and \textsc{3DWalk}, from \cite{ChatterjeeNZ2017} and also include our motivating example \textsc{Race} of Section~\ref{sec:rabbit}. We compare our derived upper-bounds with those of \cite{ChatterjeeNZ2017}.

	\item \textsc{Hardware}: These benchmarks require lower-bounds on the probability that a program run on unreliable hardware terminates successfully, i.e.~runs without errors until termination. The two benchmarks \textsc{Ref} and  \textsc{Newton} are taken from \cite{DBLP:conf/oopsla/CarbinMR13,DBLP:journals/pacmpl/SmithHA19}, whereas \textsc{M1Dwalk} is our motivating example in Section \ref{sec:hardware}. We made necessary abstractions to make the program fit into our framework, but we guarentee that the lower bound for abstracted program is also feasible for the original. When the data is available, we compare our derived lower-bounds with those from \cite{DBLP:conf/oopsla/CarbinMR13,DBLP:journals/pacmpl/SmithHA19}.
\end{itemize}
See Appendix \ref{appendix:evaluation} for details of benchmarks.

\medskip
\paragraph{Invariants and Termination} We manually derived affine invariants for the input PTSs. Alternatively, invariant generation, which is an orthogonal problem to ours, can be automated by approaches such as~\cite{Sting,DBLP:conf/cav/JeannetM09,DBLP:conf/pldi/Chatterjee0GG20,DBLP:conf/pldi/YaoRWJG20}. Similarly, we proved almost-sure termination by manually constructing ranking supermartingales~\cite{SriramCAV,DBLP:journals/toplas/ChatterjeeFNH18}. Proving almost-sure termination can also be automated by previous works such as~\cite{SriramCAV,DBLP:journals/toplas/ChatterjeeFNH18,ChatterjeeFG16,DBLP:conf/aplas/HuangFC18,mciver2017new}. 

\renewcommand{\arraystretch}{1.2}
\begin{table*}
	\vspace{-.5em}
\begin{footnotesize}
\begin{tabular}{|c|c|c||c|c||c|c||c|c|}
\hline
\multicolumn{2}{|c|}{\multirow{2}{*}{\textbf{Benchmark}}} & \multirow{2}{*}{\textbf{Parameters}} & \multicolumn{2}{c||}{\textbf{Algorithm of Section~\ref{sec:heuristicalg}}} & \multicolumn{2}{c||}{\textbf{Algorithm of Section~\ref{sec:soundcompalg}}} & \multirow{2}{*}{\textbf{Previous Results}} & \multirow{2}{*}{\textbf{Ratio}} \\ \cline{4-7}
\multicolumn{2}{|l|}{}                             &                             & \textbf{Upper-bound}          & \textbf{Time (s)}          & \textbf{Upper-bound}          & \textbf{Time (s)}          &                           &                        \\ \hline
\hline
\multirow{6}{*}{\begin{turn}{90}\textsc{Deviation}\end{turn}}
& \multirow{3}{*}{\textsc{RdAdder}}
& {$\Pr[X-\mathbb E[X]\ge 25]$} & $7.54\cdot 10^{-2}$  & $57.45$   & $7.43\cdot 10^{-2}$  & $0.95$    & $8.00 \cdot 10^{-2}$ & $1.07$\\
\cline{3-9}
& & {$\Pr[X-\mathbb E[X]\ge 50]$} & {$3.95\cdot 10^{-5}$}  & $58.05$ &  {$3.54\cdot 10^{-5}$} & $0.99$   & $4.54\cdot 10^{-5}$  & $1.28$\\
\cline{3-9}
& & {$\Pr[X-\mathbb E[X]\ge 75]$} & $1.44\cdot 10^{-10}$  & $57.45$   &  $9.17\cdot 10^{-11}$  & $0.91$   & $1.69\cdot 10^{-10}$ & $1.84$\\
\cline{2-9}
& \multirow{3}{*}{\textsc{Robot}}
& {$\Pr[X-\mathbb E[X]\ge 1.8]$} &  $1.66 \cdot 10^{-1}$ &  $127.00$  & $9.64\cdot 10^{-6}$   & $1.72$     & $2.04\cdot 10^{-5}$ & $2.11$ \\
\cline{3-9}
& & {$\Pr[X-\mathbb E[X]\ge 2.0]$} &  $6.81\cdot 10^{-3}$ & $124.02$   & $4.78\cdot 10^{-7}$   & $1.27$  & $1.62\cdot 10^{-6}$ & $3.39$ \\
\cline{3-9}
& & {$\Pr[X-\mathbb E[X]\ge 2.2]$} &  $5.66\cdot 10^{-5}$ &  $125.72$  &  $1.51\cdot 10^{-8}$  &  $1.24$ &  $9.85\cdot 10^{-8}$ & $6.52$\\
\hline
\multirow{9}{*}{\begin{turn}{90}\textsc{Concentration}\end{turn}}
& \multirow{3}{*}{\textsc{Coupon}}
& $\Pr[T>100]$ & $1.02 \cdot 10^{-1}$  &  $80.52$  &  $7.01\cdot 10^{-5}$  & $1.24$   &  $6.00 \cdot 10^{-3}$ & $85.59$ \\
\cline{3-9}
& & $\Pr[T>300]$ & $4.02\cdot 10^{-5}$  & $81.41$   &  $7.44\cdot 10^{-22}$  & $1.41$   & $9.01\cdot 10^{-10}$ & $1.21\cdot 10^{12}$\\
\cline{3-9}
& & $\Pr[T>500]$ & $1.40\cdot 10^{-8}$  & $80.80$   &  $4.01\cdot 10^{-40}$  & $1.23$   &  $1.05\cdot 10^{-16}$ & $2.61\cdot 10^{23}$\\
\cline{2-9}
& \multirow{3}{*}{\textsc{Prspeed}}
& $\Pr[T>150]$  & $5.42\cdot 10^{-7}$  &  $108.66$  & $7.43\cdot 10^{-23}$   &  $1.44$   & $5.00 \cdot 10^{-3}$ & $6.72\cdot 10^{19}$\\
\cline{3-9}
& & $\Pr[T>200]$ & $1.89\cdot 10^{-10}$ & $106.82$   &  $8.03\cdot 10^{-36}$  &  $1.19$  &  $2.59\cdot 10^{-5}$ & $3.23\cdot 10^{30}$\\
\cline{3-9}
& & $\Pr[T>250]$ & $5.65\cdot 10^{-14}$ & $108.09$   &  $2.71\cdot 10^{-49}$  & $1.09$     & $9.17\cdot 10^{-8}$ & $3.38\cdot 10^{41}$\\
\cline{2-9}
& \multirow{3}{*}{\textsc{Rdwalk}}
& $\Pr[T>400]$ &  $1.85\cdot 10^{-3}$ &  $44.44$  & $2.12\cdot 10^{-7}$   &  $0.55$   & $3.18\cdot 10^{-6}$ & $17.19$\\
\cline{3-9}
& & $\Pr[T>500]$ &  $1.43\cdot 10^{-5}$ &  $50.89$  &  $1.57\cdot 10^{-12}$  &  $0.58$ & $1.40\cdot 10^{-10}$& $89.17$\\
\cline{3-9}
& & $\Pr[T>600]$ &  $5.47\cdot 10^{-8}$ &  $49.16$  &  $4.81\cdot 10^{-18}$  & $0.66$ & $2.68\cdot 10^{-15}$ & $557.17$\\
\hline
\multirow{12}{*}{\begin{turn}{90}\textsc{StoInv}\end{turn}}
& \multirow{3}{*}{\textsc{1DWalk}}
& $x=10$ & $1.73\cdot 10^{-64}$  &  $48.44$  & $7.82\cdot 10^{-208}$ &  $1.19$  &  $5.1\cdot 10^{-5}$& $6.52\cdot 10^{202}$\\
\cline{3-9}
& & $x=50$ &  $6.77\cdot 10^{-62}$ &  $41.86$  &  $1.79\cdot 10^{-199}$  &  $1.08$   & $1.0\cdot 10^{-4}$ & $5.59\cdot 10^{194}$\\
\cline{3-9}
& & $x=100$ & $1.04\cdot 10^{-58}$ &  $41.18$  &  $5.03\cdot 10^{-189}$  &  $0.97$  &  $2.5\cdot 10^{-4}$ & $4.97\cdot 10^{184}$\\
\cline{2-9}
& \multirow{3}{*}{\textsc{2DWalk}}
& $(x,y)=(1000,10)$ & $4.14\cdot 10^{-73}$ & $53.69$   &  $1\cdot 10^{-655}$  &  $1.35$ & $2.4\cdot 10^{-11}$ & $2.4\cdot 10^{644}$\\
\cline{3-9}
& & $(x,y)=(500,40)$  & $6.43\cdot 10^{-37}$  &  $53.00$  &  $9.61\cdot 10^{-278}$  & $1.03$   &  $5.5\cdot 10^{-4}$ & $5.72\cdot 10^{273}$\\
\cline{3-9}
& & $(x,y)=(400,50)$ & $1.11\cdot 10^{-29}$ &  $52.58$  &  $1.02\cdot 10^{-218}$  & $1.37$   & $1.9\cdot 10^{-2}$ & $1.86\cdot 10^{216}$\\
\cline{2-9}
& \multirow{3}{*}{\textsc{3DWalk}}
& $\!(x,\!y,\!z)\!=\!(100,\!100,\!100)\!$ &  $4.83\cdot 10^{-281}$ & $85.07$   & $1\cdot 10^{-3230}$   & $1.20$   & $4.4\cdot 10^{-17}$ & $4.4\cdot 10^{3213}$ \\
\cline{3-9}
& & $\!(x,\!y,\!z)\!=\!(100,\!150,\!200)\!$ & $6.66\cdot 10^{-221}$ &  $84.86$  & $1\cdot 10^{-2538}$   & $1.25$  & $2.9\cdot 10^{-9}$ & $2.9\cdot 10^{2529}$\\
\cline{3-9}
& & $\!(x,\!y,\!z)\!=\!(300,\!100,\!150)\!$ & $7.86\cdot 10^{-181}$  & $83.28$   & $1\cdot 10^{-2076}$   & 1.37  & $1.3\cdot 10^{-7}$ & $1.3\cdot 10^{2069}$\\
\cline{2-9}
& \multirow{3}{*}{\textsc{Race}} & $(x,y)=(40,0)$ & $9.08\cdot 10^{-4}$  &  $55.24$  &  $1.52\cdot 10^{-7}$  & $0.89$  &  No result & --\\
\cline{3-9}
& & $(x,y)=(35,0)$ & $6.84\cdot 10^{-3}$  &  $54.23$  & $2.16\cdot 10^{-5}$   &  $0.78$  & No result & -- \\
\cline{3-9}
& & $(x,y)=(45,0)$ & $6.65\cdot 10^{-5}$  &  $56.39$  &  $8.65\cdot 10^{-11}$  & $0.67$   & No result & -- \\
\hline
\end{tabular}
\end{footnotesize}
\caption{Our Experimental Results for Upper-bound Benchmarks. The last column is $\frac{\text{previous bound}}{\text{our bound}}$.}
\label{table:exp}
\end{table*}
\begin{table*}
	\vspace{-1.7em}
\begin{footnotesize}
\begin{tabular}{|c|c|c||c|c||c|c|}
\hline
\multicolumn{2}{|c|}{\multirow{2}{*}{\textbf{Benchmark}}} & \multicolumn{1}{c||}{\multirow{2}{*}{\textbf{Parameters}}} & \multicolumn{2}{c||}{\textbf{Algorithm of Section~\ref{sec:alglower}}}                         & \multicolumn{1}{c|}{\multirow{2}{*}{\textbf{Previous Results}}} & \multicolumn{1}{c|}{\multirow{2}{*}{\textbf{Ratio}}} \\ \cline{4-5}
\multicolumn{2}{|c|}{}                           & \multicolumn{1}{c||}{}                            & \multicolumn{1}{c|}{\textbf{Lower-bound}} & \multicolumn{1}{c||}{\textbf{Time (s)}} & \multicolumn{1}{c|}{}                          & \multicolumn{1}{c|}{}                       \\ \hline \hline
\multirow{9}{*}{\begin{turn}{90}\textsc{Hardware}\end{turn}}
& \multirow{3}{*}{\textsc{M1DWalk}} & $p=10^{-7}$  & 0.999984 & 0.64 & Not applicable & -- \\
\cline{3-7}
& & $p=10^{-5}$  & $0.998401$ & $0.73$ &  Not applicable & --\\
\cline{3-7}
& & $p=10^{-4}$   & $0.984126$ & $0.54$ &  Not applicable & --\\
\cline{2-7}
 & \multirow{3}{*}{\textsc{Newton}} & $p=5\cdot 10^{-4}$  & $0.728492$ & $0.72$ &  No result & --\\
\cline{3-7}
& & $p=10^{-3}$ & $0.534989$ & $1.20$ & No result & --\\
\cline{3-7}
& & $p=1.5\cdot 10^{-3}$ & $0.392823$ & $0.67$ & No result & --\\
\cline{2-7}
& \multirow{3}{*}{\textsc{Ref}}
& $p=10^{-7}$ & 0.998463 & 1.03 & \makecell{$0.994885$ in \cite{DBLP:conf/oopsla/CarbinMR13}\\$0.992832$ in \cite{DBLP:journals/pacmpl/SmithHA19}} & \makecell{$3.33$\\$4.66$}\\
\cline{3-7}
& & $p=10^{-6}$ & 0.984738 & 1.03 &  No result & -- \\
\cline{3-7}
& & $p=10^{-5}$ & 0.857443 & 1.14 & No result & -- \\
\hline
\end{tabular}
\end{footnotesize}
\caption{Our Experimental Results for Lower-bound Benchmarks. The last column is $\frac{1-\text{previous bound}}{1-\text{our bound}}.$}
\vspace{-1.5em}
\label{table:exp2}
\end{table*}
\renewcommand{\arraystretch}{1}

\paragraph{Parameters}
Each benchmark set has distinct parameters:
For \textsc{Deviation} and \textsc{Concentration} the parameter is the deviation bound.
For \textsc{StoInv}, the parameters are the initial values of program variables.
For \textsc{Hardware}, the parameter is the probability of failure in each iteration.

\paragraph{Results} Our experimental results are summarized in Tables~\ref{table:exp} and~\ref{table:exp2}. ``No result'' means there is no previous experimental result reported and no available implementations to obtain such results. ``Not applicable'' means the benchmark is outside the theoretical framework of the previous work.
Note that in the \textsc{Hardware} examples, the data was only available for $p=10^{-7}$ in the literature and we could not find a public implementation of the approach.
See Appendix~\ref{appendix:evaluation} for more technical details.

\paragraph{Discussion} The experimental results show that our upper-bounds significantly beat the previous methods. Our algorithm from Section~\ref{sec:soundcompalg}, which is complete, consistently and significantly outperforms previous methods on all the benchmarks. The ratio of the bounds ranges from $1.07$ to $1.3 \cdot 10^{2069},$ i.e.~$2069$ orders of magnitude! Moreover, it achieves this in a maximum runtime of 1.72 seconds, which demonstrates its efficiency in practice. On the other hand,  our other algorithm (Section~\ref{sec:heuristicalg}), which is provably polynomial-time but not complete, synthesizes slightly looser bounds than~\cite{SriramCAV}  in a number of cases. We believe this is because \cite{SriramCAV} is specific to probabilistic programs with a fixed number of iterations, while our algorithm is applicable to general probabilistic programs.
In case of lower-bounds, we are providing the first automated algorithm. As such, there is very little data available from previous sources (i.e.~only for \textsc{Ref}). In this case, we also beat previous methods by a factor of $3.33$.

\section{Related Works} \label{sec:rel}

\paragraph{Probability Bounds for Assertion Violation}
This problem was first considered in~\cite{SriramCAV},
where it was shown that exponentially-decreasing upper bounds for the probability of large deviation from expected values can be derived through concentration inequalities and automatically generated using supermartingales.
Then,~\cite{DBLP:journals/toplas/ChatterjeeFNH18,ChatterjeeFG16} introduced a sound approach for deriving exponentially-decreasing upper bounds for the concencentration of termination time
through concentration inequalities, and developed automated algorithms through linear and polynomial ranking supermartingales.
For probabilistic programs that may not have exponentially decreasing concentration, sound approaches for deriving polynomial and square-root reciprocal upper bounds are introduced in ~\cite{DBLP:journals/corr/ChatterjeeF17, DBLP:conf/aplas/HuangFC18,DBLP:conf/tacas/KuraUH19,DBLP:journals/corr/abs-2001-10150}.
QAVA was formally proposed in~\cite{ChatterjeeNZ2017} as \emph{stochastic invariants},
where concentration inequalities were utilized to derive upper bounds for the probability of assertion violation and the synthesis of linear repulsive ranking supermartingales was adopted as the main algorithmic technique.
Later, probabilistic assertion violation analysis was considered as  \emph{accuracy analysis} in~\cite{DBLP:journals/pacmpl/SmithHA19} and an automata-based algorithm was proposed for loops with fixed number of iterations.
In our approach, we introduce novel fixed-point theorems for reasoning about both probability upper and lower bounds, and then develop new algorithmic techniques for synthesizing exponential templates that represent pre and post fixed-points.
Hence, compared with the above previous results, we have the following novelties:
\begin{itemize} [leftmargin=*]
\item our method is based on new insights in fixed-point theory rather than concentration inequalities or automata theory;
\item our approach derives both upper and lower bounds, while previous work only derive upper bounds;
\item we consider exponential templates that best match the situation where assertion violation probability decreases exponentially w.r.t certain amount.
\item we devise new algorithms for solving the exponential templates, including an algorithm that provides completeness in solving the template when the probabilistic program is affine and the exponent in the template is linear;
\item we prove in theory that the bounds generated by our approach is surely better than those from~\cite{ChatterjeeNZ2017};
\item the experimental results show that the bounds generated by our approaches are much better than previous results.
\end{itemize}

\paragraph{Expectation Bounds} There are also many results on expectation bounds for probabilistic programs, such as those based on fixed-point theorems~\cite{pldi18,DBLP:conf/pldi/Wang0GCQS19}, optional stopping theorems~\cite{ijcai18,DBLP:conf/pldi/Wang0GCQS19,ChatterjeeF19} and limit characterization~\cite{DBLP:journals/jacm/KaminskiKMO18,OLKMLICS2016}. Although assertion violation probabilities can be treated as expectation of indicator random variables that represent reachability to assertion violation, there are fundamental differences between our approach and these results.
\begin{itemize}[leftmargin=*]
\item Compared with the results using fixed-point theorems  (e.g. \cite{pldi18,DBLP:conf/pldi/Wang0GCQS19}),
the main strengths of our approach are: (i) we develop new fixed-point theorems that can derive both upper and lower bounds, while the classical least-fixed-point characterization only provides upper bounds;
and (ii) we consider exponential templates and devise algorithmic approaches that can solve them with completeness, while previous results only consider polynomial templates.
\item On the other hand, the results using optional stopping theorems (e.g.~\cite{ijcai18,DBLP:conf/pldi/Wang0GCQS19,ChatterjeeF19}) are difficult to apply to probability bounds of assertion violation. This is because in optional stopping theorems, one usually needs to interpret the random variable $X_T$ w.r.t a stochastic process $\Gamma=X_0, X_1, \dots$ and a stopping time $T$, but for assertion violation it is difficult to find a suitable interpretation for $X_T$ where the stochastic process
$\Gamma$ is typically defined by a template $\eta$ (i.e.~ $X_n:=\eta(\hat{v}_n)$ where $\hat{v}_n$ is the valuation at $n$-th step).

\item Finally, the results using limit characterization~\cite{DBLP:journals/jacm/KaminskiKMO18,OLKMLICS2016} require to build an infinite sequence of expressions that converges to certain limit. As such, they are difficult to automate. In contrast, our approach is entirely automated by constructing templates at each program counter and reducing the problem to optimization tasks.
\end{itemize}

\smallskip
\noindent{\textbf{Probability Bounds in Hybrid Systems.}} There are also several results that consider concentration bounds for hybrid systems~\cite{DBLP:journals/ijrr/SteinhardtT12,DBLP:conf/cav/FengC00Z20}.
~\cite{DBLP:journals/ijrr/SteinhardtT12} also considers the synthesis of exponential templates.
However, it only considers exponential templates in a very specific form, i.e~ the exponent is a positive semidefinite quadratic polynomial. In contrast, we use Hoeffding's Lemma and  Jensen's inequality to handle exponents in general form, and a novel convex optimization technique to completely solve the case that the both exponent and the underlying probabilistic program are affine.
\cite{DBLP:conf/cav/FengC00Z20} considers concentration bounds of stochastic differential equations and reduces the problem to semidefinite programming. Thus it is completely different from our approach.

\section{Conclusion and Future work}

In this work, we considered the problem of deriving quantitative bounds for assertion violation probabilities in probabilistic programs.
We established novel fixed-point theorems for upper and lower bounds on the assertion violation probability and presented three algorithms for deriving bounds in exponential form, one through RepRSMs and Hoeffding's Lemma, one through convex programming, and one through Jensen's inequality.
The experimental results show that our derived upper and lower bounds are much tighter than previous results.
An interesting direction for future work is to explore other, perhaps more expressive, forms of bounds.
Another future direction is to study compositional verification methods for bounding assertion violation probabilities.

\bibliography{PL}
\balance

\clearpage
\clubpenalty1000
\widowpenalty1000
\appendix
  \section{Appendix for Section \ref{sec:pts}}
\label{appendix:preliminary}

\subsection{Formal Definition of the \modelnames process}
\begin{definition}[\textit{\modelnames Process}]\label{def:ptsproc}
Let $\pts$ be a \modelname{}. Suppose that $\{\hat{\mathbf{u}}_n[r]\}_{n\ge 0, r\in\rdvars}$ is an independent collection of random variables such that each $\hat{\mathbf{u}}_n[r]$ is the random variable that observes the probability distribution $\rdvarjdis(r)$ and represents the sampled value for the sampling variable $r$ at the $n$th step.

The stochastic process $\trace$ induced by $\pts$ is a Markov process. It is an infinite sequence $\{\hat{\pstate}_n\}_{n\ge 0}$ of random variables such that (i) each $\hat{\pstate}_n$  equals $(\hat{\loc}_n, \hat{\val}_n)$ where $\hat{\loc}_n$ and $\hat{\val}_n$ are the random variables that represent the current location and resp. the current valuation for program variables at the $n$th step, and (ii) the random variables $\hat{\pstate}_n$ are inductively defined as follows:
\begin{itemize}
\item {\em Initial Step.} $\hat{\pstate}_0=(\hat{\loc}_{0}, \hat{\val}_{0}):=(\locin, \valin)$ (i.e.~a constant random variable).
\item {\em Inductive Step.} for each $n\ge 0$, we have $\hat{\pstate}_{n+1}=(\hat{\loc}_{n+1}, \hat{\val}_{n+1})$ where $(\hat{\loc}_{n+1}, \hat{\val}_{n+1})$ is defined as follows:
\begin{itemize}
\item if $\hat{\loc}_n\not\in \{\loct, \locf\}$, then we have exactly one transition $\langle\frmloc,\transcond,\fork_{1},\dots, \fork_{k}\rangle$ such that  $\hat{\loc}_n=\frmloc$ and $\hat{\val}_n\models\transcond$. In this case, a fork $\fork_{j}=\langle \toloc_{j}, \transprob_{j}, \transupdate_{j}\rangle$ is chosen with probability $\transprob_{j}$ and we have $(\hat{\loc}_{n+1}, \hat{\val}_{n+1}) = (\toloc_{j}, \transupdate_{j}(\hat{\val}_{n}, \hat{\mathbf{u}}_n))$;
\item if $\hat{\loc}_n\in \{\loct, \locf\}$ then the value of $(\hat{\loc}_{n+1}, \hat{\val}_{n+1})$ is taken to be the same as that of
    $(\hat{\loc}_n, \hat{\val}_n)$.
\end{itemize}
\end{itemize}
Note that the mutual-exclusiveness and completeness of transitions ensure that the stochastic process $\{\hat{\pstate}_n\}_{n\ge 0}$ is well-defined.
\end{definition}

  \section{Proofs of Section \ref{sec:theory}}
\label{appendix:theory}

We first establish some properties of $\mathcal K^M$.

\begin{proposition}
\label{prop:funcspace}
	For every $1\le M\le \infty$, $(\mathcal K^{M}, \sle)$ is a complete lattice. Furthermore, the smallest (resp.~greatest) element $\bot^{M}$ (resp.~$\top^M$) is the function whose value is \textit{0} (resp.~$M$) at all states $(\loc,\val)\in \mathcal S$ such that $\loc \notin \{\loct,\locf\}$.
\end{proposition}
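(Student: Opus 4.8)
The plan is to identify $(\mathcal K^M,\sle)$ with a product of copies of the complete lattice $[0,M]$ and then invoke the standard fact that an arbitrary product of complete lattices is again a complete lattice. First I would check that $\sle$ is a partial order: reflexivity, antisymmetry, and transitivity are all inherited coordinatewise from the order $\le$ on $[0,M]$, where for $M=\infty$ we read $[0,M]$ as the extended interval $[0,\infty]:=[0,\infty)\cup\{\infty\}$, which is itself a complete lattice under $\le$.

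The key structural observation is that the two boundary constraints \emph{freeze} the values at the non-interior states: every $f\in\mathcal K^M$ satisfies $f(\loct,\val)=0$ and $f(\locf,\val)=1$ independently of $f$. Writing $\pstates_\circ:=\{(\loc,\val)\in\pstates \mid \loc\notin\{\loct,\locf\}\}$ for the set of interior reachable states, a function $f\in\mathcal K^M$ is therefore determined by, and may be freely prescribed on, its restriction to $\pstates_\circ$. This yields an order-isomorphism
\[
\mathcal K^M \;\cong\; \prod_{\pstate\in\pstates_\circ} [0,M],
\]
because for $f,f'\in\mathcal K^M$ the relation $f\sle f'$ holds iff $f(\pstate)\le f'(\pstate)$ at every $\pstate$, and the boundary coordinates always agree. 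Since each factor $[0,M]$ is a complete lattice and products of complete lattices are complete (with suprema and infima computed coordinatewise), I conclude that $(\mathcal K^M,\sle)$ is a complete lattice.

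Transporting the operations back through the isomorphism makes them concrete: for a family $\mathcal F\subseteq\mathcal K^M$, its suprema $\bigsqcup\mathcal F$ and infima $\bigsqcap\mathcal F$ agree with the frozen boundary values and are given on $\pstates_\circ$ by $\pstate\mapsto\sup_{f\in\mathcal F}f(\pstate)$ and $\pstate\mapsto\inf_{f\in\mathcal F}f(\pstate)$, respectively. In particular, the bottom and top of the product lattice are the all-zero and all-$M$ tuples, which correspond under the isomorphism to the least element $\bot^M$ and the greatest element $\top^M$; these take value $0$ and $M$ respectively at every $(\loc,\val)\in\pstates$ with $\loc\notin\{\loct,\locf\}$, as claimed. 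I would additionally sanity-check extremality directly: for any $f$ and any state $\pstate=(\loc,\val)$ we have $f(\pstate)=\bot^M(\pstate)$ when $\loc\in\{\loct,\locf\}$ and $f(\pstate)\ge 0=\bot^M(\pstate)$ otherwise, so $\bot^M\sle f$; the case of $\top^M$ is dual, using $f(\pstate)\le M$ on $\pstates_\circ$ together with $M\ge 1$ to guarantee $\top^M\in\mathcal K^M$.

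No deep obstacle is expected, as this is a routine product-lattice verification; the only points demanding care are (a)~keeping track of the frozen boundary values, which is precisely why one should form the product over the interior states $\pstates_\circ$ rather than over all of $\pstates$ --- this also cleanly resolves the empty-family edge case, since then $\bigsqcup\emptyset=\bot^M$ retains its forced value $1$ at $\locf$ instead of a naive pointwise $\sup\emptyset=0$ --- and (b)~handling $M=\infty$ uniformly by working in the extended lattice $[0,\infty]$, in which $\infty$ is an admissible value and suprema of unbounded coordinate sets exist.
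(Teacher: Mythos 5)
Your proof is correct and follows essentially the same route as the paper's: suprema and infima are computed pointwise, and $\bot^M$, $\top^M$ are exhibited directly as the functions frozen on $\loct,\locf$ and constantly $0$ (resp.\ $M$) elsewhere. The only difference is packaging --- the paper verifies the pointwise $\sup$/$\inf$ construction by hand for nonempty subsets and supplies the bottom and top separately, whereas you factor $\mathcal K^M$ as a product of copies of $[0,M]$ over the interior states, which makes the frozen boundary values and the empty-family edge case automatic; both arguments are equally valid.
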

\begin{proof}
    We show that every subset $\kappa \subseteq \mathcal K^M$ has an infimum and a supremum, thus $(\mathcal K^M, \sle)$ is a complete lattice.
	Fix any nonempty set $\kappa\subseteq \mathcal K^{M}$, define two functions $\kappa^{\sup}$ and $\kappa^{\inf}$:
	\begin{align*}
		\kappa^{\sup}(x)&:=\sup\left\{f(x)\ |\ f\in \kappa\right\}\\
		\kappa^{\inf}(x)&:=\inf\left\{f(x)\ |\ f\in \kappa\right\}
	\end{align*}
	We now verify that $\kappa^{\sup}$ and $\kappa^{\inf}$ are suprema and infima respectively, which directly follows from the definition of $\sup$ and $\inf$. In detail, by definition of $\sup$, $\forall f\in \kappa, \forall \pstate \in \pstates,  f(\pstate)\le \kappa^{\sup}(\pstate)$, thus $\kappa^{\sup}$ is an upper bound. Moreover, for any $g\in \mathcal K^{M}$. If $g$ is an upper bound of $S$, then for $\forall f\in \kappa	, \forall \pstate \in \pstates, f(\pstate)\le g(\pstate)$, then $\sup\{f(\pstate)\ |\ f\in S\}\le g(\pstate)$, then $\kappa^{\sup}(\pstate)\le g(\pstate)$. Hence $\kappa^{\sup}$ is the supremum. Similar for $\kappa^{\inf}$. Hence $(\mathcal K^M, \sle)$ is a complete lattice.

	By definition, $\top^{M}\in \mathcal K^{M}$, and for any function $f\in \mathcal K^{M}$, since $\forall \pstate\in \pstates, f(\pstate)\le M=\top^{M}(\pstate)$, hence $f\le \top^{M}$. Thus $\top^{M}$ is the greatest element. A similar argument handles the case of $\bot^{M}$.
\end{proof}

We now prove some propositions on the connection of $\pre$ and \modelname.

\begin{proposition}
\label{prop:pre}
For every $1\le M\le \infty$, $\pre^{M}: \mathcal K^{M}\to \mathcal K^{M}$ is a well-defined function. Furthermore, it is continuous for any $M$, and cocontinuous for finite $M$.
\end{proposition}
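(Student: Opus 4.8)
The plan is to verify the three claims in turn, exploiting the fact that the order on $\mathcal K^M$ is pointwise and invoking the Monotone and Dominated Convergence Theorems for the two (co)continuity parts. First I would check well-definedness, i.e.\ that $\pre^M(f)\in\mathcal K^M$ whenever $f\in\mathcal K^M$. The boundary clauses $\pre^M(f)(\loct,\val)=0$ and $\pre^M(f)(\locf,\val)=1$ hold by definition and match the two defining constraints of $\mathcal K^M$. For a state $(\loc,\val)\in\pstates$ with $\loc\notin\{\loct,\locf\}$, each successor $(\toloc_i,\transupdate_i(\val,\valrd))$ is again reachable (extend a path reaching $(\loc,\val)$ by the unique enabled transition and the fork $\fork_i$ under sample $\valrd$), so $f$ is defined there and, once we note $f$ is measurable so that the expectations exist, the value $\pre^M(f)(\loc,\val)=\sum_{i=1}^k p_i\,\mathbb E[f(\toloc_i,\transupdate_i(\val,\valrd))]$ is a convex combination of numbers in $[0,M]$ because $0\le f\le M$ and $\sum_{i=1}^k p_i=1$; hence it lies in $[0,M]$ and $\pre^M(f)\in\mathcal K^M$.

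Next I would record that $\pre^M$ is monotone: if $f\sle f'$ then pointwise each $\mathbb E[f(\cdot)]\le\mathbb E[f'(\cdot)]$ by monotonicity of expectation, and the nonnegative weights $p_i$ preserve the inequality, so $\pre^M(f)\sle\pre^M(f')$. Monotonicity alone yields one direction in both parts: for an increasing chain $\{f_n\}$, each $\pre^M(f_n)\sle\pre^M(\bigsqcup_n f_n)$, hence $\bigsqcup_n\pre^M(f_n)\sle\pre^M(\bigsqcup_n f_n)$; dually, for a decreasing chain, $\pre^M(\bigsqcap_n f_n)\sle\bigsqcap_n\pre^M(f_n)$. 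It therefore remains to establish the reverse inequalities.

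For continuity I would argue pointwise at each $(\loc,\val)$ with $\loc\notin\{\loct,\locf\}$ (the cases $\loc\in\{\loct,\locf\}$ being constant, hence trivial). Writing $f=\bigsqcup_n f_n$, the sequence $f_n(\toloc_i,\transupdate_i(\val,\valrd))$ increases pointwise in $\valrd$ to $f(\toloc_i,\transupdate_i(\val,\valrd))$, so the Monotone Convergence Theorem gives $\mathbb E[f(\toloc_i,\cdot)]=\sup_n\mathbb E[f_n(\toloc_i,\cdot)]$. Since the outer sum is finite and each summand is nondecreasing in $n$, the supremum commutes with $\sum_{i=1}^k p_i$, yielding $\pre^M(f)(\loc,\val)=\sup_n\pre^M(f_n)(\loc,\val)$. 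Thus $\pre^M(\bigsqcup_n f_n)=\bigsqcup_n\pre^M(f_n)$, and this holds for every $M\in[1,\infty]$, as MCT needs no integrability hypothesis.

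The remaining and hardest part is cocontinuity, and this is exactly where finiteness of $M$ is used. For a decreasing chain with $f=\bigsqcap_n f_n$, the functions $f_n(\toloc_i,\cdot)$ decrease to $f(\toloc_i,\cdot)$ and are all dominated by the constant $M$, which is integrable because $\valrd$ is distributed according to a probability measure, so $\mathbb E[M]=M<\infty$. Applying the Monotone Convergence Theorem to the increasing sequence $M-f_n(\toloc_i,\cdot)$ (equivalently, Dominated Convergence) gives $\mathbb E[f(\toloc_i,\cdot)]=\inf_n\mathbb E[f_n(\toloc_i,\cdot)]$; as before the finite sum commutes with the infimum, so $\pre^M(\bigsqcap_n f_n)=\bigsqcap_n\pre^M(f_n)$. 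I expect the delicate point to be precisely this exchange of infimum and expectation: without a finite integrable dominating function it genuinely fails, since mass can escape to infinity as $M\to\infty$, which is why cocontinuity is claimed only for finite $M$.
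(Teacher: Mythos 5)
Your proof is correct and follows essentially the same route as the paper: pointwise case analysis on the location, the convex-combination bound for well-definedness, monotonicity of expectation, and the Monotone Convergence Theorem to exchange suprema with expectations for continuity. The paper dismisses cocontinuity as ``a similar argument,'' whereas you spell out the dominated-convergence step that uses finiteness of $M$; this is a welcome clarification but not a different approach.
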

\begin{proof}
	Fix any $1\le M\le +\infty.$ We first prove that $\pre^{M}$ is well-defined. For every function $f\in \mathcal K^M$, we need to prove that for every $(\loc,\val)$, $\pre^M(f)(\loc,\val)\in [0,M]$. We do case analysis on $(\loc, \val)$:
	\begin{itemize}
		\item If $\loc = \locf$, then $\pre^M(f)(\loc,\val)=1\in [0,M]$.
		\item If $\loc = \loct$, then $\pre^M(f)(\loc,\val)=0\in [0,M]$.
        \item Otherwise, there is a unique transition $\trans=(\frmloc, \transcond, F_1,F_2,\\ \cdots, F_k)$ such that $\loc=\frmloc\land \val\models \transcond$, where the fork $F_j$ is $\langle \toloc_j, p_j, \transupdate_j\rangle$:
        \begin{align*}
			\pre^M(f)(\loc,\val) &= \sum_{j=1}^{k}{p_{j}\cdot {\textstyle \mathop{\mathbb E}_{\valrd}}[f(\toloc_{j},\transupdate_{j}(\val,\valrd))]} \\
			&\le \sum_{j=1}^{k}{p_{j}\cdot {\textstyle \mathop{\mathbb E}_{\valrd}}[M]} \\
			&= \sum_{j=1}^{k}{p_{j}\cdot M} \\
			&= M
		\end{align*}

	\end{itemize}

	Similarly, we can prove that $\pre^M(f)(\loc,\val)\ge 0.$ Thus, $\pre^M$ is well-defined. Now we prove that $\pre^M$ is monotone. Given any function $f,g$ such that $f\sle g$, by case analysis on $(\loc,\val)$:
	\begin{itemize}
		\item If $\loc = \locf$, then $$\pre^M(f)(\loc,\val)=1=\pre^M(g)(\loc,\val)$$
		\item If $\loc = \loct$, then $$\pre^M(f)(\loc,\val)=0=\pre^M(g)(\loc,\val)$$
		\item Otherwise, there is a unique transition $\trans=(\frmloc, \transcond, F_1,F_2,\\ \cdots, F_k)$ such that $\loc=\frmloc\land \val\models \transcond$, where the fork $F_j$ is $\langle \toloc_j, p_j, \transupdate_j\rangle$:
		\begin{align*}
			&\pre^M(f)(\loc,\val) \\
			&= \sum_{j=1}^{k}{p_{j}\cdot {\textstyle \mathop{\mathbb E}_{\valrd}}[f(\toloc_{j},\transupdate_{j}(\val,\valrd))]} \\
			&\le \sum_{j=1}^{k}{p_{i,j}\cdot {\textstyle \mathop{\mathbb E}_{\valrd}}[g(\toloc_{j},\transupdate_{j}(\val,\valrd))]} \\
			&=\pre^M(g)(\loc,\val)
		\end{align*}
	\end{itemize}

	Thus $\pre^M(f)\sle \pre^M(g)$, hence it is monotone.
	Next we prove upper continuity of $\pre^M.$ Choose any increasing chain $f_0\sle f_1\sle f_2\sle \cdots$ and do another case analysis on $(\loc,\val)$:
	\begin{itemize}
		\item If $\loc = \locf$, then $$\pre^M(\mathop{\bigsqcup}\limits_{n\ge 0}\left\{f_n\right\})(\loc,\val)=1=\mathop{\bigsqcup}\limits_{n\ge 0}\left\{\pre^M(f_n)\right\}(\loc,\val)$$
		\item If $\loc = \loct$, then $$\pre^M(\mathop{\bigsqcup}\limits_{n\ge 0}\{f_n\})(\loc,\val)=0=\mathop{\bigsqcup}\limits_{n\ge 0}\left\{\pre^M(f_n)\right\}(\loc,\val)$$
		\item Otherwise, there is a unique transition $\trans=(\frmloc, \transcond, F_1,F_2,\\ \cdots, F_k)$ such that $\loc=\frmloc\land \val\models \transcond$, where the fork $F_j$ is $\langle \toloc_j, p_j, \transupdate_j\rangle$:
		\begin{align*}
			&\pre^M(\mathop{\bigsqcup}\limits_{n\ge 0}\left\{f_n\right\})(\loc,\val) \\
			= &\sum_{j=1}^{k}{p_{j}\cdot {\textstyle \mathop{\mathbb E}_{\valrd}}\left[(\mathop{\bigsqcup}\limits_{n\ge 0}\left\{f_n\right\})(\toloc_{j},\transupdate_{j}(\val,\valrd))\right]} \\
			= &\sum_{j=1}^{k}{p_{j}\cdot {\textstyle \mathop{\mathbb E}_{\valrd}}\left[\sup_{n\ge 0}\left\{f_n(\toloc_{j},\transupdate_{j}(\val,\valrd))\right\}\right]} \\
		    = &\sum_{j=1}^{k}{p_{j}\cdot {\textstyle \mathop{\mathbb E}_{\valrd}}\left[\lim_{n\to \infty}\left\{f_n(\toloc_{j},\transupdate_{j}(\val,\valrd))\right\}\right]}\\
			\overset{\mathrm{MCT}}{=} &\sum_{j=1}^{k}{p_{j}\cdot \lim_{n\to \infty}{\textstyle \mathop{\mathbb E}_{\valrd}}\left[f_n(\toloc_{j},\transupdate_{j}(\val,\valrd))\right]}\\
			= &\lim_{n\to \infty}\sum_{j=1}^{k}{p_{j}\cdot{\textstyle \mathop{\mathbb E}_{\valrd}}\left[f_n(\toloc_{j},\transupdate_{j}(\val,\valrd))\right]}\\
			= &\lim_{n\to \infty}\pre^M(f_n)(\loc,\val)\\
			= &\sup_{n\ge 0}\left\{\pre^M(f_n)(\loc,\val)\right\}\\
			= &\mathop{\bigsqcup}\limits_{n\ge 0}\left\{\pre^M(f_n)\right\}(\loc,\val)
		\end{align*}
	\end{itemize}
	 The ``MCT'' above denotes the monotone convergence theorem. A similar argument establishes cocontinuity for finite $M$ and decreasing chains.
\end{proof}

\begin{proposition}
\label{prop:prepts}
Consider a \modelnames process $\hat{\sigma}_0,\hat{\sigma}_1,\hat{\sigma}_2,\cdots$. For every $n\ge 0, 1\le M\le +\infty$, and any function $f\in \mathcal K^M$, $\pre^{M}(f)(\hat{\sigma}_n)=\mathbb E[f(\hat{\sigma}_{n+1})\ |\ \hat{\sigma}_n]$
\end{proposition}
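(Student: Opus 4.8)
The plan is to establish the identity \emph{pointwise} in the value of $\hat{\sigma}_n$: since both $\pre^M(f)(\hat{\sigma}_n)$ and $\mathbb{E}[f(\hat{\sigma}_{n+1})\mid\hat{\sigma}_n]$ are $\hat{\sigma}_n$-measurable random variables, it suffices to fix an arbitrary reachable value $(\loc,\val)$ taken by $\hat{\sigma}_n$ and show that $\mathbb{E}[f(\hat{\sigma}_{n+1})\mid\hat{\sigma}_n=(\loc,\val)]=\pre^M(f)(\loc,\val)$. I would carry this out by a case analysis on the location $\loc$, driven entirely by the inductive definition of the PTS process (Definition~\ref{def:ptsproc}) together with the definitions of $\pre^M$ and $\mathcal K^M$.

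First I would dispatch the two boundary cases. If $\loc=\locf$, then by Definition~\ref{def:ptsproc} the process stutters, i.e.~$\hat{\sigma}_{n+1}=\hat{\sigma}_n$, so $f(\hat{\sigma}_{n+1})=f(\locf,\val)=1$ since $f\in\mathcal K^M$; hence the conditional expectation equals $1=\pre^M(f)(\locf,\val)$. The case $\loc=\loct$ is symmetric and yields $0=\pre^M(f)(\loct,\val)$, matching the definition of $\pre^M$ at $\loct$.

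The principal case is $\loc\notin\{\loct,\locf\}$. Here the mutual-exclusiveness and completeness assumptions on transitions guarantee a unique transition $\trans=\langle\loc,\transcond,\fork_1,\dots,\fork_k\rangle$ with $\val\models\transcond$, and by Definition~\ref{def:ptsproc} the successor is $(\toloc_j,\transupdate_j(\val,\hat{\mathbf{u}}_n))$ precisely when the fork $\fork_j$ is selected, which occurs with probability $p_j$. I would then decompose the conditional expectation over the fork choice and the sampling valuation, obtaining
\[
\mathbb{E}[f(\hat{\sigma}_{n+1})\mid\hat{\sigma}_n=(\loc,\val)]
= \sum_{j=1}^{k} p_j\cdot \mathbb{E}_{\valrd}\!\left[f(\toloc_j,\transupdate_j(\val,\valrd))\right]
= \pre^M(f)(\loc,\val),
\]
which is exactly the defining formula of $\pre^M$ on non-terminal locations.

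The step requiring care — and the main, though routine, obstacle — is justifying this decomposition measure-theoretically. By the independence built into Definition~\ref{def:ptsproc}, both the fork selection at step $n$ and the sample $\hat{\mathbf{u}}_n$ are independent of $\hat{\sigma}_n$ (and of each other), so conditioning on $\hat{\sigma}_n=(\loc,\val)$ leaves the fork distribution $(p_j)_j$ and the law of $\hat{\mathbf{u}}_n$ (namely $\rdvarjdis$) unchanged; replacing $\hat{\mathbf{u}}_n$ by a generic $\valrd\sim\rdvarjdis$ then produces the inner expectation $\mathbb{E}_{\valrd}[\cdot]$. Measurability of $f$, which holds since $f\in\mathcal K^M$, ensures every expectation is well-defined in $[0,M]$, so the computation is valid for finite $M$; for $M=\infty$ the same identity follows by non-negativity and monotone convergence. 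As $(\loc,\val)$ was an arbitrary value of $\hat{\sigma}_n$, the equality of the two random variables follows.
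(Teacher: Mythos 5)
Your proposal is correct and follows essentially the same route as the paper's proof: a case analysis on the location of $\hat{\sigma}_n$, with the boundary cases handled by the stuttering of the process at $\loct$ and $\locf$, and the main case obtained by expanding the conditional expectation over the fork choice and the sampled valuation to recover the defining formula of $\pre^M$. The additional measure-theoretic justification you supply (independence of the fork selection and of $\hat{\mathbf{u}}_n$ from $\hat{\sigma}_n$, and monotone convergence for $M=\infty$) is left implicit in the paper but does not change the argument.
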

\begin{proof}
By definition, if $\hat{\loc}_n=\loct$, then $LHS=0$, and $\hat{\loc}_{n+1}=\hat{\loc}_n=\loct.$ Hence $f(\hat{\sigma}_{n+1})=0$ and $RHS=0=LHS$. The case for $\locf$ is similar. Otherwise, suppose at $n$th step, we choose the transition $\trans=(\frmloc, \transcond, F_1,F_2, \cdots, F_k)$ such that $\loc=\frmloc\land \val\models \transcond$, where the fork $F_j$ is $\langle \toloc_j, p_j, \transupdate_j\rangle$:
\begin{align*}
RHS&=\mathbb E\left[\sum_{j=1}^{k}p_{j}f(\toloc_{j},\transupdate_{i,j}(\hat{\val}_n,\valrd))\right]\\
&=\sum_{j=1}^{k}p_{j}\mathbb E_{\valrd}\left[f(\toloc_{j},\transupdate_{j}(\hat{\val}_n,\valrd))\right]\\
&=LHS
\end{align*}
\end{proof}

If we consider iteratively applying $\pre$ for $n$ times we derive the corollary below, which is useful in applying Theorem \ref{thm:kleene}:
\begin{corollary} \label{cor:comp}
	For any integer $n$, and any $1\leq M\leq \infty$,
$$\pre^{M,n}(f)(\loc,\val) = \mathbb E\left[f(\hat{\sigma}_n)\ |\ \hat{\sigma}_0=(\loc,\val)\right]$$
	where $\pre^{M,n}$ denotes the application of $\pre^M$ to $f$ for $n$ times. For $n=0$, we define $\pre^{M,0}(f):=f$.
\end{corollary}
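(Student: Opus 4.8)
The plan is to proceed by induction on $n$, using Proposition~\ref{prop:prepts} as the one-step engine and the tower property of conditional expectation to stitch the steps together. The base case $n=0$ is immediate: by definition $\pre^{M,0}(f)=f$, so the left-hand side is $f(\loc,\val)$, while on the right $\hat{\sigma}_0=(\loc,\val)$ is a constant random variable under the conditioning, whence $\mathbb E[f(\hat{\sigma}_0)\mid \hat{\sigma}_0=(\loc,\val)]=f(\loc,\val)$ as well.

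For the inductive step, I would assume the claim holds at $n$ for \emph{every} function in $\mathcal K^M$, and write $\pre^{M,n+1}(f)=\pre^{M,n}(\pre^M(f))$. Since $\pre^M(f)\in \mathcal K^M$ by Proposition~\ref{prop:pre}, the induction hypothesis applies to the function $\pre^M(f)$ and yields
$$\pre^{M,n+1}(f)(\loc,\val)=\mathbb E\left[\pre^M(f)(\hat{\sigma}_n)\mid \hat{\sigma}_0=(\loc,\val)\right].$$
Next I would invoke Proposition~\ref{prop:prepts} to rewrite the integrand as $\pre^M(f)(\hat{\sigma}_n)=\mathbb E[f(\hat{\sigma}_{n+1})\mid \hat{\sigma}_n]$, reducing the right-hand side to the iterated conditional expectation $\mathbb E[\,\mathbb E[f(\hat{\sigma}_{n+1})\mid \hat{\sigma}_n]\mid \hat{\sigma}_0=(\loc,\val)\,]$.

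The crux is then to collapse this iterated expectation. Since the \modelnames process is Markov (Definition~\ref{def:ptsproc}), conditioning on the single state $\hat{\sigma}_n$ coincides with conditioning on the whole history $\mathcal F_n=\sigma(\hat{\sigma}_0,\dots,\hat{\sigma}_n)$, i.e.~$\mathbb E[f(\hat{\sigma}_{n+1})\mid \hat{\sigma}_n]=\mathbb E[f(\hat{\sigma}_{n+1})\mid \mathcal F_n]$. Because $\sigma(\hat{\sigma}_0)\subseteq \mathcal F_n$, the tower property then gives $\mathbb E[\,\mathbb E[f(\hat{\sigma}_{n+1})\mid \mathcal F_n]\mid \hat{\sigma}_0=(\loc,\val)\,]=\mathbb E[f(\hat{\sigma}_{n+1})\mid \hat{\sigma}_0=(\loc,\val)]$, which is exactly the claim at $n+1$. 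I expect this Markov-to-filtration substitution to be the only genuinely delicate point: everything else is bookkeeping, but this step is what legitimizes the tower-property collapse and hence relies on the process being truly memoryless, as guaranteed by the well-definedness established in Definition~\ref{def:ptsproc}. (An alternative routing, applying Proposition~\ref{prop:prepts} at $n=0$ to $\pre^{M,n}(f)$ and then the induction hypothesis at $\hat{\sigma}_1$, works equally well but additionally leans on time-homogeneity, so I would prefer the version above.)
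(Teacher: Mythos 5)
Your proof is correct and follows essentially the same route as the paper's: induction on $n$, applying the induction hypothesis to $\pre^M(f)$, invoking Proposition~\ref{prop:prepts} to rewrite the integrand, and collapsing the iterated conditional expectation via the tower property. The only difference is that you make explicit the Markov-property justification for replacing conditioning on $\hat{\sigma}_n$ by conditioning on the full history, a step the paper's proof asserts without comment.
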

\begin{proof}

We prove by induction.

\noindent{\textbf{Base case.}} For $n=0$, the result is obvious.

\noindent{\textbf{Induction case.}} Suppose the lemma holds fo $n=n_0$, we prove that it also holds for $n=n_0+1$.
\begin{align*}
LHS &= \mathbb E[\pre^M(f)(\hat{\sigma}_{n_0})\ |\ \hat{\sigma}_0=(\loc,\val)]\\
&=\mathbb E[\mathbb E[f(\hat{\sigma}_{n_0+1})\ |\ \hat{\sigma}_{n_0}]\ |\ \hat{\sigma}_0=(\loc,\val)]\\
&=\mathbb E[f(\hat{\sigma}_{n_0+1})\ |\ \hat{\sigma}_0=(\loc,\val)]
\end{align*}

The second equality is by Proposition \ref{prop:prepts}.
\end{proof}

\subsection{Proof of Theorem \ref{thm:epf}}
\begin{proof}

Fix any $1\le M\le \infty.$ by Proposition \ref{prop:pre}, $\pre$ is a continuous function. Now by Theorem \ref{thm:kleene}, we have:$$\lfp\ \pre^M=\mathop{\bigsqcup}\limits_{i\ge 0}\left\{\pre^{M,i}(\bot^M)\right\}.$$
Thus, for every $(\loc,\val)\in \pstates$: $$\lfp\ \pre^M(\loc,\val)=\mathop{\sup}\limits_{i\ge 0}\left\{\pre^{M,i}(\bot^M)(\loc,\val)\right\}.$$
We now apply Corollary \ref{cor:comp} to the $RHS$:
\begin{align*}
RHS&=\mathop{\sup}\limits_{i\ge 0}\left\{\mathbb E[\bot^M(\hat{\sigma}_i)\ |\ \hat{\sigma}_0=(\loc,\val)]\right\}\\
&=\mathop{\sup}\limits_{i\ge 0}\left\{\Pr[\hat{\loc}_i = \locf \ |\ \hat{\sigma}_0=(\loc,\val)]\right\}\\
&=\Pr[\exists i. \hat{\loc}_i = \locf \ |\ \hat{\sigma}_0=(\loc,\val)] = \epf(\loc,\val)\\
\end{align*}
\end{proof}

\subsection{Proof of Theorem \ref{thm:uniq}}
\begin{proof}

Fix any $1\le M< \infty.$ Since $M$ is finite, then by Proposition \ref{prop:pre}, $\pre^M$ is both continuous and cocontinuous. So, by Theorem \ref{thm:kleene},
\begin{align*}
\lfp\ \pre^M(\loc,\val)&=\mathop{\sup}\limits_{i\ge 0}\left\{\pre^{M,i}(\bot^M)(\loc,\val)\right\}\\
\gfp\ \pre^M(\loc,\val)&=\mathop{\inf}\limits_{i\ge 0}\left\{\pre^{M,i}(\top^M)(\loc,\val)\right\}
\end{align*}

Now by Proposition \ref{prop:funcspace} and Corollary \ref{cor:comp}, we can plug in the concrete form of $\bot$ and $\top$, and expand the right side:
\begin{align*}
\lfp\ \pre^M(\loc,\val)&=\sup_{n\ge 0}\{\Pr[\hat{\loc}_n=\locf]\}\\
\gfp\ \pre^M(\loc,\val)&=\inf_{n\ge 0}\{M\Pr[\hat{\loc}_n\neq \loct\land \hat{\loc}_n\neq \locf]+ \Pr[\hat{\loc}_n=\locf]\}
\end{align*}
Thus, for every $n$:
\begin{align*}
&\gfp\ \pre^M(\loc,\val) - \lfp\ \pre^M(\loc,\val)\\&\le \inf_{n\ge 0}\{M\cdot \Pr[\hat{\loc}_n\neq \loct\land \hat{\loc}_n\neq \locf]\}\\
&=0
\end{align*}
Thus, $\gfp\ \pre^M = \lfp\ \pre^M$, combined with Theorem \ref{thm:epf}, we derive this theorem.
\end{proof}

  \section{Proofs of Section \ref{sec:algupper}}
\label{appendix:algupper}
\subsection{Proof of Theorem \ref{thm:reprsm}}
Define $\usol(\loc,\val):=\exp(\frac{8\epsilon}{\dtrans^2}\rterm'(\loc,\val))$ and $\alpha:=\frac{8\epsilon}{\dtrans^2}$.
First, we prove that $\usol\in \mathcal K^{+\infty}$. By construction, for every $\val$, $\usol(\loct,\val)=0\land \usol(\locf,\val)=1.$ Thus, $\usol\in \mathcal K^{+\infty}$.
Now, by the definition of pre fixed-point, we need to prove $\pre^{+\infty}(\usol)\sle \usol$. We prove this by case analysis:

For $\loc=\loct$, $\pre^{+\infty}(\usol)(\loct,\val)=0\le \usol(\loct,\val)$.

For $\loc=\locf$, $\pre^{+\infty}(\usol)(\locf,\val)=1\le \usol(\locf,\val)$.

Otherwise, for every state $(\loc,\val)$ with $\loc \notin \{\loct,\locf\}$, suppose it would transit along $\trans_i$ with $k$ forks:
\begin{small}
\begin{align*}
&\frac{\pre^{+\infty}(\usol)(\loc,\val)}{\usol(\loc,\val)}\\
&=\frac{1}{\usol(\loc,\val)}\sum_{j=1}^{k}{p_{i,j}\mathop{\mathbb E}\limits_{\mathbf{\hat{u}}\sim \mathcal D}\left[\usol(\toloc_{i,j},F_{i,j}(\val,\mathbf{\hat{u}}))\right]}\\
&= \sum_{j=1}^{k}{p_{i,j}\mathop{\mathbb E}\limits_{\mathbf{\hat{u}}\sim \mathcal D}\left[\exp\left(\alpha\left(\eta'(\toloc_{i,j},F_{i,j}(\val,\mathbf{\hat{u}}))-\eta'(\frmloc_i,\val)\right)\right)\right]}\\
&\le \sum_{j=1}^{k}{p_{i,j}\mathop{\mathbb E}\limits_{\mathbf{\hat{u}}\sim \mathcal D}\left[\exp\left(\alpha\left(\eta(\toloc_{i,j},F_{i,j}(\val,\mathbf{\hat{u}}))-\eta(\frmloc_i,\val)\right)\right)\right]}
\end{align*}
\end{small}
We further upper bound the expectation by Hoeffding's Lemma:
\begin{small}
\begin{align*}
&\mathop{\mathbb E}\limits_{\mathbf{\hat{u}}\sim \mathcal D}\left[\exp\left(\alpha\left(\eta(\toloc_{i,j},F_{i,j}(\val,\mathbf{\hat{u}}))-\eta(\frmloc_i,\val)\right)\right)\right]\\
&\le \exp\left(\mathop{\mathbb E}\limits_{\mathbf{\hat{u}}\sim \mathcal D}\left[\alpha\left(\eta(\toloc_{i,j},F_{i,j}(\val,\mathbf{\hat{u}}))-\eta(\frmloc_i,\val)\right)\right]+\frac{(\alpha\cdot \dtrans)^2}{8}\right)\\
&\le \exp\left(\alpha\cdot (-\epsilon)+\frac{(\alpha\cdot \dtrans)^2}{8}\right)\\
&\le \exp\left(-\frac{8\epsilon^2}{\dtrans^2}+\frac{8\epsilon^2}{\dtrans^2}\right)=1
\end{align*}
\end{small}
Thus,
\begin{small}
\begin{align*}
\frac{\pre^{+\infty}(\usol)(\loc,\val)}{\usol(\loc,\val)}\le \sum_{j=1}^{k}{p_{i,j}}=1
\end{align*}
\begin{align*}
\pre^{+\infty}(\usol)(\loc,\val)\le \usol(\loc,\val)
\end{align*}
\end{small}

So, $\pre^{+\infty}(\usol)\sle \usol$, and we conclude that $\usol$ is a pre fixed-point.

\subsection{Details of Quadratic Programming}
\label{appendix:heuristicquadprog}
In this section, we introduce our algorithm for solving the quadratic programming problem in Section \ref{sec:heuristicalg}.

Before demonstrating our algorithm, we first establish some properties of this optimization problem.
Recall the original optimization problem:
$$\min\quad \quad \frac{8\epsilon}{\dtrans^2}\eta(\locin,\valin)$$
such that (C1)--(C4) are satisfied. Furthermore, we need to restrict $\epsilon\ge 0, \dtrans>0, \eta(\locin,\valin)\le 0$.

First, we simplify the problem. Since we can scale $\eta$, it suffices to find a solution with $\dtrans=1$.
Next, we introduce a fresh unknown coefficient $\omega$ and add the constraint $0\ge \omega\ge \mathbf{a}_{\locin}\cdot \valin + b_{\locin}$.
Finally, we modify the objective function to $8\cdot \epsilon\cdot \omega$.
Since we need to minimize the objective function, the original problem is equivalent to the simplified one.

After simplification, this optimization problem is an instance of quadratic programming, since the objective function is the multiplication of two unknown coefficients and all constraints are linear.
However, it is not an instance of convex programming because the objective function is not convex. This being said, we can still prove the uniqueness of local optimum.

For simplicity, in the rest of this section we use $\lambda$ to represent all unknown coefficients other than $\epsilon$ and $\omega$, including $\beta,\delta,\mathbf{a}'s,b's$ in the original problem.

\begin{proposition}
If the optimization problem has a solution with $\epsilon>0$ and $\omega<0$, then the local optimum is unique.
\end{proposition}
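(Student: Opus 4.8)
The plan is to reduce this non-convex quadratic program to the maximization of a \emph{strictly concave} function over a convex set, where uniqueness of the optimum (and the coincidence of every local optimum with the global one) is classical. First I would record that the feasible set is convex. After fixing $\dtrans=1$, the constraints (C1)--(C4) together with the added constraint $0\ge \omega \ge \mathbf{a}_{\locin}\cdot \valin + b_{\locin}$ are universally quantified affine constraints over valuations; applying Farkas' Lemma (Lemma~\ref{thm:farkas}) eliminates the quantifier and yields constraints that are jointly linear in the unknowns $(\epsilon,\omega,\lambda)$ and the introduced Farkas multipliers. Hence the feasible set $F$ is a polyhedron, and its intersection with the open region $\{\epsilon>0,\ \omega<0\}$ is convex. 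Crucially, the objective $8\cdot\epsilon\cdot\omega$ depends only on the coordinates $\epsilon,\omega$ and not on $\lambda$.

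Next I would apply the substitution $s:=\epsilon>0$ and $t:=-\omega>0$, so that minimizing $8\cdot\epsilon\cdot\omega$ is the same as maximizing $8\cdot s\cdot t$. The key observation is that $\log(s\cdot t)=\log s+\log t$ is strictly concave on the positive quadrant, i.e.~$s\cdot t$ is log-concave; since $\log$ is increasing, a point locally (resp.~globally) maximizes $s\cdot t$ iff it locally (resp.~globally) maximizes $\log s+\log t$. I would then invoke the standard fact that a strictly concave function over a convex set attains its maximum at a unique point and that every local maximum is global. Concretely, for two feasible points I would parametrize the connecting segment $\theta\mapsto(s(\theta),t(\theta))$, which stays in the positive quadrant by convexity, and use that $\psi(\theta):=\log s(\theta)+\log t(\theta)$ is concave: if $\theta=0$ is a local maximum then $\psi'(0)\le 0$, and concavity forces $\psi$ to be non-increasing, so the other endpoint is no better; strictness rules out two distinct maximizers in $(s,t)$.

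The main obstacle, and the point I would treat carefully, is that the objective is independent of $\lambda$, so a local optimum in the full variable space is not literally a single point: every feasible $(\epsilon^{*},\omega^{*},\lambda)$ shares the same value. I would therefore read the claim at the level of the optimal value, equivalently the optimal pair $(\epsilon^{*},\omega^{*})$, which the argument above shows is unique. To transfer local optimality through the $\lambda$-projection, I would take an arbitrary full-space local optimum $x^{*}$ and a global optimum $y^{*}$, join them by a segment lying in the convex set $F$, and run the concavity argument along the $(\epsilon,\omega)$-projection of this segment; this pins $\phi(x^{*})$ to the global value. Verifying convexity of $F$ after Farkas elimination and this projection step are the only non-routine parts, whereas the concavity computation itself is immediate.
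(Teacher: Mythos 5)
Your proof is correct, but it takes a genuinely different route from the paper's. The paper argues directly on the bilinear objective: given two local optima $(\epsilon_1,\omega_1,\lambda_1)$ and $(\epsilon_2,\omega_2,\lambda_2)$ with, w.l.o.g., $\epsilon_1\cdot\omega_1\ge\epsilon_2\cdot\omega_2$, it replaces $\omega_2$ by the unique $\omega'$ with $0\ge\omega'\ge\omega_2$ and $\epsilon_2\cdot\omega'=\epsilon_1\cdot\omega_1$, observes that $(\epsilon_2,\omega',\lambda_2)$ remains feasible, and expands the objective along the connecting segment to exhibit the excess term $(t^2-t)\cdot(\epsilon_1-\epsilon_2)\cdot(\omega_1-\omega')$, whose negativity for $t\in(0,1)$ contradicts local optimality. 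Your argument packages the same underlying phenomenon conceptually: on the open quadrant $\epsilon>0$, $-\omega>0$, the objective is a strictly increasing transform of the strictly concave function $\log\epsilon+\log(-\omega)$, so it is strictly quasiconcave when maximized as $\epsilon\cdot(-\omega)$ over the convex (polyhedral) feasible set, and the classical facts that every local optimum is then global and that the maximizing $(\epsilon,\omega)$ is unique finish the proof in one stroke. This buys a cleaner argument: the paper's displayed excess term is identically zero when $\epsilon_1=\epsilon_2$, so its computation by itself only forces $\epsilon_1=\epsilon_2$ and leaves $\omega_1=\omega_2$ implicit, whereas your strict-concavity-along-a-nonconstant-projected-segment argument handles both coordinates uniformly; you are also more explicit than the paper about the degeneracy in $\lambda$ and about reading uniqueness at the level of the pair $(\epsilon,\omega)$. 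The one point you leave implicit is excluding local optima on the boundary $\epsilon=0$ or $\omega=0$, where the logarithmic reparametrization is undefined; the paper dispatches this in a single sentence, and your segment argument does extend to it (the endpoint has $\log(\epsilon\cdot(-\omega))=-\infty$ while nearby feasible points on the segment toward the assumed interior solution have strictly negative objective), but it deserves an explicit sentence.
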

\begin{proof}

Since there is a solution with $\omega<0$ and $\epsilon>0$, there is no local optimum with $\omega=0$ or $\epsilon=0$.
After wiping out the case of $\omega=0$ or $\epsilon=0$, we prove this proposition by contradiction.
Suppose there are two different local optima $(\epsilon_1,\omega_1,\lambda_1)$ and $(\epsilon_2,\omega_2,\lambda_2)$, with $\epsilon_1,\epsilon_2>0\land \omega_1,\omega_2<0$. We prove $\epsilon_1=\epsilon_2\land \omega_1 =  \omega_2$.

Without loss of generality, we assume $\epsilon_1\cdot \omega_1 \ge \epsilon_2\cdot \omega_2$. Then there exists $0\ge \omega'\ge\omega_2$ such that $\epsilon_2\cdot \omega'=\epsilon_1\cdot \omega_1$.
By construction $(\epsilon_2,\omega',\lambda_2)$ is still a feasible solution.
Since every constraint is linear, for all $t\in [0,1]$, $(t\cdot \epsilon_1+(1-t)\cdot \epsilon_2,t\cdot \omega_1+(1-t)\cdot \omega',t\cdot \lambda_1+(1-t)\cdot \lambda_2)$ is a feasible solution, whose objective value is $\epsilon_1\cdot \omega_1+\left(2t^2-2t\right)\cdot \epsilon_1\cdot \omega_1 + t\cdot (1-t)\cdot (\epsilon_1\cdot \omega'+\epsilon_2\cdot \omega_1)$, which is strictly less than $\epsilon_1\cdot \omega_1$ for all $t\in (0,1)$. This is derived by the following calculation:
\begin{small}
\begin{align*}
&\left(2t^2-2t\right)\cdot \epsilon_1\cdot \omega_1 + t\cdot (1-t)\cdot (\epsilon_1\cdot \omega'+\epsilon_2\cdot \omega_1)\\
&=(t^2-t)\cdot (\epsilon_1-\epsilon_2)\cdot (\omega_1-\omega') < 0
\end{align*}
\end{small}
 Hence, $(\epsilon_1,\omega_1,\lambda_1)$ is not a local optimum, causing a contradiction.
\end{proof}

Suppose the unique local optimum is $(\epsilon^*,\omega^*,\lambda^*)$. By uniqueness, it is also a global optimum.
Now we can take a different perspective of this optimization problem. We regard this optimization problem as finding the minimum value of a function of $\epsilon$. Fix $\epsilon$ and define $(\epsilon,\omega^{\rm opt}(\epsilon), \lambda^{\rm opt}(\epsilon))$ as the choice that minimizes the objective function under $\epsilon$. If the optimization problem is infeasible under $\epsilon$, we define $\omega^{\rm opt}(\epsilon):=\infty$. We further define $f(\epsilon):=8\cdot \epsilon\cdot \omega^{\rm opt}(\epsilon)$ as the optimal value of objective function under fixed $\epsilon$.

Since all constraints become stricter as $\epsilon$ increases, $\omega^{\rm opt}(\epsilon)$ is a monotonically increasing function of $\epsilon.$ We define $\epsilon_{\rm max}:=\inf\{\epsilon\ |\ \omega^{\rm opt}(\epsilon)<\infty\}$. We further show that $f(\epsilon)$ first strictly decreases and then strictly increases. Formally:
\begin{proposition}
\label{prop:f}
In $[0,\epsilon^*]$, $f(\epsilon)$ is a strictly decreasing function, and it is a strictly increasing function in $[\epsilon^*,\epsilon_{\rm max}]$.
\end{proposition}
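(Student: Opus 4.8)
The plan is to show that $f$ is a convex function of $\epsilon$ on $[0,\epsilon_{\rm max}]$ with a unique minimizer at $\epsilon^*$, and then invoke the elementary fact that such a function is strictly decreasing to the left of its minimizer and strictly increasing to its right.

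First I would pin down the shape of $g(\epsilon):=\omega^{\rm opt}(\epsilon)$. After linearization, the feasible region in the variables $(\epsilon,\omega,\lambda)$ is a polyhedron, and $g(\epsilon)=\min\{\omega\mid(\epsilon,\omega,\lambda)\text{ feasible}\}$ is obtained by partially minimizing the linear objective $\omega$ over this polyhedron with $\epsilon$ held as a parameter that enters the constraints affinely. The value function of such a parametric linear program is convex and piecewise-linear in $\epsilon$ on its domain $[0,\epsilon_{\rm max}]$, and it is monotonically increasing, as already noted (the constraints only tighten as $\epsilon$ grows). Being convex, piecewise-linear, and increasing, $g$ admits a representation $g(\epsilon)=\max_{i}(a_i\epsilon+b_i)$ over finitely many affine functions, all with slopes $a_i\ge 0$.

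Next I would deduce convexity of $f$. Because $\epsilon\ge 0$, multiplication by $\epsilon$ preserves the maximum, so $f(\epsilon)=8\epsilon\,g(\epsilon)=\max_i 8(a_i\epsilon^2+b_i\epsilon)$. Each term $8(a_i\epsilon^2+b_i\epsilon)$ is convex in $\epsilon$ since $a_i\ge 0$, and a pointwise maximum of convex functions is convex; hence $f$ is convex on $[0,\epsilon_{\rm max}]$. I would then identify the minimizer: we have $f(0)=0$ whereas $f(\epsilon^*)=8\epsilon^*\omega^*<0$ (the optimum is negative because a feasible solution with $\epsilon>0,\omega<0$ exists). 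If $f$ attained its minimum at some $\epsilon'\neq\epsilon^*$, then necessarily $\epsilon'>0$, and pairing $\epsilon'$ with the corresponding minimizing $(\omega,\lambda)$ would yield a second global optimum whose $\epsilon$-coordinate is $\epsilon'\neq\epsilon^*$, contradicting the uniqueness of the optimum established in the preceding proposition. Thus $\epsilon^*$ is the unique minimizer of $f$.

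Finally, I would close with the standard argument that a convex function with a unique minimizer $\epsilon^*$ is strictly decreasing on $[0,\epsilon^*]$ and strictly increasing on $[\epsilon^*,\epsilon_{\rm max}]$: for $\epsilon_1<\epsilon_2\le\epsilon^*$, writing $\epsilon_2$ as a convex combination of $\epsilon_1$ and $\epsilon^*$ and using convexity together with $f(\epsilon^*)=\min f$ forces $f(\epsilon_1)>f(\epsilon_2)$, since otherwise $\epsilon_2$ would be a second minimizer; the increasing part is symmetric. I expect the convexity of $f$ to be the main obstacle: because $f$ is a product whose only non-smooth factor $g$ is merely piecewise-linear, a naive second-derivative argument ($f''=8(2g'+\epsilon g'')\ge 0$) is delicate at breakpoints, and the cleanest rigorous route is the max-of-affine representation of $g$ with nonnegative slopes, which turns $f$ into a maximum of convex parabolas and sidesteps all smoothness concerns.
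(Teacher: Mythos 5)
Your proof is correct, but it takes a genuinely different route from the paper's. The paper argues directly on pairs of points: given $\epsilon_1<\epsilon_2\le\epsilon^*$, it constructs an auxiliary feasible point $(\epsilon^*,\omega',\lambda^*)$ whose objective value is artificially matched to $f(\epsilon_1)$, linearly interpolates between the two feasible points (legitimate because the constraints are jointly linear), and computes that the bilinear objective $8\epsilon\omega$ along that segment dips strictly below the common endpoint value; choosing the interpolation parameter $t_0=\frac{\epsilon^*-\epsilon_2}{\epsilon^*-\epsilon_1}$ lands on $\epsilon_2$ and yields $f(\epsilon_2)<f(\epsilon_1)$ (the increasing half is symmetric). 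This is essentially the same computation that underlies the uniqueness proposition, reused. You instead prove the strictly stronger structural fact that $f$ is \emph{convex}: you invoke the standard sensitivity result that the value function $\omega^{\rm opt}(\epsilon)$ of a parametric LP with right-hand-side perturbation is convex and piecewise-linear, combine it with the monotonicity the paper already established to get a max-of-affine representation with nonnegative slopes, and hence write $f$ as a pointwise maximum of convex parabolas; the uniqueness proposition then pins $\epsilon^*$ down as the unique minimizer, and strict monotonicity on either side is the standard consequence. What each approach buys: the paper's argument is self-contained and uses nothing beyond linearity of the feasible set, but its explicit interpolation computation needs the slightly fussy verification that the quadratic correction term is strictly negative (which degenerates when $\omega^{\rm opt}(\epsilon_1)=0$); your argument imports one external fact about parametric LPs but is cleaner at the edge cases, and the convexity of $f$ it establishes is of independent value—it gives a more robust justification for the ternary search in Step~2 of the algorithm than mere unimodality does. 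Your instinct to avoid the naive second-derivative computation at breakpoints via the max-of-affine representation is exactly right; note also that the representation with nonnegative slopes should be taken over the actual linear pieces of $\omega^{\rm opt}$ (rather than an arbitrary dual-vertex representation, which could contain dominated pieces of negative slope), but this is immediate from convexity plus monotonicity.
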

\begin{proof}
We only prove the case for $\epsilon\in [0,\epsilon^*]$, the other case is similar.
Arbitrarily pick $0\le \epsilon_1<\epsilon_2\le \epsilon^*.$ We need to prove that $f(\epsilon_1)>f(\epsilon_2)$. If $\epsilon_2=\epsilon^*$, by uniqueness of optimum, $f(\epsilon_1)<f(\epsilon^*)=f(\epsilon_2)$. If $\epsilon_2<\epsilon^*$, since $f(\epsilon_1)<f(\epsilon^*)$, we can take $0\ge \omega'>\omega^*$ such that $f(\epsilon_1)=\omega'\cdot \epsilon^*$, and by construction $(\epsilon^*, \omega', \lambda^*)$ is also a feasible solutions. By linearity of constraints, for all $t\in [0,1]$, $(t\cdot \epsilon_1+(1-t)\cdot \epsilon^*,t\cdot \omega^{\rm opt}(\epsilon_1)+(1-t)\cdot \omega',t\cdot \lambda^{\rm opt}(\epsilon_1)+(1-t)\cdot \lambda^*)$ is a feasible solution, whose objective value is $f(\epsilon_1)+\left(2t^2-2t\right)\cdot f(\epsilon_1) + t\cdot (1-t)\cdot (\epsilon_1\cdot \omega'+\epsilon^*\cdot \omega^{\rm opt}(\epsilon_1))$, which is strictly smaller than $f(\epsilon_1).$ Plugging in $t_0=\frac{\epsilon^*-\epsilon_2}{\epsilon^*-\epsilon_1}\in (0,1)$, we get that $(\epsilon_2,t_0\cdot \omega^{\rm opt}{\epsilon_1}+(1-t_0)\cdot \omega',t_0\cdot \lambda^{\rm opt}(\epsilon_1)+(1-t_0)\cdot \lambda^*)$ is a feasible solution. Thus, $f(\epsilon_1)> \epsilon_2\cdot (t_0\cdot \omega^{\rm opt}{\epsilon_1}+(1-t_0)\cdot \omega') \ge \epsilon_2\cdot \omega^{\rm opt}(\epsilon_2)=f(\epsilon_2)$.
\end{proof}

Now we describe our algorithm \textsf{Ser}. \textsf{Ser} takes an optimization problem in the form of Section \ref{sec:heuristicalg} and outputs a solution to the optimization problem.

\smallskip
\noindent{\textbf{Step 1. Feasibility checking. }} In this step, we first check whether $\epsilon_{\rm max}=\infty$. This can be achieved through linear programming, where the objective is to maximize $\epsilon$ and the constraints are the same. If $\epsilon_{\rm max}=\infty$, we can simply output $0$ and terminate the algorithm. Then, we check whether there exists a solution with $\epsilon>0$ and $\omega<0$, which is equivalent to $\epsilon_{\rm max}>0$ and $\omega^{\rm opt}(0)<0$. This is also achievable by linear programming. If there is no such solution, we simply output $1$ and terminate the algorithm.

\smallskip
\noindent{\textbf{Step 2. Search for $\epsilon^*$. }} If the algorithm does not terminate after Step 1, we know that the global optimum is unique. Then, we iteratively search $\epsilon^*$, the iteration procedure starts with $l=0, r=\epsilon_{\rm max}$, every time in iteration, our algorithm calculates $m_1=\frac13(2l+r),m_2=\frac13(l+2r)$, if $f(m_1)<f(m_2)$, then it sets $r\leftarrow m_2$ else it sets $l\leftarrow m_1$ and then repeats the iteration again. Note that we can solve $f(\epsilon),\omega^{\rm opt}(\epsilon)$ and $\lambda^{\rm opt}(\epsilon)$ by linear programming, since when $\epsilon$ is fixed, both the objective function and the constraints are linear. The iteration stops when $r-l<\mu$, where $\mu$ is a given error bound. Finally, the algorithm outputs $l$, $\omega^{\rm opt}(l)$ and $\lambda^{\rm opt}(l)$ as the final solution.

Our algorithm can efficiently approach the global optimum with arbitrarily small error. Formally, we have the following theorem:
\begin{theorem}
Given the optimization problem in the form of Section \ref{sec:heuristicalg}, and any error bound $\mu>0,$ \textsf{Ser}  outputs a solution $(\epsilon_0, \omega_0, \lambda_0)$ such that $|\epsilon_0-\epsilon^*|\le \mu$ in $O(\log \frac{\epsilon_{\rm max}}{\mu})$ within iterations of applying linear programming.
\end{theorem}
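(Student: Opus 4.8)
The plan is to recognize Step~2 of \textsf{Ser} as a ternary search over the unimodal function $f$ and then carry out the two standard ingredients of such an analysis: a correctness invariant and a geometric-shrinkage complexity count. Everything rests on Proposition~\ref{prop:f}, which guarantees that $f$ is strictly decreasing on $[0,\epsilon^*]$ and strictly increasing on $[\epsilon^*,\epsilon_{\rm max}]$. First I would note that Step~1 ensures we only reach Step~2 when the optimum is unique, so $\epsilon^*$ is well-defined and Proposition~\ref{prop:f} is in force.

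The heart of the argument is the loop invariant $\epsilon^*\in[l,r]$, which I would establish by induction on the iterations. The base case is immediate since $[l,r]=[0,\epsilon_{\rm max}]$ initially. For the inductive step, fix an iteration with $l<r$, so that $l<m_1<m_2<r\le\epsilon_{\rm max}$ and both $f(m_1),f(m_2)$ are finite. In the branch $f(m_1)<f(m_2)$ I would argue $\epsilon^*\le m_2$ by contradiction: if instead $m_1<m_2<\epsilon^*$, both points lie in the strictly decreasing region and force $f(m_1)>f(m_2)$; thus setting $r\leftarrow m_2$ preserves the invariant. Symmetrically, in the branch $f(m_1)\ge f(m_2)$ I would argue $\epsilon^*\ge m_1$, since $\epsilon^*<m_1<m_2$ would place both points in the strictly increasing region and force $f(m_1)<f(m_2)$; thus $l\leftarrow m_1$ is safe.

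For the complexity I would observe that each branch replaces $[l,r]$ by an interval of length $\tfrac23(r-l)$, so after $k$ iterations the width is $(\tfrac23)^k\,\epsilon_{\rm max}$; the stopping test $r-l<\mu$ is first met at $k=\lceil\log_{3/2}(\epsilon_{\rm max}/\mu)\rceil=O(\log(\epsilon_{\rm max}/\mu))$. Since each iteration evaluates $f$ at exactly two points, and each evaluation is one linear program (the objective and constraints being linear once $\epsilon$ is fixed, as already noted in the construction of $\omega^{\rm opt}$ and $\lambda^{\rm opt}$), the total cost is $O(\log(\epsilon_{\rm max}/\mu))$ LP calls. At termination the invariant gives $\epsilon^*\in[l,r]$ with $r-l<\mu$, and since the algorithm outputs $\epsilon_0=l$ we get $|\epsilon_0-\epsilon^*|=\epsilon^*-l\le r-l<\mu$, as required.

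I expect the loop invariant to be the only genuinely delicate point, and there the only thing to be careful about is invoking strict monotonicity on the correct side of $\epsilon^*$ and handling the tie $f(m_1)=f(m_2)$: this falls into the else branch, and it is harmless because $\epsilon^*\ge m_1$ still holds (indeed $m_1<\epsilon^*<m_2$ in that case). The geometric shrinkage and the LP count are routine bookkeeping once unimodality is in hand.
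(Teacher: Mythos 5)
Your proposal is correct and follows essentially the same route as the paper's proof: the loop invariant $\epsilon^*\in[l,r]$ established by induction via the strict unimodality of $f$ from Proposition~\ref{prop:f}, followed by the geometric $\tfrac23$-shrinkage count and the observation that each $f$-evaluation is a single LP once $\epsilon$ is fixed. Your explicit treatment of the tie $f(m_1)=f(m_2)$ is a small improvement in care over the paper, which only argues the two strict cases, but it does not change the argument.
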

\begin{proof}
We first prove that in any iteration, $\epsilon^*\in [l,r]$. Hence, when the iteration stops, we derive that $|l-\epsilon^*|\le |l-r|\le\mu$. We prove this by induction.

\smallskip
\noindent{\textbf{Base case.}} At the start of iteration, $l=0\land r=\epsilon_{\rm max}.$ It is obvious that $\epsilon^*\in [l,r]$.

\smallskip
\noindent{\textbf{Inductive step.}} In each iteration, suppose $\epsilon^*\in [l,r].$ If $f(m_1)<f(m_2),$ then we claim that $\epsilon^*\notin [m_2,r]$. Otherwise $m_1<m_2<\epsilon^*<r$. By Proposition \ref{prop:f}, we must have $f(m_1)>f(m_2)$, causing a contradiction. Hence, $\epsilon^*\in [l,m_2]$ and the induction succeeds. Similarly, if $f(m_1)>f(m_2)$, we can also prove that $\epsilon^*\in [m_1,r]$.

Suppose we iterate $q$ times, then $r-l$ after $q$ iterations would be $\left(\frac23\right)^q\epsilon_{\rm max}$. Since the iteration stops when $r-l<\mu$, we conclude that $q=O(\log \frac{\epsilon_{\rm max}}{\mu})$. Since in each iteration we solve $O(1)$ linear programming instances, we derive the theorem.
\end{proof}

  \subsection{Proof of Proposition \ref{thm:cdt}}
Consider the canonical constraint $\cancond(\guard,\precond^\paracond)$, where $\guard=(\mathbf{M}\val\le \mathbf{d})$. Then, $\guard=Q+C$, where $Q$ is a polytope and $C=\{\val\ |\ \mathbf{M}\val \le \mathbf{0}\}$. Suppose $\val_1^*,\val_2^*\cdots, \val_c^*$ are generators of $Q$.

For the if part, pick any $\val\models \guard.$ By Theorem \ref{thm:pdt}, $\val=\val_1+\val_2$, where $\mathbf{M}\val_1\le \mathbf{0}$ and $\val_2\in Q$. We plug in $\val$ with $\val_1+\val_2$ into the $LHS$ of $\precond^\paracond(\val)$:
\begin{small}
\begin{align*}
LHS&=\sum_{j=1}^{k}{p_{j} \exp\left(\alpha_{j}\cdot (\val_1+\val_2) + \beta_{j}\right) \mathop{\mathbb E}\limits_{\mathrm{\mathbf{\hat{u}}}}\left[\exp\left(\gamma_j\cdot \mathbf{\hat{u}}\right)\right]}
\\&\le \sum_{j=1}^{k}{p_{j} \exp\left(\alpha_{j}\cdot \val_2 + \beta_{j}\right) \mathop{\mathbb E}\limits_{\mathrm{\mathbf{\hat{u}}}}\left[\exp\left(\gamma_j\cdot \mathbf{\hat{u}}\right)\right]}
\end{align*}
\end{small}

The first $\le$ is derived by (D1). Since $\val_2\in Q$, $\val_2$ can be represented as a convex combination over generators: $\val_2=\sum_{1\le i\le c}{\lambda_i\cdot \val_i^*}$, where $\lambda_i\ge 0$ and $\sum_{1\le i\le c}\lambda_i=1$.
By Jensen's inequality (Theorem \ref{thm:jensen}), we can further upper bound:
\begin{small}
\begin{align*}
LHS&\le \sum_{i=1}^{c}\lambda_i\sum_{j=1}^{k}{p_{j} \exp\left(\alpha_{j}\cdot \val_i^* + \beta_{j}\right) \mathop{\mathbb E}\limits_{\mathrm{\mathbf{\hat{u}}}}\left[\exp\left(\gamma_j\cdot \mathbf{\hat{u}}\right)\right]}\\
&\le \max_{1\le i\le c}\left\{\sum_{j=1}^{k}{p_{j} \exp\left(\alpha_{j}\cdot \val_i^* + \beta_{j}\right) \mathop{\mathbb E}\limits_{\mathrm{\mathbf{\hat{u}}}}\left[\exp\left(\gamma_j\cdot \mathbf{\hat{u}}\right)\right]}\right\}
\end{align*}
\end{small}
By (D2), $LHS\le 1$, thus $\precond^\paracond(\val)$ is true.

For the only if part, if $\cancond(\guard,\precond^\paracond)$ is satisfied,
(D2) is true since $Q\subseteq \guard$. We prove (D1) by contradiction. Suppose there exists $\val\in C$ and $j^*$ such that $\alpha_{j^*}\cdot \val>0.$ Choose some element $\val_0\in Q$ and consider the $LHS$ of $\precond^\paracond(t\val+\val_0)$ for $t\ge 0$ (Note that since $\val\in C$ and $\val_0\in Q$, thus $t\val \in C$ and $t\val + \val_0\models \guard$):
\begin{small}
\begin{align*}
LHS&=\sum_{j=1}^{k}{p_{j} \exp\left(\alpha_{j}\cdot (t\val+\val_0) + \beta_{j}\right) \mathop{\mathbb E}\limits_{\mathrm{\mathbf{\hat{u}}}}\left[\exp\left(\gamma_j\cdot \mathbf{\hat{u}}\right)\right]}
\\&\ge \exp\left(\alpha_{j^*}\cdot (t\val+\val_0) + \beta_{j^*}\right) \mathop{\mathbb E}\limits_{\mathrm{\mathbf{\hat{u}}}}\left[\exp\left(\gamma_{j^*}\cdot \mathbf{\hat{u}}\right)\right]\\
&=\exp\left(\alpha_{j^*}\cdot \val_0 + t(\alpha_{j^*}\cdot \val) + \beta_{j^*}\right) \mathop{\mathbb E}\limits_{\mathrm{\mathbf{\hat{u}}}}\left[\exp\left(\gamma_{j^*}\cdot \mathbf{\hat{u}}\right)\right]
\end{align*}
\end{small}

Since $\alpha_{j^*}\cdot \val>0$, by taking $t\to \infty$, the exponent would go to infinity, thus $LHS\to \infty$, and there exists $t_0$ such that $LHS>1$. Therefor, $\precond^\paracond(t\val+\val_0)$ is violated, deriving a contradiction.

\subsection{Proof of Theorem \ref{lem:opt}}
First, the objective function is convex.
We consider every constraint collected after Step 3. It is either: (1) a linear constraint, or (2) an instantiation $\precond^\paracond(\val^*)$ for some canonical constraint $\cancond(\guard,\precond^\paracond)$ and some $\val^*\in \guard$. For (2), we expand $\precond^\paracond(\val^*)$:
$$\sum_{j=1}^{k}{p_{j}\cdot \exp\left(\alpha_{j}\cdot \val^* + \beta_{j}\right)\cdot {\textstyle \mathop{\mathbb E}_{\valrd}}\left[\exp\left(\gamma_j\cdot \valrd\right)\right]}\le 1$$
By definition, to prove that $\Theta$ is a convex optimization problem,
it suffices to prove $LHS$ is a convex function over unknown coefficients.
First, since $\alpha_j,\beta_j$ are affine,
by convexity of $\exp(\cdot)$, $\exp\left(\alpha_{j}\cdot \val^* + \beta_{j}\right)$ is convex.
Then, we prove the expectation term ${\textstyle \mathop{\mathbb E}_{\valrd}}\left[\exp\left(\gamma_j\cdot \valrd\right)\right]$ is convex.
Since the composition of a convex function and a affine function is convex, and $\gamma_j$ is a affine function over unknown coefficients,
it suffices to prove that is ${\textstyle \mathop{\mathbb E}_{\valrd}}\left[\exp\left(\gamma\cdot \valrd\right)\right]$ convex over $\gamma$. We prove this by definition. Choose any pair $(\gamma,\gamma')$, and any $0\le c\le 1$:
\begin{small}
\begin{align*}
&{\textstyle \mathop{\mathbb E}_{\valrd}}\left[\exp\left((c\cdot \gamma+ (1-c)\cdot \gamma')\cdot \valrd\right)\right]
\\=\  &{\textstyle \mathop{\mathbb E}_{\valrd}}\left[\exp\left(c\cdot \gamma\cdot \valrd+ (1-c)\cdot \gamma'\cdot \valrd\right)\right]
\\\le\ & {\textstyle \mathop{\mathbb E}_{\valrd}}\left[c\cdot \exp\left(\gamma\cdot \valrd\right)+ (1-c)\cdot \exp\left( \gamma'\cdot \valrd\right)\right]
\\=\ &c\cdot {\textstyle \mathop{\mathbb E}_{\valrd}}\left[\exp\left(\gamma\cdot \valrd\right)\right]+(1-c)\cdot {\textstyle \mathop{\mathbb E}_{\valrd}}\left[\exp\left( \gamma'\cdot \valrd\right)\right]
\end{align*}
\end{small}
Since the product of two positive convex function is convex,
$\exp\left(\alpha_{j}\cdot \val^* + \beta_{j}\right)\cdot {\textstyle \mathop{\mathbb E}_{\valrd}}\left[\exp\left(\gamma_j\cdot \valrd\right)\right]$ is a convex function over unknown coefficients.
Finally, since $p_j\ge 0$, the whole $LHS$ is a non-negative combination of convex functions.
Therefore, the $LHS$ is a convex function over unknown coefficients.
Hence, $\Theta$ is a convex optimization problem.

  \section{Proofs of Section \ref{sec:alglower}}
\label{appendix:alglower}
\subsection{Proof of Theorem \ref{thm:soundlower}}
We first prove the strengthening in Step 4 is sound.
\begin{lemma}[Soundness of Strengthening]
\label{lem:relax}
If $\relaxpostcond$ is satisfied, then so is $\postcond^\paracond$.
\end{lemma}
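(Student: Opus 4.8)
The plan is to read $\relaxpostcond$ off from $\postcond^\paracond$ through two successive applications of Jensen's inequality (Theorem~\ref{thm:jensen}): one for the inner sampling expectation over $\valrd$, and one for the outer probabilistic choice among the $k$ forks. Crucially, both applications point in the direction of a \emph{lower} bound, which is exactly what is needed since $\postcond^\paracond$ is a ``$\ge 1$'' constraint. Throughout I would assume $Q:=\sum_{j=1}^{k} p_j>0$; otherwise $\postcond^\paracond$ reads $0\ge 1$, is unsatisfiable, and the implication holds vacuously.

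First I would dispatch the inner expectation. Since $\exp(\cdot)$ is convex, Theorem~\ref{thm:jensen} applied to the random variable $\gamma_j\cdot\valrd$ gives $\mathbb E_{\valrd}[\exp(\gamma_j\cdot\valrd)]\ge \exp(\gamma_j\cdot\mathbb E_{\valrd}[\valrd])$, where $\gamma_j$ is constant with respect to the sampling. Substituting this lower bound into each summand on the left-hand side of $\postcond^\paracond$ and merging exponents yields
\begin{align*}
\sum_{j=1}^{k} p_j\cdot\exp(\alpha_j\cdot\val+\beta_j)\cdot\mathbb E_{\valrd}[\exp(\gamma_j\cdot\valrd)]\ \ge\ \sum_{j=1}^{k} p_j\cdot\exp(t_j),
\end{align*}
where I abbreviate $t_j:=\alpha_j\cdot\val+\beta_j+\gamma_j\cdot\mathbb E_{\valrd}[\valrd]$.

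Next I would apply Jensen a second time, now to the discrete probability vector $\{p_j/Q\}_{j=1}^{k}$, which is well-defined since $Q>0$ and $\sum_{j}(p_j/Q)=1$. Convexity of $\exp$ gives $\sum_{j}(p_j/Q)\exp(t_j)\ge \exp\!\big(\sum_{j}(p_j/Q)\,t_j\big)$, that is
\begin{align*}
\sum_{j=1}^{k} p_j\cdot\exp(t_j)\ \ge\ Q\cdot\exp\!\Big(Q^{-1}\textstyle\sum_{j=1}^{k} p_j\, t_j\Big).
\end{align*}
Finally, the hypothesis $\relaxpostcond$ is precisely $Q^{-1}\sum_{j} p_j\,t_j\ge -\ln Q$, so monotonicity of $\exp$ gives $Q\cdot\exp(Q^{-1}\sum_{j} p_j t_j)\ge Q\cdot\exp(-\ln Q)=1$. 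Chaining the three displayed inequalities shows the left-hand side of $\postcond^\paracond$ is $\ge 1$, which is exactly $\postcond^\paracond$.

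The step I expect to be most delicate is the bookkeeping of the two Jensen directions: both must produce lower bounds, and it is worth double-checking that normalizing by $Q$ (rather than by $1$) is what makes the $-\ln Q$ term appear and then cancel cleanly against the $\exp(-\ln Q)$ factor. This is the one place where $Q=\sum_j p_j$ being possibly different from $1$ genuinely matters — which happens because forks into $\loct$ carry $\lsol(\loct,\cdot)\equiv 0$ and hence contribute a vanishing term that drops out of the canonical sum, as in Example~\ref{ex:lowerstep3}.
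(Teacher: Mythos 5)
Your proof is correct and follows essentially the same route as the paper's: one application of Jensen's inequality to the inner expectation over $\valrd$, a second to the normalized weights $p_j/Q$, and then the hypothesis $\relaxpostcond$ together with $Q\cdot\exp(-\ln Q)=1$. The only addition is your explicit handling of the degenerate case $Q=0$, which the paper leaves implicit.
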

\begin{proof}
By Theorem \ref{thm:jensen}, set the constant $Q:=\sum_{j=1}^k p_{j}.$ We have:
\begin{small}
\begin{align*}
&\sum_{j=1}^{k}{p_{j} \exp\left(\alpha_{j}\cdot \val + \beta_{j}\right) \mathop{\mathbb E}\limits_{\mathrm{\mathbf{\hat{u}}}}\left[\exp\left(\gamma_j\cdot \mathbf{\hat{u}}\right)\right]}\\
&\ge \sum_{j=1}^{k}{p_{j} \exp\left(\alpha_{j}\cdot \val + \beta_{j} + \gamma_j\cdot \mathop{\mathbb E}\limits_{\mathrm{\mathbf{\hat{u}}}}[\mathbf{\hat{u}}]\right) }\\
&= Q\cdot Q^{-1}\sum_{j=1}^{k}{p_{j} \exp\left(\alpha_{j}\cdot \val + \beta_{j} + \gamma_j\cdot \mathop{\mathbb E}\limits_{\mathrm{\mathbf{\hat{u}}}}[\mathbf{\hat{u}}]\right) }\\
&\ge Q\cdot \exp\left(Q^{-1}\sum_{j=1}^{k}{p_{j} (\alpha_{j}\cdot \val + \beta_{j} + \gamma_j\cdot \mathop{\mathbb E}\limits_{\mathrm{\mathbf{\hat{u}}}}[\mathbf{\hat{u}}])}\right) \\
&\ge Q\cdot \exp(-\ln Q) = 1
\end{align*}
\end{small}
\end{proof}

We are now ready to prove the theorem.
\begin{proof}[Proof of Theorem \ref{thm:soundlower}]
The constraints in Step 2 ensure boundness of $\lsol$.
By Lemma \ref{lem:relax}, we derive that if there exists a solution under strengthened constraints, then there exists a bounded post fixed-point $\lsol$.
The theorem follows.
\end{proof}

  \section{Evaluatoin Details}
\label{appendix:evaluation}
\subsection{Benchmarks}

Our benchmarks are presented below. Please also note that there are 3 example benchmarks in Section 3.

\begin{figure}[H]
  \lstset{language=prog}
  \lstset{tabsize=3}
  \begin{lstlisting}[mathescape]
  $i$:=$0$; $x$:=$0$;
  while($x\leq 99$)do
    switch do
      prob($0.5$):$\langle i,x\rangle$:=$\langle i+1,x+1\rangle$
      prob($0.5$):$\langle i,x\rangle$:=$\langle i+1,x\rangle$
    od;
      assert($x\le 200$)
   od
  \end{lstlisting}
  \caption{\textsc{RdAdder}: Randomized accumulation}
    \label{benchmark:rdadder}
\end{figure}
\begin{figure}[H]
  \lstset{language=prog}
  \lstset{tabsize=3}
  \begin{lstlisting}[mathescape]
  $i$:=$0$; $x$:=$0$; $ex$:=$0$; $cmd$:=$0$;
  while($i\leq 500$)do
    switch do
      prob($0.1$):$cmd:=1$ // SW
      prob($0.1$):$cmd:=2$ // SE
      prob($0.1$):$cmd:=3$ // W
      prob($0.1$):$cmd:=4$ // E
            $\cdots$
      prob($0.2$):$cmd:=9$ // STAY
    od;
        if ($cmd$ == $1$) then // SW Action
            switch do  //Add noise
                prob($0.5$):
                    $i:=i+1$
                    $x:=x-1.414-0.05$
                    $ex:=ex-1.414$
                prob($0.5$):
                    $i:=i+1$
                    $x:=x-1.414+0.05$
                    $ex:=ex-1.414$
            od
        else
        $\cdots$
        if ($cmd$ == $9$) then
            $\cdots$
        fi

   od
   assert($x-ex\ge -3$)
  \end{lstlisting}
  \caption{\textsc{Robot}: Deadrock robot}
    \label{benchmark:robot}
\end{figure}
\begin{figure}[H]
  \lstset{language=prog}
  \lstset{tabsize=3}
  \begin{lstlisting}[mathescape]
  $x$:=$0$;
  while($x\ge 0$)do
      assert($x\le 1000$)
    switch do
      prob($0.5$):$\langle x\rangle$:=$\langle x-2\rangle$
      prob($0.5$):$\langle x\rangle$:=$\langle x+1\rangle$
    od;
   od
  \end{lstlisting}
  \caption{\textsc{1DWalk}: 1D random walk with assertions}
    \label{benchmark:1dassert}
\end{figure}
\begin{figure}[H]
  \lstset{language=prog}
  \lstset{tabsize=3}
  \begin{lstlisting}[mathescape]
  $x$:=$0$; $y$:=$0$;
  while($y\ge 1$)do
      if prob($0.5$) then
    switch do
      prob($0.75$):$x$:=$x+1$
      prob($0.25$):$x$:=$x-1$
    od;
      else
    switch do
      prob($0.75$):$y$:=$y-1$
      prob($0.25$):$y$:=$y+1$
    od;
      fi
      assert($x\ge 1$)
   od
  \end{lstlisting}
  \caption{\textsc{2DWalk}: 2D random walk with assertions}
    \label{benchmark:2dassert}
\end{figure}
\begin{figure}[H]
  \lstset{language=prog}
  \lstset{tabsize=3}
  \begin{lstlisting}[mathescape]
  $x$:=$0$; $y$:=$0$;
  while($x\ge 0\land y\ge 0\land z\ge 0$)do
      assert($x+y+z\le 1000$)
      if prob($0.9$) then
        switch do
        prob($0.5$):$\langle x,y \rangle$:=$\langle x-1,y-1 \rangle$
        prob($0.5$):$\langle z \rangle$:=$\langle z-1 \rangle$
      od;
      else
      switch do
        prob($0.5$):$\langle x,y \rangle$:=$\langle x+0.1,y+0.1 \rangle$
        prob($0.5$):$\langle z \rangle$:=$\langle z+0.1 \rangle$
      od;
      fi
   od
  \end{lstlisting}
  \caption{\textsc{3DWalk}: 3D random walk with assertions}
    \label{benchmark:3dassert}
\end{figure}

\begin{figure}[H]
  \lstset{language=prog}
  \lstset{tabsize=3}
  \begin{lstlisting}[mathescape]
  $i$:=$0$; $t$:=$0$;
  while($i\le 5$)do
      if $i=0$ then
        $i:=i+1$
        $t:=t+1$
      else
      if $i=1$ then
        if prob($0.8$) then
          $i:=i+1$
          $t:=t+1$
        else
          $t:=t+1$
        fi
      else
      $\cdots$
      if $i=4$ then
        if prob($0.2$) then
          $i:=i+1$
          $t:=t+1$
        else
          $t:=t+1$
        fi
      fi
      assert($t\le 100$)
   od
  \end{lstlisting}
  \caption{\textsc{Coupon}: Concentration of running time of coupon collector with $5$ items}
    \label{benchmark:coupon}
\end{figure}

\begin{figure}[H]
  \lstset{language=prog}
  \lstset{tabsize=3}
  \begin{lstlisting}[mathescape]
  $x$:=$0$; $y$:=$0$; $t$:=$0$;
  while($x+3\le 50$)do
      if($y\le 49$) then
        if prob(0.5) then
          $\langle y,t\rangle$:=$\langle y+1,t+1\rangle$
        else
          $\langle y,t\rangle$:=$\langle y,t+1\rangle$
        fi
      else
        switch do
          prob($0.25$): $\langle x,t\rangle$:=$\langle x,t+1\rangle$
          prob($0.25$): $\langle x,t\rangle$:=$\langle x+1,t+1\rangle$
          prob($0.25$): $\langle x,t\rangle$:=$\langle x+2,t+1\rangle$
          prob($0.25$): $\langle x,t\rangle$:=$\langle x+3,t+1\rangle$
        fi
      fi
      assert($t\le 100$)
   od
  \end{lstlisting}
  \caption{\textsc{Prspeed}: Concentration of running time of random walk with randomized speed}
    \label{benchmark:prspeed}
\end{figure}

\begin{figure}[H]
  \lstset{language=prog}
  \lstset{tabsize=3}
  \begin{lstlisting}[mathescape]
  $i$:=$0$; $p$:=$10^{-7}$
    while ($i\le 40$) do
      if prob($(1-p)^5$) then
        skip //ABSTRACTED
      else
        exit
      fi
      if prob($0.9999$) then
        skip //ABSTRACTED
      else
        exit
      fi
      if prob($0.9999$) then
        skip //ABSTRACTED
      else
        exit
      fi
      if prob($(1-p)^3$) then
        skip //ABSTRACTED
      else
        exit
      fi
      if prob($(1-p)^6$) then
        skip //ABSTRACTED
      else
        exit
      fi
      $i$:=$i+1$
    od

  \end{lstlisting}
  \caption{\textsc{Newton}: Executing Newton's iteration algorithm on unreliable hardware (Abstracted version)}
    \label{benchmark:hardwarenewton}
\end{figure}

\begin{figure}[H]
  \lstset{language=prog}
  \lstset{tabsize=3}
  \begin{lstlisting}[mathescape]
  $i$:=$0$; $p$:=$10^{-7}$
    while ($i\le 19$) do
      $j$:=$0$
      while ($j\le 15$) do
        $k$:=$0$
        while ($k\le 15$) do
          if prob($(1-p)^3$) then
            skip    //ABSTRACTED
          else
            exit
          fi
          $k$:=$k+1$
        od
        $j$:=$j+1$
      od
      if prob($1-p$) then
        skip //ABSTRACTED
      else
        exit
      fi
      $i$:=$i+1$
    od

  \end{lstlisting}
  \caption{\textsc{Ref}: Executing Searchref algorithm on unreliable hardware (Abstracted version)}
    \label{benchmark:hardwareref}
\end{figure}

\subsection{Detailed result}
The detailed result is listed in Table \ref{table:expsym1} Table \ref{table:expsym2} Table \ref{table:expsym3}, where we report the symbolic bound for every benchmark.

  \renewcommand{\arraystretch}{1.2}
\begin{table*}
	\vspace{-.5em}
	\begin{footnotesize}
		\begin{tabular}{|c|c|c||c|}
			\hline
			\multicolumn{2}{|c|}{\textbf{Benchmark}} & \textbf{Parameters} & \textbf{Algorithm of Section~\ref{sec:heuristicalg}}  \\
			\hline
			\hline
			\multirow{6}{*}{\begin{turn}{90}\textsc{Deviation}\end{turn}}
			& \multirow{3}{*}{\textsc{RdAdder}}
			& {$\Pr[X-\mathbb E[X]\ge 25]$} & $\exp(8\cdot0.05\cdot(-1.0\cdot x+ 0.45 \cdot i -25.25))$    \\
			\cline{3-4}
			& & {$\Pr[X-\mathbb E[X]\ge 50]$} & $\exp(8\cdot0.02\cdot(-1.0\cdot x + 0.47 \cdot i -12.75))$   \\
			\cline{3-4}
			& & {$\Pr[X-\mathbb E[X]\ge 75]$} &$\exp(8\cdot0.07\cdot(-1.0\cdot x + 0.42 \cdot i -37.75))$    \\
			\cline{2-4}
			& \multirow{3}{*}{\textsc{Robot}}
			& {$\Pr[X-\mathbb E[X]\ge 1.8]$} &  $\exp(8\cdot0.07\cdot(-0.14\cdot i -10 \cdot x + 10 \cdot ex + 0\cdot  dxc-9))$   \\
			\cline{3-4}
			& & {$\Pr[X-\mathbb E[X]\ge 2.0]$} &  $\exp(8\cdot0.08\cdot(-0.16\cdot i -10 \cdot x + 10 \cdot ex + 0\cdot  dxc-10))$    \\
			\cline{3-4}
			& & {$\Pr[X-\mathbb E[X]\ge 2.2]$} &  $\exp(8\cdot0.09\cdot(-0.18\cdot i -10 \cdot x + 10 \cdot ex + 0\cdot  dxc-11))$   \\
			\hline

			\multirow{9}{*}{\begin{turn}{90}\textsc{Concentration}\end{turn}}
			& \multirow{3}{*}{\textsc{Coupon}}
			& $\Pr[T>100]$ & $\exp(8\cdot0.03(-1\cdot i + 0.12 \cdot t -7.60))$   \\
			\cline{3-4}
			& & $\Pr[T>300]$ &  $\exp(8\cdot0.04\cdot(-1\cdot i + 0.10 \cdot t -27.60))$    \\
			\cline{3-4}
			& & $\Pr[T>500]$ &  $\exp(8\cdot0.04\cdot(-1\cdot i + 0.10 \cdot t -47.57))$    \\
			\cline{2-4}
			& \multirow{3}{*}{\textsc{Prspeed}}
			& $\Pr[T>150]$  &$\exp(8\cdot0.06\cdot(-0.33\cdot x -1 \cdot y + 0.29 \cdot t -32.75))$     \\
			\cline{3-4}
			& & $\Pr[T>200]$ & $\exp(8\cdot0.07\cdot(-0.33\cdot x -1 \cdot y + 0.28 \cdot t -45.24))$  \\
			\cline{3-4}
			& & $\Pr[T>250]$ & $\exp(8\cdot0.06\cdot(-0.33\cdot x -1 \cdot y + 0.31 \cdot t -20.24))$   \\
			\cline{2-4}
			& \multirow{3}{*}{\textsc{Rdwalk}}
			& $\Pr[T>400]$ & $\exp(8\cdot0.03\cdot(-0.5\cdot x + 0.17\cdot y -37.62))$  \\
			\cline{3-4}
			& & $\Pr[T>500]$ &  $\exp(8\cdot0.03\cdot(-0.5\cdot x + 0.18\cdot y -25.12))$ \\
			\cline{3-4}
			& & $\Pr[T>600]$ &  $\exp(8\cdot0.04\cdot(-0.5\cdot x + 0.16\cdot y -40.12))$ \\
			\hline

			\multirow{12}{*}{\begin{turn}{90}\textsc{StoInv}\end{turn}}
			& \multirow{3}{*}{\textsc{1DWalk}}
			& $x=10$ &$\exp(8\cdot0.05 \cdot(0.33 \cdot x -333.55))$    \\
			\cline{3-4}
			& & $x=50$ &  $\exp(8\cdot 0.05 \cdot(0.33 \cdot x -333.55))$   \\
			\cline{3-4}
			& & $x=100$ & $\exp(8\cdot 0.05 \cdot(0.33 \cdot x -333.55))$   \\
			\cline{2-4}
			& \multirow{3}{*}{\textsc{2DWalk}}
			& $(x,y)=(1000,10)$ & $\exp(8\cdot0.04\cdot(-0.5\cdot x + 0 \cdot y +7.99\cdot 10^{-8}))$   \\
			\cline{3-4}
			& & $(x,y)=(500,40)$  & $\exp(8\cdot 0.04 \cdot(-0.5\cdot x + 0 \cdot y +1.8\cdot 10^{-7}))$   \\
			\cline{3-4}
			& & $(x,y)=(400,50)$ &$\exp(8\cdot 0.04 \cdot(-0.5\cdot x + 0 \cdot y +1.8\cdot 10^{-7}))$  \\
			\cline{2-4}
			& \multirow{3}{*}{\textsc{3DWalk}}
			& $\!(x,\!y,\!z)\!=\!(100,\!100,\!100)\!$ & $\exp(8\cdot0.19\cdot(0.58\cdot x + 0.58 \cdot y + 0.58\cdot z -487.80))$   \\
			\cline{3-4}
			& & $\!(x,\!y,\!z)\!=\!(100,\!150,\!200)\!$ & $\exp(8\cdot 0.19\cdot(0.58\cdot x + 0.58 \cdot y + 0.58\cdot z -487.90))$   \\
			\cline{3-4}
			& & $\!(x,\!y,\!z)\!=\!(300,\!100,\!150)\!$ &  $\exp(8\cdot 0.19\cdot(0.58\cdot x + 0.58 \cdot y + 0.58\cdot z -487.80))$   \\
			\cline{2-4}
			& \multirow{3}{*}{\textsc{Race}} & $(x,y)=(40,0)$ &  $\exp(8\cdot0.08\cdot(-0.67\cdot x + 0.5\cdot y + 16.58)$    \\
			\cline{3-4}
			& & $(x,y)=(35,0)$ &  $\exp(8\cdot0.07\cdot(-0.63\cdot x + 0.5\cdot y + 13.34)$     \\
			\cline{3-4}
			& & $(x,y)=(45,0)$ &  $\exp(8\cdot0.10\cdot(-0.70\cdot x + 0.5\cdot y + 20.41)$     \\
			\hline

		\end{tabular}
	\end{footnotesize}
	\caption{Symbolic Results for Upper-bound Benchmarks of Algorithm of Section~\ref{sec:heuristicalg} .}
	\label{table:expsym1}
\end{table*}
\begin{table*}
	\vspace{-.5em}
	\begin{footnotesize}
		\begin{tabular}{|c|c|c||c|}
			\hline
			\multicolumn{2}{|c|}{\textbf{Benchmark}} & \textbf{Parameters} & \textbf{Algorithm of Section~\ref{sec:soundcompalg}} \\
			\hline
			\hline
			\multirow{6}{*}{\begin{turn}{90}\textsc{Deviation}\end{turn}}
			& \multirow{3}{*}{\textsc{RdAdder}}
			& {$\Pr[X-\mathbb E[X]\ge 25]$}   & $\exp(-0.20\cdot x + 0.09 \cdot i -2.6)$    \\
			\cline{3-4}
			& & {$\Pr[X-\mathbb E[X]\ge 50]$}   & $\exp(-0.40 \cdot x + 0.18 \cdot i-10.25)$  \\
			\cline{3-4}
			& & {$\Pr[X-\mathbb E[X]\ge 75]$}   & $\exp(-0.62 \cdot x + 0.26 \cdot i -23.11)$     \\
			\cline{2-4}
			& \multirow{3}{*}{\textsc{Robot}}
			& {$\Pr[X-\mathbb E[X]\ge 1.8]$}  &  $\exp(-0.22\cdot i -13.85 \cdot x + 13.85 \cdot ex + 0\cdot  dxc-11.55)$     \\
			\cline{3-4}
			& & {$\Pr[X-\mathbb E[X]\ge 2.0]$}  & $\exp(-0.29\cdot i -16.09 \cdot x + 16.09 \cdot ex + 0\cdot  dxc-14.55)$    \\
			\cline{3-4}
			& & {$\Pr[X-\mathbb E[X]\ge 2.2]$}  &  $\exp(-0.38\cdot i -18.70 \cdot x + 18.70 \cdot ex + 0\cdot  dxc-18.00)$   \\
			\hline

			\multirow{9}{*}{\begin{turn}{90}\textsc{Concentration}\end{turn}}
			& \multirow{3}{*}{\textsc{Coupon}}
			& $\Pr[T>100]$ &  $\exp(-1.56\cdot i + 0.17 \cdot t -9.56)$  \\
			\cline{3-4}
			& & $\Pr[T>300]$  & $\exp(-2.69\cdot i + 0.20 \cdot t -48.65)$     \\
			\cline{3-4}
			& & $\Pr[T>500]$  & $\exp(-3.21\cdot i + 0.21 \cdot t -90.71)$     \\
			\cline{2-4}
			& \multirow{3}{*}{\textsc{Prspeed}}
			& $\Pr[T>150]$   &  $\exp(-0.51\cdot x -2.45 \cdot y + 0.61 \cdot t -63.39)$     \\
			\cline{3-4}
			& & $\Pr[T>200]$  & $\exp(-0.53\cdot x -2.70 \cdot y +0.62 \cdot t -92.96)$   \\
			\cline{3-4}
			& & $\Pr[T>250]$  & $\exp(-0.47\cdot x -2.12 \cdot y + 0.58 \cdot t -35.54)$    \\
			\cline{2-4}
			& \multirow{3}{*}{\textsc{Rdwalk}}
			& $\Pr[T>400]$  &  $\exp(-0.34\cdot x + 0.12\cdot y -27.18)$   \\
			\cline{3-4}
			& & $\Pr[T>500]$ &  $\exp(-0.29\cdot x + 0.11\cdot y -15.35)$   \\
			\cline{3-4}
			& & $\Pr[T>600]$  &  $\exp(-0.38\cdot x + 0.12\cdot y -39.87)$   \\
			\hline

			\multirow{12}{*}{\begin{turn}{90}\textsc{StoInv}\end{turn}}
			& \multirow{3}{*}{\textsc{1DWalk}}
			& $x=10$   &  $\exp(0.48 \cdot x -481.69)$   \\
			\cline{3-4}
			& & $x=50$  &  $\exp(0.48 \cdot x -481.69)$    \\
			\cline{3-4}
			& & $x=100$  &  $\exp(0.48\cdot -481.69)$   \\
			\cline{2-4}
			& \multirow{3}{*}{\textsc{2DWalk}}
			& $(x,y)=(1000,10)$  & $\exp(-1.31\cdot x + 0.54 \cdot y -3.02\cdot 10^{-9})$    \\
			\cline{3-4}
			& & $(x,y)=(500,40)$   &  $\exp(-1.31 \cdot x + 0.48 \cdot y -1.46\cdot 10^{-9})$   \\
			\cline{3-4}
			& & $(x,y)=(400,50)$  &  $\exp(-1.31 \cdot x + 0.44 \cdot y -2.44\cdot 10^{-9})$ \\
			\cline{2-4}
			& \multirow{3}{*}{\textsc{3DWalk}}
			& $\!(x,\!y,\!z)\!=\!(100,\!100,\!100)\!$  & $\exp(9.22\cdot x + 9.22 \cdot y + 9.22\cdot z -9.22\cdot 10^{3})$     \\
			\cline{3-4}
			& & $\!(x,\!y,\!z)\!=\!(100,\!150,\!200)\!$  &  $\exp(9.22\cdot x + 9.22 \cdot y + 9.22 \cdot z -9.22\cdot 10^{3})$    \\
			\cline{3-4}
			& & $\!(x,\!y,\!z)\!=\!(300,\!100,\!150)\!$  & $\exp(9.22\cdot x + 9.22\cdot y + 9.22\cdot z -9.22\cdot 10^{3})$    \\
			\cline{2-4}
			& \multirow{3}{*}{\textsc{Race}} & $(x,y)=(40,0)$   &  $\exp(-1.18\cdot x + 0.85\cdot y + 31.79)$  \\
			\cline{3-4}
			& & $(x,y)=(35,0)$   &  $\exp(-0.82 \cdot x+ 0.63\cdot y + 18.19)$    \\
			\cline{3-4}
			& & $(x,y)=(45,0)$   &  $\exp(-0.82\cdot x + 0.63\cdot y + 18.19)$   \\
			\hline

		\end{tabular}
	\end{footnotesize}
	\caption{Symbolic Results for Upper-bound Benchmarks of Algorithm of Section~\ref{sec:soundcompalg}.}
	\label{table:expsym2}
\end{table*}
\begin{table*}
	\vspace{-1.7em}
	\begin{footnotesize}
		\begin{tabular}{|c|c|c||c|}
			\hline
			\multicolumn{2}{|c|}{{\textbf{Benchmark}}} & \multicolumn{1}{c||}{{\textbf{Parameters}}} & {\textbf{Algorithm of Section~\ref{sec:alglower}}}                         \\ \hline \hline
			\multirow{9}{*}{\begin{turn}{90}\textsc{Hardware}\end{turn}}
			& \multirow{3}{*}{\textsc{M1DWalk}} & $p=10^{-7}$  & $\exp(2\cdot 10^{-7} \cdot x -2\cdot 10^{-4})$ \\
			\cline{3-4}
			& & $p=10^{-5}$  & $\exp(2\cdot 10^{-4} \cdot x -0.002)$ \\
			\cline{3-4}
			& & $p=10^{-4}$  & $\exp(2\cdot 10^{-4} \cdot x -0.02)$ \\
			\cline{2-4}
			& \multirow{3}{*}{\textsc{Newton}} & $p=5\cdot 10^{-4}$  & $\exp(7.7\cdot 10^{-3} \cdot i -0.31)$  \\
			\cline{3-4}
			& & $p=10^{-3}$ & $\exp(1.52\cdot 10^{-2} \cdot i -0.62)$  \\
			\cline{3-4}
			& & $p=1.5\cdot 10^{-3}$ & $\exp(2.27\cdot 10^{-2}\cdot i -0.93)$ \\
			\cline{2-4}
			& \multirow{3}{*}{\textsc{Ref}}
			& $p=10^{-7}$ & $\exp(7.69\cdot 10^{-4} \cdot i + 0 \cdot j + 0 \cdot k -0.015)$  \\
			\cline{3-4}
			& & $p=10^{-6}$ & $\exp(7.7\cdot 10^{-3} \cdot i + 0 \cdot j + 0 \cdot k -0.15)$  \\
			\cline{3-4}
			& & $p=10^{-5}$ & $\exp(7.7\cdot 10^{-2} \cdot i + 0 \cdot j + 0 \cdot k -1.53)$   \\
			\hline
		\end{tabular}
	\end{footnotesize}
	\caption{Symbolic Results for Lower-bound Benchmarks of Algorithm of Section~\ref{sec:alglower}.}
	\vspace{-1.5em}
	\label{table:expsym3}
\end{table*}
\renewcommand{\arraystretch}{1}

\end{document}